\renewcommand{\paragraph}{\roman{paragraph}}
\renewcommand\title[1]{\gdef\@title{\reset@font\Large\bfseries #1}}
\renewcommand\section{\@startsection {section}{1}{\z@}%
                                   {-3.5ex \@plus -1ex \@minus -.2ex}%
                                   {2.3ex \@plus.2ex}%
                                   {\normalfont\large\bfseries}}
\renewcommand\subsection{\@startsection{subsection}{2}{\z@}%
                                     {-3ex\@plus -1ex \@minus -.2ex}%
                                     {1.5ex \@plus .2ex}%
                                     {\normalfont\normalsize\bfseries}}
\renewcommand\subsubsection{\@startsection{subsubsection}{3}{\z@}%
                                     {-2.5ex\@plus -1ex \@minus -.2ex}%
                                     {1.5ex \@plus .2ex}%
                                     {\normalfont\normalsize\bfseries}}
\def\@runningauthor{}\newcommand{\runningauthor}[1]{\def\runningauthor{#1}}
\def\@runningtitle{}\newcommand{\runningtitle}[1]{\def\runningtitle{#1}}
\renewcommand{\ps@plain}{%
\renewcommand{\@evenhead}{\footnotesize\scshape \hfill\runningauthor\hfill}
\renewcommand{\@oddhead}{\footnotesize\scshape \hfill\runningtitle\hfill}}
\newcommand{\rmnum}[1]{\romannumeral #1}
\newcommand{\Rmnum}[1]{\expandafter\@slowromancap\romannumeral #1@}
\newcommand{\F}{\mathbb{F}}
\newcommand {\C}{{\mathcal{C}}}
\newcommand{\GRS}{{\mathrm{GRS}}}
\newcommand{\Hull}{{\mathrm{Hull}}}
\g@addto@macro\bfseries{\boldmath}
\theoremstyle{plain}
\newtheorem{theorem}{Theorem}[section]
\newtheorem{lemma}[theorem]{Lemma}
\newtheorem{corollary}[theorem]{Corollary}
\theoremstyle{definition}
\newtheorem{example}[theorem]{Example}
\theoremstyle{remark}
\newtheorem{remark}[theorem]{Remark}
\runningauthor{}
\date{}
\begin{document}
\begin{sloppypar}

\title{Several classes of Galois self-orthogonal MDS codes and related applications \thanks{The work of Yang Li, Yunfei Su and Shixin Zhu was supported by the National Natural Science Foundation of China under Grant
       Nos.U21A20428, 12171134, 61972126 and 62002093. The work of Shitao Li and Minjia Shi was supported by the National Natural Science Foundation of China under Grant No.12071001.}
\author{ Yang Li, Yunfei Su, Shixin Zhu\thanks{Corresponding author}, Shitao Li, Minjia Shi}
\thanks{ Yang Li, Yunfei Su and Shixin Zhu are with School of Mathematics, Hefei University of Technology, Hefei, China (email: yanglimath@163.com, suyunfei202208@163.com, zhushixinmath@hfut.edu.cn).
         Shitao Li and Minjia Shi are with School of Mathematical Sciences, Anhui University, Hefei, China (email: lishitao0216@163.com, smjwcl.good@163.com).}}

\date{}
    \maketitle

\begin{abstract}
  Let $q=p^h$ be a prime power and $e$ be an integer with $0\leq e\leq h-1$.
  $e$-Galois self-orthogonal codes are generalizations of Euclidean self-orthogonal codes ($e=0$) and Hermitian self-orthogonal codes ($e=\frac{h}{2}$ and $h$ is even).
  In this paper, we propose two general methods to construct $e$-Galois self-orthogonal (extended) generalized Reed-Solomon (GRS) codes.
  As a consequence, eight new classes of $e$-Galois self-orthogonal (extended) GRS codes with odd $q$ and $2e\mid h$ are obtained.
  Based on the Galois dual of a code, we also study its punctured and shortened codes.
  As applications, new $e'$-Galois self-orthogonal maximum distance separable (MDS) codes for all possible $e'$ satisfying $0\leq e'\leq h-1$,
  new $e$-Galois self-orthogonal MDS codes via the shortened codes, and new MDS codes with prescribed dimensional $e$-Galois hull via the punctured codes are derived. 
  Moreover, some new $\sqrt{q}$-ary quantum MDS codes with lengths greater than $\sqrt{q}+1$ and minimum distances greater than $\frac{\sqrt{q}}{2}+1$ are obtained.
  \end{abstract}
{\bf Keywords:} Generalized Reed-Solomon codes, Extended generalized Reed-Solomon codes, Hulls, Galois self-orthogonal codes, Quantum MDS codes\\
{\bf AMS Classification (MSC 2020)}: 94B05, 15B05, 12E10

\section{Introduction}\label{sec-introduction}

\subsection{Hulls and self-orthogonal maximum distance separable (MDS) codes}
Let $q=p^h$ be a prime power. Let $\mathbb{F}_q$ be the finite field with $q$ elements. 
An $[n,k,d]_q$ linear code $\C$ is a $k$-dimensional subspace of $\F_q^n$ with minimum distance $d$. 
$\C$ is called an maximum distance separable (MDS) code if the minimum distance $d$ reaches 
the Singleton bound: $d\leq n-k+1$, i.e., $d=n-k+1$. 
Since they have the maximum error correction capability when $n$ and $k$ are fixed, 
it always makes sense to construct MDS codes.
Particularly, (extended) generalized Reed-Solomon (GRS) codes are the most important families of MDS codes.

Let $\mathbf{a}=(a_1,a_2,\dots,a_n)$ and $\mathbf{b}=(b_1,b_2,\dots,b_n)\in \F_q^n$.
For each $0\leq e\leq h-1$, the $e$-Galois inner product of $\mathbf{a}$ and $\mathbf{b}$ is defined by
\begin{align}
   (\mathbf{a},\mathbf{b})_{e}=\sum _{i=1}^{n} a_i b_{i}^{p^{e}},
\end{align}
which was first introduced by Fan and Zhang \cite{RefJ5}.
The $e$-Galois inner product provides another tool to study algebraic structures of linear codes and 
generalizes the Euclidean inner product ($e=0$) and the Hermitian inner product ($e=\frac{h}{2}$ with even $h$). 
For an $[n,k,d]_q$ linear code $\C$, we can define its $e$-Galois dual code as
\begin{align}
  \C^{\bot_e}=\{ \mathbf{a} \in \F_q^n:\ (\mathbf{a},\mathbf{b})_e=0,\ {\rm for \ all} \ \mathbf{b}\in \C \}.
\end{align}

Denote the $e$-Galois hull of $\C$ by $\Hull_e(\C)=\C \cap \C^{\bot_e}$. Similarly, 
$\Hull_0(\C)$ is just the Euclidean hull and $\Hull_{\frac{h}{2}}(\C)$ coincides the Hermitian hull if $h$ is even. 
To classify finite projective planes, the $0$-Galois hull was first introduced by Assmus and Key \cite{Assmus}. In subsequent studies, one has found that
the $0$-Galois hull plays an important role in determining the equivalence of two linear codes and in computing the automorphism group of a linear code, 
and the $e$-Galois hull plays a significant role in calculating the parameter $c$ in an entanglement-assisted quantum error-correcting code (EAQECC) 
\cite{hull1,hull2,hull3,hull4,hull5}.


If $\Hull_e(\C)=\C$ (i.e., $\mathcal{C}\subseteq \mathcal{C}^{\bot_e}$), 
we call $\C$ an $e$-Galois self-orthogonal code, which contains the Euclidean self-orthogonal code 
and the Hermitian self-orthogonal code as two special cases. 
Self-orthogonal codes are of peculiar interest in theory and practice.
Many $t$-designs, optimal codes, linear complementary dual codes, quantum codes, EAQECCs, and so on 
have been derived from self-orthogonal codes 
(e.g., see \cite{RefJ1,RefJ4,RefJ7,RefJ8,RefJ12,RefJ11,RefJ13,RefJ18,RefJ30,RefJ31,CSS1,CSS2,quantum-Singleton-bound,5-design} and the references therein).
In particular, combining (extended) GRS codes and self-orthogonal codes, many Euclidean self-orthogonal and Hermitian self-orthogonal (extended) GRS codes were constructed
in \cite{RefJ60,RefJ11,RefJ40,RefJ16,RefJ26,RefJ50,RefJ70,Quantum-EGRS1,Quantum-EGRS2} and the references therein. Of course, it is also notable that
in these constructions fewer types of Hermitian self-orthogonal extended GRS codes were obtained. Nevertheless, by employing these self-orthogonal MDS codes,
a large number of quantum MDS codes with large minimum distance were further obtained. All achievements and applications mentioned above have made the
constructions of self-orthogonal MDS codes, especially self-orthogonal (extended) GRS codes, become an important and hot issue in coding theory.

\subsection{Our motivations and contributions}

In this paper, we propose general methods to construct $e$-Galois self-orthogonal (extended) GRS codes 
and study their applications. The motivations and contributions are summarized in three items. 

\begin{enumerate}
  \item [\rm 1)] Up to authors' knowledge, compared to the Euclidean and the Hermitian cases,
  there are few studies on Galois self-orthogonal MDS codes, especially, Galois self-orthogonal 
  (extended) GRS codes in the literature. 
  Specifically, in \cite{RefJ8,RefJ5,RefJ29,M.andC.,Fu2 Galois self-dual duadic constancyclic}, 
  sufficient conditions (some are also necessary) for (extended duadic) constacyclic codes and 
  skew multi-twisted codes over $\mathbb{F}_q$ being Galois self-orthogonal or Galois self-dual 
  were presented. And some of these Galois self-orthogonal codes are MDS. 
  In \cite{RefJ2,RefJ10,RefJ24,RefJ23}, with the aims of constructing EAQECCs 
  and MDS codes with hulls of arbitrary dimensions, 
  (extended) GRS codes were studied under the Galois inner product. 
  However, from these works, only five classes of Galois self-orthogonal GRS codes can be obtained and 
  no Galois self-orthogonal extended GRS codes can be derived.
  For ease of reference, we list them in Table \ref{tab:1}. 
  Naturally, a question in this topic is \textbf{how can we explicitly construct Galois self-orthogonal (extended) GRS codes?}
  Our investigation to answer this question leads to the following results: 
  \begin{itemize}
    \item Lemmas \ref{lem_Galois self-orthogonal GRS} and \ref{lem.Galois self-orthogonal EGRS} give 
    two new methods to construct Galois self-orthogonal (extended) GRS codes; 

    \item Based on these two general methods, combining different tools, we derive four new classes of 
     $e$-Galois self-orthogonal GRS codes in Theorems \ref{ConA.1}, \ref{ConB.1}, \ref{ConC.1} and \ref{ConD.1}, 
     and four new classes of $e$-Galois self-orthogonal extended GRS codes in 
     Theorems \ref{ConA.2}, \ref{ConB.2}, \ref{ConC.2} and \ref{ConD.2}, where $q=p^h$ is odd and $2e\mid h$. 
  \end{itemize}

  \item [\rm 2)] From \cite{RefJ23,RefJ24}, one can obtain more MDS codes of the same length 
  with $e'$-Galois hulls of arbitrary dimensions or linear codes of larger lengths with $e$-Galois 
  hulls of arbitrary dimensions from a given $e$-Galois self-orthogonal (extended) 
  GRS code. From these conclusions, some new Galois self-orthogonal codes of the same length or larger lengths 
  can be obtained. Then from a known Galois self-orthogonal MDS code, a natural question in this topic is  
\textbf{how can we explicitly construct an MDS code of shorter length with prescribed dimensional Galois hull?}
  Our investigation to answer this question leads to the following results: 
  \begin{itemize}
    \item Lemma \ref{lem-shorten-puncture}, Theorems \ref{thm-SO+Hull}, \ref{shorten} and 
    \ref{puncture} give some useful properties of the punctured and shortened codes under the 
    Galois inner product;  
    \item We derive many new MDS codes of shorter lengths with prescribed dimensional $e$-Galois hull
    and $e$-Galois self-orthogonal MDS codes of shorter length via the punctured and shortened codes 
    in Theorem \ref{th.EGRS_shortened};

    \item In addition, combining the results in \cite{RefJ23}, we further illustrate that many new 
    $e'$-Galois self-orthogonal MDS codes can be derived 
    in Theorem \ref{th.e'-Galois self-orthogonal codes} for all possible $e'$ satisfying $0\leq e'\leq h-1$. 
  \end{itemize}

  \item [\rm 3)] Note that once Galois self-orthogonal MDS codes are available, one can immediately get related Euclidean self-orthogonal MDS codes and Hermitian 
  self-orthogonal MDS codes by taking special cases. This gives us the third motivation and yields some quantum MDS codes in Theorem \ref{th.quantum MDS codes}.
\end{enumerate}

\newcommand{\tabincell}[2]{\begin{tabular}{@{}#1@{}}#2\end{tabular}}
\begin{table}[!htb]
\centering
\caption{Known Galois self-orthogonal GRS codes}
\label{tab:1}    
\begin{center}
  \resizebox{160mm}{20mm}{
  \begin{tabular}{ccccc}
    \hline
    Class &  Finite field & Code length & Dimension & Ref.\\
    \hline
  1 & $q=p^h$ is odd, $2e\mid h$ & $n=\frac{t(q-1)}{p^e-1}+1$, $1\leq t\leq p^e-1$ & $1\leq k\leq \lfloor \frac{p^e+n}{p^e+1} \rfloor$ & \cite{RefJ2} \\

  2 & $q=p^h$ is odd, $2e\mid h$ & \tabincell{c}{$n=\frac{r(q-1)}{\gcd(x_2,q-1)}+1$, $1\leq r\leq \frac{q-1}{\gcd(x_1,q-1)}$,\\ $(q-1)\mid {\rm{lcm}}(x_1,x_2)$ and $\frac{q-1}{p^e-1}\mid x_1$} & $1\leq k\leq \lfloor \frac{p^e+n}{p^e+1} \rfloor$ & \cite{RefJ2} \\

  3 & $q=p^h$ is odd, $2e\mid h$ & \tabincell{c}{$n=rm+1$, $1\leq r\leq \frac{p^e-1}{m_1}$,\\ $m_1=\frac{m}{\gcd(m,y)}$, $y=\frac{q-1}{p^e-1}$ and $m\mid (q-1)$} & $1\leq k\leq \lfloor \frac{p^e+n}{p^e+1} \rfloor$ & \cite{RefJ2} \\

  4 & $q=p^h$ is odd, $2e\mid h$ & \tabincell{c}{$n=tp^{aw}$, $a\mid e$\\ $1\leq t\leq p^a$, $1\leq w\leq \frac{h}{a}-1$} & $1\leq k\leq \lfloor \frac{p^e+n-1}{p^e+1} \rfloor$ & \cite{RefJ2} \\

  5 & $q=p^h$ is odd, $2^t\mid \frac{h}{m}$, $2^t=p^e+1$ & \tabincell{c}{$n=wp^{mz}$, $1\leq w\leq p^m$, $1\leq z\leq \frac{h}{m}-1$} & $1\leq k\leq \lfloor \frac{p^e+n-1}{p^e+1} \rfloor$ & \cite{RefJ23} \\
  \hline
  \end{tabular}}
\end{center}
\end{table}

\subsection{Organization of this paper}

After this introduction, in Section \ref{sec2}, we recall some necessary knowledge and present some results that are important for our topic. 
In Section \ref{sec3}, we construct eight new classes of $e$-Galois self-orthogonal (extended) GRS codes when $q=p^h$ is odd and $2e\mid h$.
In Section \ref{sec4}, we give many examples of general $e$-Galois self-orthogonal (extended) GRS codes. All of them are not Euclidean self-orthogonal or Hermitian self-orthogonal.
In Section \ref{sec_applications}, we propose some applications.  
More Galois self-orthogonal MDS codes, MDS codes with prescribed dimensional Galois hull and quantum MDS codes are obtained. 
In Section \ref{sec6}, we end this paper with some concluding remarks.

\section{Preliminaries}\label{sec2} 
\subsection{(Extended) generalized Reed-Solomon codes}
Throughout this paper, $e$ is an integer satisfying $0\leq e\leq h-1$ unless otherwise specified.
In this subsection, we introduce some basic knowledge and useful results on (extended) GRS codes.
For more details on (extended) GRS codes, readers are referred to \cite[Chapter 10]{RefJ100}.

Let $q=p^h$ be a prime power and $\mathbb{F}_q^n$ be an $n$-dimensional vector space over $\mathbb{F}_q$.
Choose two vectors $\mathbf{a}=(a_1,a_2,\dots,a_n)\in \mathbb{F}_{q}^n$
and $\mathbf{v}=(v_1,v_2,\dots,v_n)\in (\mathbb{F}_{q}^*)^n$, where $a_i\neq a_j$ for any $1\leq i\neq j\leq n\leq q$. We define
\begin{equation*}
    \GRS_k(\mathbf{a},\mathbf{v})=\{(v_1f(a_1),v_2f(a_2),\dots,v_nf(a_n)):f(x)\in \mathbb{F}_{q}[x], \deg(f(x))\leq k-1\}
\end{equation*}
as the GRS code with length $n$ and dimension $k$ associated to $\mathbf{a}$ and $\mathbf{v}$. Usually, the elements $a_1,a_2,\ldots,a_n$ are called the \underline{code locators} of
$\GRS_k(\mathbf{a}, \mathbf{v})$, and the elements $v_1,v_2,\ldots,v_n$ are called the \underline {column multipliers} of $\GRS_k(\mathbf{a}, \mathbf{v})$.

By adding an extra coordinate to $\GRS_k(\mathbf{a},\mathbf{v})$, the extended GRS code with length $n+1$ and dimension $k$ associated to $\mathbf{a}$ and $\mathbf{v}$, denoted
by $\GRS_k(\mathbf{a},\mathbf{v},\infty)$, is defined by
\begin{equation*}
    \GRS_k(\mathbf{a},\mathbf{v},\infty)=\{(v_1f(a_1),v_2f(a_2),\dots,v_nf(a_n),f_{k-1}):f(x)\in \mathbb{F}_{q}[x], \deg(f(x))\leq k-1\},
\end{equation*}
where $f_{k-1}$ is the coefficient of $x^{k-1}$ in $f(x)$. It is well known that GRS codes, extended GRS codes and
their dual codes are MDS codes.

As an important tool in our constructions, we denote
\begin{align}\label{equation.ui}
  u_i=\prod_{1\leq j\leq n,j\neq i}(a_i-a_j)^{-1},\ 1\leq i\leq n,
\end{align}
then it is clear that $u_i\neq 0$.

The following lemmas are useful for us to construct $e$-Galois self-orthogonal (extended) GRS codes.

\begin{lemma}{\rm(\cite[Lemma \Rmnum{3}.2]{RefJ2} and \cite[Lemma 2.3]{RefJ10})} \label{lem_all_1}
    Let $q=p^h$ with $p$ being an odd prime number and $E=\{x^{p^e+1}:\ x\in \mathbb{F}_q^*\}$ be a multiplicative subgroup of $\mathbb{F}_q^*$.
    Then $\mathbb{F}_{p^e}^*\subseteq E$ if and only if $2e\mid h$.
\end{lemma}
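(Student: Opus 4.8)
## Proof Proposal

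The plan is to translate the condition $\mathbb{F}_{p^e}^* \subseteq E$ into a statement about the orders of the relevant multiplicative groups, and then invoke the cyclic structure of $\mathbb{F}_q^*$. First I would recall that $\mathbb{F}_q^*$ is cyclic of order $q-1 = p^h - 1$, and that the map $x \mapsto x^{p^e+1}$ is a group homomorphism whose image $E$ is the unique subgroup of index $\gcd(p^e+1, q-1)$; equivalently, $|E| = (q-1)/\gcd(p^e+1,\,q-1)$. Likewise $\mathbb{F}_{p^e}^*$ is the unique subgroup of $\mathbb{F}_q^*$ of order $p^e - 1$ — its existence inside $\mathbb{F}_q^*$ already forces $(p^e-1)\mid(p^h-1)$, which is the standard fact that $\mathbb{F}_{p^e}$ is a subfield of $\mathbb{F}_q$ iff $e \mid h$; I would note this is automatic here since we will see $2e\mid h$ on one side and can restrict attention to the case $e\mid h$ throughout. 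Since $\mathbb{F}_q^*$ is cyclic, containment of one subgroup in another is equivalent to divisibility of the corresponding orders, so the key equivalence to establish is
\[
(p^e - 1) \;\Big|\; \frac{q-1}{\gcd(p^e+1,\,q-1)} \quad\Longleftrightarrow\quad 2e \mid h.
\]

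Next I would compute the gcd explicitly. Write $q - 1 = p^h - 1$. Using $p^{2e}-1 = (p^e-1)(p^e+1)$, one checks that $\gcd(p^e+1, p^h-1)$ depends on the parity of $h/e$ (assuming $e\mid h$): if $h/e$ is even then $p^{2e}-1 \mid p^h-1$, so $p^e+1 \mid p^h-1$ and the gcd is exactly $p^e+1$; if $h/e$ is odd, then $\gcd(p^e+1, p^h-1) = \gcd(p^e+1, p^e-1)$, which equals $1$ for $p$ even and $2$ for $p$ odd (here $p$ is odd, so it is $2$). With this in hand, the forward direction ($2e\mid h \Rightarrow \mathbb{F}_{p^e}^*\subseteq E$) is immediate: when $2e \mid h$ we get $\gcd(p^e+1,q-1) = p^e+1$, hence $|E| = (q-1)/(p^e+1)$, and since $(p^e-1)(p^e+1) = p^{2e}-1$ divides $p^h - 1 = q-1$ we conclude $(p^e-1) \mid (q-1)/(p^e+1) = |E|$; cyclicity then gives the containment. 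For the reverse direction, I would argue contrapositively: if $2e \nmid h$ (still assuming $e \mid h$, as forced by $\mathbb{F}_{p^e}^*\subseteq\mathbb{F}_q^*$), then $h/e$ is odd, so $|E| = (q-1)/2$ (for odd $p$); I then need to show $(p^e-1)\nmid (p^h-1)/2$ when $h/e$ is odd. Writing $h = e m$ with $m$ odd, $(p^h-1)/(p^e-1) = 1 + p^e + p^{2e} + \cdots + p^{(m-1)e}$ is a sum of $m$ odd terms, hence odd; therefore $p^h - 1 = (p^e-1)\cdot(\text{odd})$, and since $p^e - 1$ is even, $(p^h-1)/2 = (p^e-1)/2 \cdot (\text{odd})$, which is not divisible by $p^e-1$ (the factor of $2$ is missing). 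This completes the contrapositive.

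The main obstacle I anticipate is getting the gcd computation $\gcd(p^e+1, p^h-1)$ airtight across the parity cases, and in particular handling the boundary/degenerate situations cleanly: one must be careful when $p^e - 1 = 0$ would occur (it does not, since $p \geq 3$ and $e \geq 0$ gives $p^e \geq 1$, but $e = 0$ makes $\mathbb{F}_{p^0}^* = \{1\}$ trivial and the statement vacuous, so tacitly $e\geq 1$), and when the $2$-adic valuations of $p^e+1$ and $p^h - 1$ interact. The cleanest route is probably to phrase everything via $v_2$ (the $2$-adic valuation) and the identity $\gcd(p^a-1, p^b-1) = p^{\gcd(a,b)}-1$ together with its cousin for $p^a+1$, rather than ad hoc manipulation. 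An alternative, perhaps slicker, approach that sidesteps the explicit gcd is: $\mathbb{F}_{p^e}^* \subseteq E$ iff every element of $\mathbb{F}_{p^e}^*$ is a $(p^e+1)$-th power in $\mathbb{F}_q^*$; since $\mathbb{F}_{p^e}^*$ is cyclic of order $p^e-1$, this holds iff $\gcd(p^e+1, p^e - 1)$ divides... — no, one still needs to know when the $(p^e+1)$-th power map restricted to the relevant coset structure is surjective onto $\mathbb{F}_{p^e}^*$, which brings back the same divisibility. So I would commit to the order/gcd computation and simply be meticulous with the $2$-adic bookkeeping, since $p$ odd is exactly the regime where the factor of $2$ is the whole story.
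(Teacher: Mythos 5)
Your proposal is correct, but note that the paper itself offers no proof of this lemma: it is quoted directly from \cite[Lemma \Rmnum{3}.2]{RefJ2} and \cite[Lemma 2.3]{RefJ10}, so there is no in-paper argument to compare against. Your self-contained proof via the cyclic structure of $\mathbb{F}_q^*$ is sound: $E$ is the unique subgroup of order $(q-1)/\gcd(p^e+1,q-1)$, $\mathbb{F}_{p^e}^*$ (when it sits inside $\mathbb{F}_q$, i.e.\ $e\mid h$, which both sides of the equivalence force) is the unique subgroup of order $p^e-1$, and containment reduces to the divisibility $(p^e-1)\mid (q-1)/\gcd(p^e+1,q-1)$; your case split on the parity of $h/e$ and the $2$-adic argument in the odd case (the quotient $(p^h-1)/(p^e-1)$ is a sum of an odd number of odd terms, so dividing $q-1$ by $2$ destroys divisibility by the even number $p^e-1$) are both correct, as is your handling of the degenerate cases $e\nmid h$ and $e=0$. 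The only step you leave as ``one checks'' is $\gcd(p^e+1,p^h-1)=2$ when $h/e$ is odd and $p$ is odd; this can be sealed in one line without the $\gcd(p^a\pm1,p^b-1)$ machinery: working modulo $p^e+1$ one has $p^e\equiv -1$, so $p^h=p^{e\cdot(h/e)}\equiv(-1)^{h/e}=-1$, hence $p^h-1\equiv -2$ and $\gcd(p^e+1,p^h-1)=\gcd(p^e+1,2)=2$. With that observation inserted, the argument is complete and is presumably the same order-counting proof as in the cited sources.
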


\begin{lemma}\label{lem_Galois hull GRS}{\rm(\cite[Proposition \Rmnum{2}.1]{RefJ2})}
    For $\mathbf{c}=(v_1f(a_1),v_2f(a_2),\dots,v_nf(a_n))\in \GRS_k(\mathbf{a},\mathbf{v})$, we have $\mathbf{c}\in \GRS_k(\mathbf{a},\mathbf{v})^{\bot_e}$
    if and only if there exists a polynomial $g(x)\in \mathbb{F}_q[x]$ with $\deg(g(x))\leq n-k-1$ such that
    \begin{align}
      \begin{split}
          (v_1^{p^e+1}f^{p^e}(a_1),v_2^{p^e+1}f^{p^e}(a_2),\dots,v_n^{p^e+1}f^{p^e}(a_n)) 
        =  (u_1g(a_1),u_2g(a_2),\dots,u_ng(a_n)).
      \end{split}
    \end{align}
\end{lemma}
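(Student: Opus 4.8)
The plan is to prove the lemma by reducing $e$-Galois orthogonality to ordinary Euclidean orthogonality of one explicitly written vector, and then quoting the classical description of the Euclidean dual of a GRS code (see, e.g., \cite[Chapter 10]{RefJ100}). First I would write $\mathbf{c}=(v_1f(a_1),\dots,v_nf(a_n))$ with $\deg(f(x))\le k-1$, and let $\mathbf{d}=(v_1g(a_1),\dots,v_ng(a_n))$ run over $\GRS_k(\mathbf{a},\mathbf{v})$, so that $\deg(g(x))\le k-1$. Expanding $(\mathbf{c},\mathbf{d})_e$ — and, if the convention in force puts the Frobenius twist on the code side, first applying the automorphism $x\mapsto x^{p^{h-e}}$ to the defining sum, which changes nothing since it is an automorphism of $\mathbb{F}_q$ — the condition $\mathbf{c}\in\GRS_k(\mathbf{a},\mathbf{v})^{\bot_e}$ becomes the family of scalar identities
\[
  \sum_{i=1}^{n} v_i^{p^e+1}f^{p^e}(a_i)\,g(a_i)=0 \qquad\text{for every } g(x)\in\mathbb{F}_q[x] \text{ with } \deg(g(x))\le k-1.
\]

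The next step is to read this off as a Euclidean-duality statement: it says precisely that the vector $\mathbf{c}'=\bigl(v_1^{p^e+1}f^{p^e}(a_1),\dots,v_n^{p^e+1}f^{p^e}(a_n)\bigr)$ is Euclidean-orthogonal to the evaluation code $\{(g(a_1),\dots,g(a_n)):\deg(g(x))\le k-1\}=\GRS_k(\mathbf{a},\mathbf{1})$, so that $\mathbf{c}\in\GRS_k(\mathbf{a},\mathbf{v})^{\bot_e}$ if and only if $\mathbf{c}'\in\GRS_k(\mathbf{a},\mathbf{1})^{\bot_0}$. I would then apply the well-known identity $\GRS_k(\mathbf{a},\mathbf{1})^{\bot_0}=\GRS_{n-k}(\mathbf{a},\mathbf{u})$ with $\mathbf{u}=(u_1,\dots,u_n)$ as in \eqref{equation.ui}; at bottom this is the statement that $\sum_{i=1}^n u_i r(a_i)=0$ whenever $\deg(r(x))\le n-2$, which one extracts as the coefficient of $x^{n-1}$ in the Lagrange interpolation formula $r(x)=\sum_{i=1}^n u_i r(a_i)\prod_{j\ne i}(x-a_j)$. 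Since $\GRS_{n-k}(\mathbf{a},\mathbf{u})=\{(u_1g(a_1),\dots,u_ng(a_n)):\deg(g(x))\le n-k-1\}$, the membership $\mathbf{c}'\in\GRS_{n-k}(\mathbf{a},\mathbf{u})$ is exactly the displayed conclusion of the lemma, and because all implications above are equivalences, the "only if" and "if" directions are obtained simultaneously.

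I expect the one genuinely delicate point to be the Frobenius bookkeeping: one has to make sure the column multipliers appearing on the right-hand side are the $u_i$ themselves and not the twisted quantities $u_i^{p^e}$, and that $f(a_i)$ ends up raised to the $p^e$-th power. I would control this through the identity $a_i^{p^e}-a_j^{p^e}=(a_i-a_j)^{p^e}$, which shows that the vector of multipliers associated with the Frobenius-twisted locators $a_i^{p^e}$ equals $(u_i^{p^e})_{i=1}^{n}$; raising the intermediate identity to the $p^{h-e}$-th power — equivalently, pushing the Frobenius through at the outset — simultaneously restores $u_i$ in place of $u_i^{p^e}$ and produces $f^{p^e}(a_i)$ from $f(a_i)$, matching the statement. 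Everything else is routine: the $a_i^{p^e}$ are still pairwise distinct, coefficient-wise Frobenius permutes the polynomials of degree $\le k-1$, and the dimension count ($n-k$ on either side) forces the degree bound $\deg(g(x))\le n-k-1$.
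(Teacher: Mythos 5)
The paper offers no proof of this lemma---it is quoted verbatim from \cite[Proposition II.1]{RefJ2}---but your main route is exactly the standard argument behind the cited result: expand the Galois inner product against a generic codeword $(v_1g(a_1),\dots,v_ng(a_n))$, read the resulting identities as Euclidean orthogonality of $\bigl(v_i^{p^e+1}f^{p^e}(a_i)\bigr)_i$ to $\GRS_k(\mathbf{a},\mathbf{1})$, and invoke $\GRS_k(\mathbf{a},\mathbf{1})^{\bot_0}=\GRS_{n-k}(\mathbf{a},\mathbf{u})$, which you correctly justify by the Lagrange-interpolation identity $\sum_{i=1}^n u_ir(a_i)=0$ for $\deg r\le n-2$ together with a dimension count. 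All of those steps are equivalences, so both directions come out at once, as you say.

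The genuine problem is the Frobenius bookkeeping, precisely the point you flag as delicate. Under the definition of $\C^{\bot_e}$ displayed in Section 1 (dual candidate in the first slot, so the Frobenius falls on the codeword of the code), the membership condition reads $\sum_i v_i^{p^e+1}f(a_i)g(a_i)^{p^e}=0$ for all $g$ with $\deg g\le k-1$, and applying $x\mapsto x^{p^{h-e}}$ to this sum does \emph{not} give your first display: it gives $\sum_i v_i^{p^{h-e}+1}f^{p^{h-e}}(a_i)g(a_i)=0$, i.e.\ the same identity with $e$ replaced by $h-e$. Likewise, raising your intermediate identity $v_i^{p^e+1}f(a_i)=u_i^{p^e}h(a_i^{p^e})$ to the $p^{h-e}$-th power restores $u_i$ but produces $f^{p^{h-e}}(a_i)$ and $v_i^{p^{h-e}+1}$, not $f^{p^e}(a_i)$ and $v_i^{p^e+1}$; your claim that this step ``produces $f^{p^e}(a_i)$ from $f(a_i)$'' is false, and no Frobenius push can repair it, because under that literal reading your chain of equivalences (and indeed the displayed characterization itself) describes the $(h-e)$-Galois dual. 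The statement is correct under the convention of \cite{RefJ2} and \cite{PX}---the one also forced by Lemma \ref{prop.Galois dual}, which gives $\C^{\bot_e}=\{\mathbf{x}:\sum_i c_i x_i^{p^e}=0\ \text{for all}\ \mathbf{c}\in\C\}$---in which the Frobenius acts on the dual candidate $\mathbf{c}$. Under that reading your first display is immediate with no twisting whatsoever, and the rest of your argument stands; so the fix is to state and use that convention explicitly and delete the $x\mapsto x^{p^{h-e}}$ manoeuvre, rather than to rely on the reconciliation step as written.
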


\begin{lemma}\label{lem_Galois hull EGRS}{\rm(\cite[Proposition \Rmnum{2}.2]{RefJ2})}
     For $\mathbf{c}=(v_1f(a_1),v_2f(a_2),\dots,v_nf(a_n),f_{k-1})\in \GRS_k(\mathbf{a},\mathbf{v},\infty)$, we have
     $\mathbf{c}\in \GRS_k(\mathbf{a},\mathbf{v},\infty)^{\bot_e}$ if and only if there exists a polynomial $g(x)\in \mathbb{F}_q[x]$ with $\deg(g(x))\leq n-k$ such that
    \begin{align}
      \begin{split}
        & (v_1^{p^e+1}f^{p^e}(a_1),v_2^{p^e+1}f^{p^e}(a_2),\dots,v_n^{p^e+1}f^{p^e}(a_n),f_{k-1}^{p^e}) \\
        = & (u_1g(a_1),u_2g(a_2),\dots,u_ng(a_n),-g_{n-k}),
      \end{split}
    \end{align}
    where $g_{n-k}$ is the coefficient of $x^{n-k}$ in $g(x)$.
\end{lemma}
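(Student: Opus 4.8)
The plan is to adapt the proof of Lemma~\ref{lem_Galois hull GRS} to the extended setting, the only new ingredient being the bookkeeping forced by the coordinate at $\infty$. First I would work with the monomial basis $f(x)=x^{j}$, $0\le j\le k-1$, so that $\GRS_k(\mathbf{a},\mathbf{v},\infty)$ is spanned by the rows $\mathbf{r}_j=(v_1a_1^{j},\dots,v_na_n^{j},\delta_{j,k-1})$, with $\delta_{j,k-1}=1$ for $j=k-1$ and $0$ otherwise. By definition $\mathbf{c}=(v_1f(a_1),\dots,v_nf(a_n),f_{k-1})$ lies in $\GRS_k(\mathbf{a},\mathbf{v},\infty)^{\bot_e}$ iff $(\mathbf{c},\mathbf{r}_j)_e=0$ for all $j$; since $\delta_{j,k-1}^{p^{e}}=\delta_{j,k-1}$, this is the system $\sum_{i=1}^{n}v_i^{p^{e}+1}f(a_i)a_i^{jp^{e}}=-f_{k-1}\delta_{j,k-1}$ for $0\le j\le k-1$. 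Note that the $\infty$-coordinate enters only the single equation $j=k-1$; the other $k-1$ equations are of exactly the homogeneous type treated in Lemma~\ref{lem_Galois hull GRS}.

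Next I would invoke the interpolation identity behind the $u_i$: with $A(x)=\prod_{\ell=1}^{n}(x-a_\ell)$ one has $u_i=A'(a_i)^{-1}$, and the standard partial-fraction computation gives $\sum_{i=1}^{n}u_ia_i^{t}=\delta_{t,n-1}$ for $0\le t\le n-1$. Applying a coordinatewise Frobenius to the system above — and using $a_i^{q}=a_i$, $v_i^{q}=v_i$ to absorb the exponents on $a_i$ and $v_i$ — recasts it as $\sum_{i}w_ia_i^{j}=0$ for $0\le j\le k-2$ together with $\sum_{i}w_ia_i^{k-1}=-f_{k-1}^{p^{e}}$, where $w_i=v_i^{p^{e}+1}f^{p^{e}}(a_i)$; here one simply checks that the Frobenius twist carries $\big(v_i^{p^{e}+1}f(a_i)\big)_i$ and $f_{k-1}$ into exactly these expressions (for $e=0$ and for $2e=h$ this is transparent, and the general case is the same computation modulo the relation $p^{h}\equiv 1\pmod{q-1}$ on exponents of elements of $\F_q^{*}$). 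Now let $g\in\F_q[x]$ with $\deg g\le n-1$ be the unique polynomial determined by $u_ig(a_i)=w_i$ for $1\le i\le n$ (Lagrange interpolation). Expanding $\sum_{i}w_ia_i^{j}=\sum_{m}g_m\sum_i u_ia_i^{m+j}$ and applying the identity, the $k-1$ homogeneous equations are equivalent to $g_m=0$ for all $m>n-k$, i.e.\ $\deg g\le n-k$; under this degree bound the remaining equation at $j=k-1$ reduces to $g_{n-k}=-f_{k-1}^{p^{e}}$. Reading off $u_ig(a_i)=w_i=v_i^{p^{e}+1}f^{p^{e}}(a_i)$ together with $g_{n-k}=-f_{k-1}^{p^{e}}$ is exactly the asserted characterization, which gives the ``only if'' direction.

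For the converse I would run the same chain backwards: starting from a $g$ with $\deg g\le n-k$ satisfying the two displayed identities, set $w_i=u_ig(a_i)$, use the interpolation identity to recover $\sum_i w_ia_i^{j}=0$ for $j\le k-2$ and $\sum_i w_ia_i^{k-1}=-g_{n-k}=f_{k-1}^{p^{e}}$, and undo the Frobenius (which is a bijection of $\F_q$, so every step is invertible) to obtain $(\mathbf{c},\mathbf{r}_j)_e=0$ for all $j$. I expect the only real obstacle to be the twist bookkeeping in the middle step: pinning down which $p$-power sits on each of $v_i$, $f(a_i)$ and $f_{k-1}$ so that it matches the stated form, and observing that raising the coefficients of $g$ to a $p$-power changes neither $\deg g$ nor which coefficient is ``$g_{n-k}$''. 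Everything else is a routine transcription of the non-extended argument, with the lone structural difference that the extra coordinate at $\infty$ removes one homogeneous equation — raising the degree bound from $n-k-1$ in Lemma~\ref{lem_Galois hull GRS} to $n-k$ here — and, through the entry $1$ in $\mathbf{r}_{k-1}$, produces the minus sign in front of $g_{n-k}$.
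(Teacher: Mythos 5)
First, a point of comparison: the paper does not prove this lemma at all --- it is quoted from \cite[Proposition \Rmnum{2}.2]{RefJ2} --- so your argument has to stand entirely on its own, and its pivotal step fails. The step in question is exactly the one you dismiss as ``twist bookkeeping''. From the paper's definition of $\perp_e$ you correctly obtain the system $\sum_{i=1}^{n}v_i^{p^e+1}f(a_i)a_i^{jp^e}=-f_{k-1}\delta_{j,k-1}$ for $0\le j\le k-1$. But applying the Frobenius $x\mapsto x^{p^e}$ to it gives $\sum_{i}v_i^{p^{2e}+p^e}f^{p^e}(a_i)a_i^{jp^{2e}}=-f_{k-1}^{p^e}\delta_{j,k-1}$, and the relations $a_i^q=a_i$, $v_i^q=v_i$ only permit reducing exponents modulo $q-1$; since $p^{2e}\not\equiv 1\pmod{q-1}$ unless $h\mid 2e$, you cannot convert $v_i^{p^{2e}+p^e}$ and $a_i^{jp^{2e}}$ into $v_i^{p^e+1}$ and $a_i^{j}$. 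The only cases where your recasting is valid are $e=0$ and $2e=h$ --- precisely the ones you call ``transparent''; the general case is \emph{not} the same computation. Applying $x\mapsto x^{p^{h-e}}$ instead does normalize the $a_i$-exponents, but then every twist becomes $p^{h-e}$, and you end up characterizing membership by $u_ig(a_i)=v_i^{1+p^{h-e}}f^{p^{h-e}}(a_i)$ and $g_{n-k}=-f_{k-1}^{p^{h-e}}$, i.e.\ the $(h-e)$-version rather than the stated formula. The printed formula corresponds to the convention of \cite{RefJ2}, namely requiring $(\mathbf{b},\mathbf{c})_e=0$ for all $\mathbf{b}\in\GRS_k(\mathbf{a},\mathbf{v},\infty)$, equivalently asking when $\sigma^{e}(\mathbf{c})$ lies in the Euclidean dual $\GRS_k(\mathbf{a},\mathbf{v},\infty)^{\perp_0}$; with the known description of that dual as $\{(u_1g(a_1),\dots,u_ng(a_n),-g_{n-k}):\deg(g(x))\le n-k\}$ the lemma is immediate. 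As written, your chain of equalities establishes neither version, because the middle recasting is false for general $e$.

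There is also a secondary rigor gap in the interpolation argument. With $g$ only known to satisfy $\deg(g(x))\le n-1$, the expansion $\sum_i w_ia_i^{j}=\sum_m g_m\sum_i u_ia_i^{m+j}$ involves power sums $\sum_i u_ia_i^{t}$ with $t$ as large as $n+k-2\ge n$, where the identity $\sum_i u_ia_i^{t}=\delta_{t,n-1}$ (valid only for $0\le t\le n-1$) no longer applies; so ``the homogeneous equations are equivalent to $g_m=0$ for $m>n-k$'' is not justified by the identity alone. The claim is true, but it needs either reduction of $g(x)x^{j}$ modulo $\prod_{i}(x-a_i)$, or the remark that the $k-1$ functionals $g\mapsto\sum_i u_ia_i^{j}g(a_i)$, $0\le j\le k-2$, are linearly independent on the $n$-dimensional space of polynomials of degree $\le n-1$ (so the solution space has dimension $n-k+1$ and must coincide with $\{g:\deg(g(x))\le n-k\}$), or simply the standard dual-code description together with a dimension count. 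The converse direction you sketch is fine, since for $\deg(g(x))\le n-k$ one has $\deg(g(x)x^{j})\le n-1$ and the identity does apply.
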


Now, we give the following two lemmas for later use.

\begin{lemma}\label{lem_Galois self-orthogonal GRS}
    For any codeword $\mathbf{c}=(v_1f(a_1),v_2f(a_2),\dots,v_nf(a_n))\in \GRS_k(\mathbf{a},\mathbf{v})$, if there exists a monic polynomial $h(x)\in \mathbb{F}_q[x]$ with
    $\deg(h(x))\leq n-(p^e+1)k+p^e-1$ such that
    \begin{equation}\label{EquationGalois self-orthogonal GRS}
    \lambda u_ih(a_i)=v_i^{p^e+1}, \ 1\leq i\leq n,
    \end{equation}
    where $\lambda\in \mathbb{F}_{q}^*$, then $\GRS_k(\mathbf{a},\mathbf{v})$ is an $e$-Galois self-orthogonal code of length $n$.
\end{lemma}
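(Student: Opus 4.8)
The plan is to read off the desired containment directly from the characterization of $e$-Galois dual codewords in Lemma~\ref{lem_Galois hull GRS}. Fix an arbitrary codeword $\mathbf{c}=(v_1f(a_1),v_2f(a_2),\dots,v_nf(a_n))\in\GRS_k(\mathbf{a},\mathbf{v})$ with $\deg(f(x))\le k-1$. By Lemma~\ref{lem_Galois hull GRS}, showing $\mathbf{c}\in\GRS_k(\mathbf{a},\mathbf{v})^{\bot_e}$ amounts to exhibiting a polynomial $g(x)\in\mathbb{F}_q[x]$ with $\deg(g(x))\le n-k-1$ such that $v_i^{p^e+1}f^{p^e}(a_i)=u_ig(a_i)$ for all $1\le i\le n$. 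So the whole task is to build this $g(x)$ and check its degree.

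First I would feed the hypothesis $v_i^{p^e+1}=\lambda u_ih(a_i)$ into the left-hand side: $v_i^{p^e+1}f^{p^e}(a_i)=\lambda u_ih(a_i)f(a_i)^{p^e}=u_i\bigl(\lambda h(a_i)f(a_i)^{p^e}\bigr)$. Since evaluation at $a_i$ is a ring homomorphism $\mathbb{F}_q[x]\to\mathbb{F}_q$ commuting with the $p^e$-th power map, $f(a_i)^{p^e}$ is exactly the value at $a_i$ of the polynomial $f(x)^{p^e}$. This makes the natural choice $g(x):=\lambda\,h(x)\,f(x)^{p^e}\in\mathbb{F}_q[x]$, for which $u_ig(a_i)=v_i^{p^e+1}f^{p^e}(a_i)$ holds by construction for every $i$.

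The only point needing care is the degree bound on $g(x)$, which is where an off-by-one slip could creep in. Using $\deg(h(x))\le n-(p^e+1)k+p^e-1$ together with $\deg\bigl(f(x)^{p^e}\bigr)=p^e\deg(f(x))\le p^e(k-1)$, one gets
$$\deg(g(x))\le\bigl(n-(p^e+1)k+p^e-1\bigr)+p^e(k-1)=n-k-1,$$
so $g(x)$ lies in the required space and Lemma~\ref{lem_Galois hull GRS} yields $\mathbf{c}\in\GRS_k(\mathbf{a},\mathbf{v})^{\bot_e}$. As $\mathbf{c}$ was arbitrary, $\GRS_k(\mathbf{a},\mathbf{v})\subseteq\GRS_k(\mathbf{a},\mathbf{v})^{\bot_e}$, i.e.\ the code is $e$-Galois self-orthogonal. (When $f(x)=0$ the codeword is zero and the containment is trivial, so monicity of $h(x)$ is not actually used in the argument; it is presumably imposed for normalization in later constructions.) I do not expect a genuine obstacle here: the proof is a one-line translation through Lemma~\ref{lem_Galois hull GRS}, with the degree computation being the only nontrivial bookkeeping.
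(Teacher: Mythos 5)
Your proposal is correct and follows essentially the same route as the paper: set $g(x)=\lambda h(x)f^{p^e}(x)$, verify $\deg(g(x))\leq n-k-1$ via the same degree computation, and invoke Lemma \ref{lem_Galois hull GRS} to conclude $\mathbf{c}\in \GRS_k(\mathbf{a},\mathbf{v})^{\bot_e}$. Your side remark that monicity of $h(x)$ is not actually needed in the argument is also accurate.
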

\begin{proof}
    According to the assumptions given, multiplying both sides of Equation (\ref{EquationGalois self-orthogonal GRS}) by $f^{p^e}(a_i)$, we have
  \begin{equation}\label{EquationGalois.self-orthogonal.2}
    \lambda u_ih(a_i)f^{p^e}(a_i)=v_i^{p^e+1}f^{p^e}(a_i),\ \ 1\leq i\leq n.
  \end{equation}

  Let $g(x)=\lambda h(x)f^{p^e}(x)$ and substitute $g(a_i)=\lambda h(a_i)f^{p^e}(a_i)$ into Equation (\ref{EquationGalois.self-orthogonal.2}). Then
  \begin{align*}
    \begin{split}
      \deg(g(x))  = \deg(h(x))+p^e\deg(f(x)) 
                  \leq n-(p^e+1)k+p^e-1+p^e(k-1) 
                  = n-k-1
    \end{split}
  \end{align*}
  and $u_ig(a_i)=v_i^{p^e+1}f^{p^e}(a_i)$ for $1\leq i\leq n$. Hence, according to Lemma \ref{lem_Galois hull GRS},
  we have $\mathbf{c}\in \GRS_k(\mathbf{a},\mathbf{v})^{\bot _e}$. That is $\GRS_k(\mathbf{a},\mathbf{v})\subseteq \GRS_k(\mathbf{a},\mathbf{v})^{\bot_e}$,
  which finishes the proof.
\end{proof}

\begin{lemma}\label{lem.Galois self-orthogonal EGRS}
For any codeword $\mathbf{c}=(v_1f(a_1),v_2f(a_2),\dots,v_nf(a_n),f_{k-1})\in \GRS_k(\mathbf{a},\mathbf{v},\infty)$,
if there exists a monic polynomial $h(x)\in \mathbb{F}_{q}[x]$ with $\deg(h(x))=p^e+n-(p^e+1)k$ such that
\begin{equation}\label{EquationGalois self-orthogonal EGRS}
  -u_ih(a_i)=v_i^{p^e+1}, \ 1\leq i\leq n,
\end{equation}
then $\GRS_k(\mathbf{a},\mathbf{v},\infty)$ is an $e$-Galois self-orthogonal code of length $n+1$.
\end{lemma}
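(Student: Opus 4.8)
The plan is to imitate the proof of Lemma \ref{lem_Galois self-orthogonal GRS} and reduce the claim to the characterization in Lemma \ref{lem_Galois hull EGRS}. Fix an arbitrary codeword $\mathbf{c}=(v_1f(a_1),v_2f(a_2),\dots,v_nf(a_n),f_{k-1})\in\GRS_k(\mathbf{a},\mathbf{v},\infty)$ with $\deg(f(x))\leq k-1$. Multiplying both sides of Equation (\ref{EquationGalois self-orthogonal EGRS}) by $f^{p^e}(a_i)$ gives $-u_ih(a_i)f^{p^e}(a_i)=v_i^{p^e+1}f^{p^e}(a_i)$ for $1\leq i\leq n$, which suggests setting $g(x):=-h(x)f^{p^e}(x)\in\mathbb{F}_q[x]$. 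With this choice, the first $n$ coordinate identities required by Lemma \ref{lem_Galois hull EGRS}, namely $u_ig(a_i)=v_i^{p^e+1}f^{p^e}(a_i)$, hold by construction.

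Next I would verify the degree bound $\deg(g(x))\leq n-k$. Since $h$ has degree $p^e+n-(p^e+1)k$ and $\deg(f^{p^e}(x))=p^e\deg(f(x))\leq p^e(k-1)$, we obtain $\deg(g(x))\leq p^e+n-(p^e+1)k+p^e(k-1)=n-k$, exactly as needed.

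The delicate point, and the step I expect to be the main obstacle, is the extra coordinate: I must check that $f_{k-1}^{p^e}=-g_{n-k}$, where $g_{n-k}$ is the coefficient of $x^{n-k}$ in $g(x)$. Here I would use that $p^e$ is a power of the characteristic, so Frobenius yields $f(x)^{p^e}=\sum_{j=0}^{k-1}f_j^{p^e}x^{jp^e}$; hence $f^{p^e}(x)$ has degree exactly $(k-1)p^e$ with leading coefficient $f_{k-1}^{p^e}$ whenever $f_{k-1}\neq 0$. In that case, since $h$ is \emph{monic} of degree $p^e+n-(p^e+1)k$, the product $g(x)=-h(x)f^{p^e}(x)$ has degree exactly $(p^e+n-(p^e+1)k)+(k-1)p^e=n-k$ and its leading coefficient is $-1\cdot f_{k-1}^{p^e}$, so $g_{n-k}=-f_{k-1}^{p^e}$. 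If instead $f_{k-1}=0$, then $\deg(f^{p^e}(x))\leq (k-2)p^e$, forcing $\deg(g(x))\leq n-k-p^e<n-k$, so $g_{n-k}=0=-0^{p^e}=-f_{k-1}^{p^e}$. Either way the identity holds, and this is precisely where the exact (not merely bounded) degree of the monic $h$, combined with the Frobenius expansion of $f^{p^e}$, must be used carefully.

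Finally, applying Lemma \ref{lem_Galois hull EGRS} with this $g(x)$ shows $\mathbf{c}\in\GRS_k(\mathbf{a},\mathbf{v},\infty)^{\bot_e}$; since $\mathbf{c}$ was arbitrary, $\GRS_k(\mathbf{a},\mathbf{v},\infty)\subseteq\GRS_k(\mathbf{a},\mathbf{v},\infty)^{\bot_e}$, i.e., the extended GRS code is $e$-Galois self-orthogonal of length $n+1$. Apart from the top-coefficient bookkeeping above, the argument is a direct transcription of the GRS case.
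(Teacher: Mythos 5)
Your proposal is correct and follows essentially the same route as the paper: set $g(x)=-h(x)f^{p^e}(x)$, check $\deg(g)\leq n-k$, split into the cases $f_{k-1}=0$ and $f_{k-1}\neq 0$ (equivalently $\deg(f)<k-1$ or $=k-1$) to verify $f_{k-1}^{p^e}=-g_{n-k}$, and invoke Lemma \ref{lem_Galois hull EGRS}. Your explicit Frobenius expansion of $f^{p^e}$ just spells out the leading-coefficient check the paper leaves as "easy to check."
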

\begin{proof}
  According to the assumptions given, multiplying both sides of Equation (\ref{EquationGalois self-orthogonal EGRS}) by $f^{p^e}(a_i)$, we have
  \begin{equation}\label{EquationGalois.self-orthogonal EGRS.2}
    -u_ih(a_i)f^{p^e}(a_i)=v_i^{p^e+1}f^{p^e}(a_i),\ \ 1\leq i\leq n.
  \end{equation}

Let $g(x)=-h(x)f^{p^e}(x)$ and substitute $g(a_i)=-h(a_i)f^{p^e}(a_i)$ into Equation (\ref{EquationGalois.self-orthogonal EGRS.2}). Then
\begin{align*}
  \begin{split}
    \deg(g(x))  = \deg(h(x))+p^e\deg(f(x)) 
                \leq p^e+n-(p^e+1)k+p^e(k-1) 
                = n-k
  \end{split}
\end{align*}
and $u_ig(a_i)=v_i^{p^e+1}f^{p^e}(a_i)$, $1\leq i\leq n$. We now claim that $f^{p^e}_{k-1}=-g_{n-k}$ in both cases $\deg(f(x))< k-1$ and $\deg(f(x))=k-1$.
\begin{itemize}
  \item \underline{$\textbf{Case 1:}$ $\deg(f(x))<k-1$.} If $\deg(f(x))<k-1$ (i.e., $\deg(f(x))\leq k-2$), then $f_{k-1}=0$ and
        \begin{equation*}
         \begin{split}
          \deg(g(x)) & = \deg(h(x))+p^e\deg(f(x)) \\
                     & \leq p^e+n-(p^e+1)k+p^e(k-2)\\
                     & = n-k-p^e\\
                     & < n-k.
         \end{split}
       \end{equation*}
        It follows that $f_{k-1}^{p^e}=0=-g_{n-k}$.

  \item \underline{$\textbf{Case 2:}$ $\deg(f(x))=k-1$.} If $\deg(f(x))=k-1$, then
        \begin{equation*}
          \begin{split}
            \deg(g(x)) & = \deg(h(x))+p^e\deg(f(x)) \\
                       & = p^e+n-(p^e+1)k+p^e(k-1) \\
                       & = n-k.
          \end{split}
        \end{equation*}
        It is easy to check that, in this case, the coefficients of the highest degree of $g(x)$ and $-h(x)f^{p^e}(x)$ are $g_{n-k}$ and $-f^{p^e}_{k-1}$, respectively.
        It follows from $g(x)=-h(x)f^{p^e}(x)$ that $f_{k-1}^{p^e}=-g_{n-k}$.
\end{itemize}

Combining the above results, from Lemma \ref{lem_Galois hull EGRS}, we have $\mathbf{c}\in \GRS_k(\mathbf{a},\mathbf{v}, \infty)^{\bot _e}$. That is
$\GRS_k(\mathbf{a},\mathbf{v},\infty)\subseteq \GRS_k(\mathbf{a},\mathbf{v},\infty)^{\bot_e}$, which finishes the proof.
\end{proof}

\subsection{The punctured codes and the shortened codes}
Let $q=p^h$ be a prime power. Let $\C$ be an $[n, k, d]_q$ linear code and $T$ be a set of $s$ coordinate positions in $\C$.
If we puncture $\C$ by deleting all the coordinates in $T$ in each codeword of $\C$, then it is well known that
the resulting code is still linear and has length $n - s$.
We denote the \underline{punctured code} by $\C^T$.
Consider the set $\C(T)$ of codewords which are $0$ on $T$. Then $\C(T)$ is a subcode of $\C$.
Puncturing $\C(T)$ on $T$ gives a code over $\F_q$ of length $n -s$ called the code shortened on $T$.
We denote the \underline{shortened code} by $\C_T$.

Based on the Euclidean dual and the Hermitian dual of $\C$, the dimensions of their punctured and shortened codes are calculated
in \cite[Theorem 1.5.7]{PS_Euclidean_dual} and \cite[Lemma 3]{PS_Hermitian_dual} under certain conditions. For our purpose, we adjust
the calculation to the general Galois form. To this end, we need the following lemmas.

Let $\sigma: \F_q \rightarrow \F_q, a \mapsto a^p$ be the Frobenius automorphism of $\F_q$. For any vector $\mathbf{a}=(a_1,a_2,\ldots,a_n)\in \F_q^n$, 
denote $\sigma(\mathbf{a})=(\sigma(a_1), \sigma(a_2),\ldots, \sigma(a_n))$.

\begin{lemma}{\rm(\cite[Proposition 2.2]{PX})}\label{prop.Galois dual}
  Let $q=p^h$ be a prime power. Then the following statements hold.
  \begin{enumerate}
    \item [\rm 1)] $\C^{\bot_e}=(\sigma^{h-e}(\C))^{\bot_0}=\sigma^{h-e}(\C^{\bot_0})$.
    \item [\rm 2)] $(\C^{\bot_e})^{\bot_f}=\sigma^{2h-e-f}(\C)$ for any $0\leq e,f\leq h-1$.
    In particular, $(\C^{\bot_0})^{\bot_0}=\C$ and $(\C^{\frac{h}{2}})^{\bot_\frac{h}{2}}=\C$ if $h$ is even.
  \end{enumerate}
\end{lemma}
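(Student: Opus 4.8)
\medskip
\noindent\textbf{Proof proposal.}
The plan is to convert $e$-Galois orthogonality into ordinary Euclidean orthogonality by means of the Frobenius automorphism $\sigma$, and then to obtain both parts by formal manipulation. Two standard facts will be used throughout: first, $\sigma$ is a field automorphism of $\F_q$ with $\sigma^h=\mathrm{id}$, so that $z^{p^h}=z$ for every $z\in\F_q$ and indices of powers of $\sigma$ may be reduced modulo $h$; second, acting coordinatewise, each power $\sigma^j$ is a bijection of $\F_q^n$ satisfying $(\sigma^j(\mathbf{x}),\sigma^j(\mathbf{y}))_0=\bigl((\mathbf{x},\mathbf{y})_0\bigr)^{p^j}$, hence $\sigma^j$ preserves Euclidean orthogonality and so $(\sigma^j(\C))^{\bot_0}=\sigma^j(\C^{\bot_0})$ for every linear code $\C$.

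For Part~1), I would unwind the definition of $\C^{\bot_e}$ and apply an appropriate power of $\sigma$ to each defining equation $\sum_i a_i b_i^{p^e}=0$: since this map is reversible and, using $z^{p^h}=z$, it transfers the Frobenius power from the codeword coordinate onto the coordinate of $\mathbf{a}$, one finds that $\mathbf{a}\in\C^{\bot_e}$ if and only if a fixed Frobenius twist of $\mathbf{a}$ is Euclidean-orthogonal to every codeword of $\C$; rewriting this (with $\sigma^h=\mathrm{id}$) gives $\C^{\bot_e}=\sigma^{h-e}(\C^{\bot_0})$. Combining this with the displacement identity $\sigma^{h-e}(\C^{\bot_0})=(\sigma^{h-e}(\C))^{\bot_0}$ from the first paragraph yields the full first statement.

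For Part~2), I would simply iterate. Applying Part~1) to the code $\C^{\bot_e}$ with exponent $f$ gives $(\C^{\bot_e})^{\bot_f}=\sigma^{h-f}\bigl((\C^{\bot_e})^{\bot_0}\bigr)$. Next, substituting $\C^{\bot_e}=\sigma^{h-e}(\C^{\bot_0})$ and invoking both the displacement identity and Euclidean biduality $(\C^{\bot_0})^{\bot_0}=\C$ (standard, e.g.\ from $\dim\C+\dim\C^{\bot_0}=n$ together with the trivial inclusion $\C\subseteq(\C^{\bot_0})^{\bot_0}$) gives $(\C^{\bot_e})^{\bot_0}=\bigl(\sigma^{h-e}(\C^{\bot_0})\bigr)^{\bot_0}=\sigma^{h-e}(\C)$, so that $(\C^{\bot_e})^{\bot_f}=\sigma^{h-f}\bigl(\sigma^{h-e}(\C)\bigr)=\sigma^{2h-e-f}(\C)$. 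The two displayed special cases follow at once: for $e=f=0$ one gets $\sigma^{2h}(\C)=\C$, and for $e=f=\frac{h}{2}$ with $h$ even one gets $\sigma^{h}(\C)=\C$, both by $\sigma^h=\mathrm{id}$.

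There is no real obstacle; the argument is entirely formal once the right dictionary between $\bot_e$ and $\bot_0$ is set up. The single point needing genuine care is the precise power of $\sigma$ thrown off by the reduction in Part~1) --- it depends on which argument of the Galois form carries the exponent $p^e$, and on reducing indices modulo $h$ consistently --- together with checking the two equalities in Part~1) independently rather than deriving one from the other.
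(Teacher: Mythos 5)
Your overall route is the right (and essentially the only) one: the paper itself offers no proof of this lemma — it is quoted from \cite[Proposition 2.2]{PX} — and the standard argument is exactly what you outline, namely translating $\bot_e$ into $\bot_0$ through powers of $\sigma$ and then iterating. Given Part 1), your Part 2) is a correct formal computation, and your auxiliary facts $(\sigma^j(\mathbf{x}),\sigma^j(\mathbf{y}))_0=((\mathbf{x},\mathbf{y})_0)^{p^j}$, $(\sigma^j(\C))^{\bot_0}=\sigma^j(\C^{\bot_0})$ and $(\C^{\bot_0})^{\bot_0}=\C$ are all fine.

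The gap sits exactly at the point you flag but never resolve: the precise exponent in Part 1), which is the entire nontrivial content of the lemma. With the paper's own definitions $(\mathbf{a},\mathbf{b})_e=\sum_i a_ib_i^{p^e}$ and $\C^{\bot_e}=\{\mathbf{a}:(\mathbf{a},\mathbf{b})_e=0 \text{ for all } \mathbf{b}\in\C\}$ (the exponent sits on the codeword), your manipulation reads: applying $\sigma^{h-e}$ to $\sum_i a_ib_i^{p^e}=0$ gives $\sum_i a_i^{p^{h-e}}b_i=0$, i.e. $\sigma^{h-e}(\mathbf{a})\in\C^{\bot_0}$. Passing from this membership condition to an image-set description requires the \emph{inverse} map, so the correct conclusion is $\C^{\bot_e}=\sigma^{-(h-e)}(\C^{\bot_0})=\sigma^{e}(\C^{\bot_0})=(\sigma^{e}(\C))^{\bot_0}$, and iterating then gives $(\C^{\bot_e})^{\bot_f}=\sigma^{e+f}(\C)$ — not $\sigma^{h-e}(\C^{\bot_0})$ and $\sigma^{2h-e-f}(\C)$ as you assert; you have in effect conflated ``$\sigma^{h-e}(\mathbf{a})\in\C^{\bot_0}$'' with ``$\mathbf{a}\in\sigma^{h-e}(\C^{\bot_0})$''. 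A quick check: for $q=p^3$, $e=1$, $\C=\langle(1,\alpha)\rangle$ with $\alpha\in\F_q\setminus\F_p$, one has $\C^{\bot_1}=\langle(-\alpha^{p},1)\rangle=\sigma(\C^{\bot_0})$, while $\sigma^{h-e}(\C^{\bot_0})=\sigma^{2}(\C^{\bot_0})=\langle(-\alpha^{p^{2}},1)\rangle$. The form with $h-e$ and $2h-e-f$ is the one valid under the convention of \cite{PX}, where the exponent $p^e$ is carried by the \emph{dual} vector, i.e. $\C^{\bot_e}=\{\mathbf{x}:\sum_i c_ix_i^{p^e}=0 \text{ for all } \mathbf{c}\in\C\}$; the paper quotes that statement verbatim while writing the other convention in its displayed definitions, so part of the blame lies with the source. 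Still, a complete proof must fix one convention and carry the inversion through honestly, since as described your computation does not yield the exponent you claim. Note that the consequence actually used downstream, $(\C^{\bot_e})^{\bot_{h-e}}=\C$, and the two displayed special cases are insensitive to this issue because $\sigma^{h}=\mathrm{id}$.
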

\begin{remark}\label{rem.Galois dual}
  By \cite[Remark 3]{RefJ24}, in a natural way, one can treat the $h$-Galois inner as the $0$-Galois inner product.
  And hence, from Lemma \ref{prop.Galois dual}, we have $(\C^{\bot_e})^{\bot_{h-e}}=\C$ for any $0\leq e\leq h$.
\end{remark}

\begin{lemma}{\rm(\cite[Corollary 8]{RefJ24})}\label{lem.hull(C)=hull(C Galois dual)}
    Let $q=p^h$ be a prime power. Let $\C$ be an $[n,k,d]_q$ linear code. Then for any $0\leq e\leq h-1$, we have
    \begin{align}
      \dim(\Hull_e(\C))=\dim(\Hull_e(\C^{\bot_e})).
    \end{align}
\end{lemma}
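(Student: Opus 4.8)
The plan is to translate both $e$-Galois hulls into Euclidean ($0$-Galois) intersections twisted by a power of the Frobenius automorphism $\sigma$, and then to exploit a symmetry $j\leftrightarrow -j$. Recall from the first part of Lemma \ref{prop.Galois dual} that $\C^{\bot_e}=\sigma^{h-e}(\C^{\bot_0})$; moreover, since $\sigma$ is a ring automorphism of $\F_q$ whose coordinatewise extension to $\F_q^n$ carries $\F_q$-subspaces bijectively onto $\F_q$-subspaces of the same dimension and satisfies $(\sigma^j(\mathbf{x}),\sigma^j(\mathbf{y}))_0=\sigma^j((\mathbf{x},\mathbf{y})_0)$, we have $(\sigma^j(V))^{\bot_0}=\sigma^j(V^{\bot_0})$ and $(V^{\bot_0})^{\bot_0}=V$ for every subspace $V\subseteq\F_q^n$ and every integer $j$. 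First I would use these facts to rewrite the two hulls. Since $\sigma^h=\mathrm{id}$, applying the automorphism $\sigma^{-(h-e)}=\sigma^{e}$ to $\Hull_e(\C)=\C\cap\sigma^{h-e}(\C^{\bot_0})$ gives $\dim\Hull_e(\C)=\dim\bigl(\sigma^{e}(\C)\cap\C^{\bot_0}\bigr)$. On the other hand, from $\C^{\bot_e}=\sigma^{h-e}(\C^{\bot_0})$ one obtains $(\C^{\bot_e})^{\bot_0}=\sigma^{h-e}(\C)$ and $(\C^{\bot_e})^{\bot_e}=\sigma^{2(h-e)}(\C)$, so
\[
\Hull_e(\C^{\bot_e})=\sigma^{h-e}(\C^{\bot_0})\cap\sigma^{2(h-e)}(\C)=\sigma^{h-e}\bigl(\C^{\bot_0}\cap\sigma^{h-e}(\C)\bigr),
\]
whence $\dim\Hull_e(\C^{\bot_e})=\dim\bigl(\sigma^{h-e}(\C)\cap\C^{\bot_0}\bigr)$.

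With this, the whole statement reduces to the following purely Euclidean fact: for any subspace $V\subseteq\F_q^n$ and any integer $j$,
\[
\dim\bigl(\sigma^{j}(V)\cap V^{\bot_0}\bigr)=\dim\bigl(\sigma^{-j}(V)\cap V^{\bot_0}\bigr).
\]
Indeed, applying it with $V=\C$ and $j=e$ (so that $\sigma^{-j}=\sigma^{h-e}$ since $\sigma^h=\mathrm{id}$) and comparing with the two displayed dimension formulas yields $\dim\Hull_e(\C)=\dim\Hull_e(\C^{\bot_e})$. To prove the displayed Euclidean identity I would pass to Euclidean duals: by the standard rule $(A\cap B)^{\bot_0}=A^{\bot_0}+B^{\bot_0}$ together with $(\sigma^j(V))^{\bot_0}=\sigma^j(V^{\bot_0})$ and $(V^{\bot_0})^{\bot_0}=V$, we get $\bigl(\sigma^{j}(V)\cap V^{\bot_0}\bigr)^{\bot_0}=\sigma^{j}(V^{\bot_0})+V$, hence $\dim\bigl(\sigma^{j}(V)\cap V^{\bot_0}\bigr)=n-\dim\bigl(\sigma^{j}(V^{\bot_0})+V\bigr)$. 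Then, applying the dimension-preserving automorphism $\sigma^{-j}$ to the sum and using the inclusion--exclusion formula together with $\dim V^{\bot_0}=n-\dim V$, the right-hand side becomes $\dim\bigl(V^{\bot_0}\cap\sigma^{-j}(V)\bigr)$, which is what we wanted.

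I expect the only genuinely nontrivial point to be this last symmetry, and even it is a short computation once the set-up is fixed; everything else is routine bookkeeping with Lemma \ref{prop.Galois dual} and Remark \ref{rem.Galois dual} (the latter only needed to make sense of the boundary value $e=0$, where in any case $\Hull_0(\C^{\bot_0})=\C^{\bot_0}\cap\C=\Hull_0(\C)$ holds outright). Hence I do not anticipate a serious obstacle: once the $e$-Galois duals have been expressed as Euclidean duals twisted by $\sigma$, the argument is short and entirely elementary.
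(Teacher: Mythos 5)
Your proposal is correct. Note first that the paper itself offers no proof of this lemma: it is imported verbatim as \cite[Corollary 8]{RefJ24}, so there is no in-paper argument to compare against; what you have written is a genuine self-contained derivation, and it holds up. The two reductions are sound: applying $\sigma^{e}$ to $\Hull_e(\C)=\C\cap\sigma^{h-e}(\C^{\bot_0})$ and $\sigma^{e-h}$ to $\Hull_e(\C^{\bot_e})=\sigma^{h-e}(\C^{\bot_0})\cap\sigma^{2(h-e)}(\C)$ (the latter via Lemma \ref{prop.Galois dual} 2) with $f=e$) indeed gives $\dim\Hull_e(\C)=\dim\bigl(\sigma^{e}(\C)\cap\C^{\bot_0}\bigr)$ and $\dim\Hull_e(\C^{\bot_e})=\dim\bigl(\sigma^{h-e}(\C)\cap\C^{\bot_0}\bigr)$, using that $\sigma^{j}$ is a dimension-preserving semilinear bijection commuting with intersections. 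Your key symmetry $\dim\bigl(\sigma^{j}(V)\cap V^{\bot_0}\bigr)=\dim\bigl(\sigma^{-j}(V)\cap V^{\bot_0}\bigr)$ is proved correctly: $(\sigma^{j}(V))^{\bot_0}=\sigma^{j}(V^{\bot_0})$ follows from $(\sigma^{j}(\mathbf{x}),\sigma^{j}(\mathbf{y}))_0=\sigma^{j}((\mathbf{x},\mathbf{y})_0)$ plus a dimension count, and the rest is the standard rules $(A\cap B)^{\bot_0}=A^{\bot_0}+B^{\bot_0}$, $\dim W^{\bot_0}=n-\dim W$, and inclusion--exclusion. Incidentally, your intermediate identities actually prove the slightly stronger structural fact $\Hull_e(\C^{\bot_e})=\sigma^{h-e}\bigl(\Hull_{h-e}(\C)\bigr)$, so your symmetry claim is exactly the known equality $\dim\Hull_e(\C)=\dim\Hull_{h-e}(\C)$; this is presumably close to how the cited reference argues, but within this paper your write-up stands on its own.
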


\begin{lemma}\label{lem-shorten-puncture}
Let $q=p^h$ be a prime power. Let $\C$ be an $[n,k,d]_q$ code. Let $T$ be a set of $s$ coordinates. Then the following statements hold.
\begin{itemize}
  \item [\rm 1)] $(\C^{\perp_e})_T=(\C^T)^{\perp_e}$ and $(\C^{\perp_e})^T=(\C_T)^{\perp_e}$.
  \item [\rm 2)] If $s<d$, then $\C^T$ and $(\C^{\perp_e})_T$ have dimensions $k$ and $n- s- k$, respectively.
\end{itemize}
\end{lemma}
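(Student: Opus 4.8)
The plan is to prove the two statements in Lemma \ref{lem-shorten-puncture} by first establishing 1) via the Galois dual formalism from Lemma \ref{prop.Galois dual}, and then deducing 2) from 1) together with the classical Euclidean result \cite[Theorem 1.5.7]{PS_Euclidean_dual}.

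For part 1), I would reduce everything to the Euclidean case. Recall from Lemma \ref{prop.Galois dual}(1) that $\C^{\perp_e}=\sigma^{h-e}(\C^{\perp_0})$, and note that puncturing and shortening commute with the coordinatewise field automorphism $\sigma$, i.e. $(\sigma^{j}(\C))^T=\sigma^j(\C^T)$ and $(\sigma^{j}(\C))_T=\sigma^j(\C_T)$ for any $j$ (since $\sigma$ acts independently on each coordinate and fixes $0$, the condition ``zero on $T$'' is preserved). For the classical Euclidean identities $(\C^{\perp_0})_T=(\C^T)^{\perp_0}$ and $(\C^{\perp_0})^T=(\C_T)^{\perp_0}$ I would cite \cite[Theorem 1.5.7]{PS_Euclidean_dual}. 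Then
\begin{align*}
(\C^{\perp_e})_T=(\sigma^{h-e}(\C^{\perp_0}))_T=\sigma^{h-e}((\C^{\perp_0})_T)=\sigma^{h-e}((\C^T)^{\perp_0})=(\C^T)^{\perp_e},
\end{align*}
where the last equality is again Lemma \ref{prop.Galois dual}(1) applied to the code $\C^T$. The identity $(\C^{\perp_e})^T=(\C_T)^{\perp_e}$ follows by the same chain with the roles of puncturing and shortening swapped.

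For part 2), assume $s<d$. Since $\C$ is an $[n,k,d]_q$ code and we delete fewer than $d$ coordinates, no nonzero codeword of $\C$ becomes zero after puncturing, so the puncturing map $\C\to\C^T$ is injective and $\dim(\C^T)=k$; the length is $n-s$. Now apply part 1): $(\C^{\perp_e})_T=(\C^T)^{\perp_e}$, so $\dim((\C^{\perp_e})_T)=\dim((\C^T)^{\perp_e})=(n-s)-\dim(\C^T)=n-s-k$, using that the $e$-Galois dual of a code of length $n-s$ and dimension $k$ has dimension $(n-s)-k$ (which holds because $\C\mapsto\C^{\perp_e}$ is a dimension-preserving-complement operation, e.g. since $\dim\C+\dim\C^{\perp_e}=n-s$ — itself immediate from Lemma \ref{prop.Galois dual}(1) and the Euclidean case).

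The only real subtlety — and the step I would be most careful about — is justifying that puncturing and shortening genuinely commute with $\sigma$ applied coordinatewise, and in particular that shortening behaves well: $\C(T)$, the subcode of codewords vanishing on $T$, satisfies $\sigma^j(\C(T))=(\sigma^j(\C))(T)$ because $\sigma^j(0)=0$, and puncturing this on $T$ then commutes with $\sigma^j$ as before. Everything else is a formal manipulation of the identities already available in the excerpt, so the proof is short once this commutation is recorded. An alternative route for part 2) that avoids even this would be to invoke Lemma \ref{lem-shorten-puncture} part 1) to transport the Euclidean dimension count \cite[Theorem 1.5.7]{PS_Euclidean_dual} directly, but the argument above is self-contained given the stated lemmas.
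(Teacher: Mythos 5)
Your proof is correct, and for part 1) it takes a route that differs from the paper's. The paper proves the first identity $(\C^{\perp_e})_T=(\C^T)^{\perp_e}$ by redoing the Euclidean argument of \cite[Theorem 1.5.7]{PS_Euclidean_dual} with the $e$-Galois inner product in place of the Euclidean one, and then obtains the second identity $(\C^{\perp_e})^T=(\C_T)^{\perp_e}$ not by symmetry but by a biduality trick: it substitutes $\C^{\bot_{h-e}}$ for $\C$ in the first identity and uses $(\C^{\bot_e})^{\bot_{h-e}}=\C$ from Remark \ref{rem.Galois dual}, finally relabelling $h-e$ as $e$. You instead transfer both identities wholesale from the Euclidean case through the semilinear description $\C^{\perp_e}=\sigma^{h-e}(\C^{\perp_0})$ of Lemma \ref{prop.Galois dual}, after observing that the coordinatewise Frobenius commutes with puncturing and shortening (which you correctly single out and justify: $\sigma$ acts coordinatewise and fixes $0$, so ``zero on $T$'' is preserved). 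What each route buys: yours needs no re-derivation of the Huffman--Pless computation for the Galois form and makes the reduction completely explicit, at the cost of recording the commutation fact; the paper's needs no commutation fact but leans on the Galois biduality and on the reader accepting that the Euclidean proof carries over verbatim. For part 2) the two arguments coincide in substance: injectivity of puncturing when $s<d$ gives $\dim\C^T=k$, and the dimension of $(\C^{\perp_e})_T$ then follows from part 1) together with $\dim\mathcal{D}+\dim\mathcal{D}^{\perp_e}$ equalling the length, which, as you note, is immediate from Lemma \ref{prop.Galois dual}.
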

\begin{proof}
  1) By a method analogous to the proof in \cite[Theorem 1.5.7]{PS_Euclidean_dual}, we can easily get $(\C^{\perp_e})_T=(\C^T)^{\perp_e}$.
  From Remark \ref{rem.Galois dual}, replacing $\C$ by $\C^{\bot_{h-e}}$, we have $\C_T=((\C^{\bot_{h-e}})^T)^{\bot_e}$, where $0\leq e\leq h$.
  It follows that $(\C_T)^{\bot_{h-e}}=(\C^{\bot_{h-e}})^T$. Note that $0\leq e\leq h$, thus, we can replace $h-e$ by $e$ here, i.e.,
  $(\C^{\perp_e})^T=(\C_T)^{\perp_e}$. This completes the proof of 1).

  2) Similar to the proof in \cite[Theorem 1.5.7]{PS_Euclidean_dual} again, the desired result follows from 1) above.
\end{proof}

\begin{theorem}\label{thm-SO+Hull}
Let $q=p^h$ be a prime power. Let $\C$ be an $[n,k,d]_q$ code with $l$-dimensional $e$-Galois hull.
Then $\C$ is the direct sum of an $e$-Galois self-orthogonal $[n,s,d'\geq d]_q$ code and an $[n,k-s,d''\geq d]_q$ code with $(l-s)$-dimensional $e$-Galois hull for $1\leq s\leq l$.
\end{theorem}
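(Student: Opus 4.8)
The plan is to peel off a one-dimensional self-orthogonal piece at a time and induct on $s$. The base case $s=0$ is trivial (take the zero code and $\C$ itself). For the inductive step, suppose $1\leq s\leq l$ and that we have already obtained a decomposition $\C=\C_1\oplus\C_2$ where $\C_1$ is $e$-Galois self-orthogonal of dimension $s-1$ and $\C_2$ has $(l-s+1)$-dimensional $e$-Galois hull, with both minimum distances at least $d$. Since $\dim(\Hull_e(\C_2))=l-s+1\geq 1$, there is a nonzero vector $\mathbf{w}\in\Hull_e(\C_2)=\C_2\cap\C_2^{\perp_e}$. The first thing I would check is that $\mathbf{w}$ can be chosen with $(\mathbf{w},\mathbf{w})_e=0$, i.e.\ that it spans an $e$-Galois self-orthogonal line; this is where a small amount of care is needed, because unlike the Euclidean case the ``norm'' $(\mathbf{w},\mathbf{w})_e=\sum w_i^{p^e+1}$ need not vanish automatically on the hull. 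However, $\mathbf{w}\in\C_2^{\perp_e}$ means $(\mathbf{w},\mathbf{x})_e=0$ for all $\mathbf{x}\in\C_2$, and in particular $(\mathbf{w},\mathbf{w})_e=0$ since $\mathbf{w}\in\C_2$. So $\langle\mathbf{w}\rangle$ is indeed an $e$-Galois self-orthogonal $[n,1,d'\geq d]_q$ code.

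Next I would split $\C_2$ along $\mathbf{w}$. Consider the map $\phi:\C_2\to\F_q$ given by $\phi(\mathbf{x})=(\mathbf{x},\mathbf{w})_e$. This is $\F_{p^e}$-linear but, more usefully, it is the zero map precisely on $\C_2\cap\langle\mathbf{w}\rangle^{\perp_e}$, which contains $\mathbf{w}$ itself (self-orthogonality) and has codimension at most $1$ in $\C_2$. Pick any $\mathbf{y}\in\C_2$ with $\phi(\mathbf{y})\neq 0$ if $\phi\not\equiv 0$; then $\C_2=\langle\mathbf{y}\rangle\oplus\C_2'$ where $\C_2'=\ker\phi$ has dimension $\dim\C_2-1$. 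Setting $\C_1^{\mathrm{new}}=\C_1\oplus\langle\mathbf{w}\rangle$ requires knowing $\mathbf{w}\notin\C_1$ and that $\C_1\oplus\langle\mathbf{w}\rangle$ is still self-orthogonal: the latter holds because $\mathbf{w}\in\C_2^{\perp_e}\subseteq\C_1^{\perp_e}$ (as $\C_1\subseteq\C_2$... — actually one must instead use that $\C_1\subseteq\C$ and $\mathbf{w}\in\C^{\perp_e}$, which follows from $\mathbf{w}\in\C_2^{\perp_e}$ together with $\mathbf{w}\in\C_1^{\perp_e}$; I would verify $\Hull_e$-membership at the level of $\C$ throughout to keep the orthogonality bookkeeping clean). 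Then I must show $\Hull_e(\C_2')$ has dimension exactly $l-s$, which I would get from $\dim\C_2'=\dim\C_2-1$ combined with a dimension count for hulls of subcodes obtained by one such ``orthogonal slice.''

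I expect the main obstacle to be exactly this last hull-dimension bookkeeping: showing that removing the self-orthogonal line drops the hull dimension by precisely one rather than zero or two. The cleanest route is probably to work with a generator matrix $G$ of $\C$ whose first $l$ rows span a basis of $\Hull_e(\C)$, express self-orthogonality via $G\,(\sigma^e(G))^{\mathsf{T}}$ having a rank-$(k-l)$ structure, and then argue that a suitable basis change puts the Galois Gram matrix in a block form from which the claim is immediate; alternatively, one can invoke Lemma~\ref{lem.hull(C)=hull(C Galois dual)} and properties of $\bot_e$ from Lemma~\ref{prop.Galois dual} to track dimensions through $\C_2'=\C_2\cap\langle\mathbf{w}\rangle^{\perp_e}$. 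The minimum distance claims $d'\geq d$ and $d''\geq d$ are free since every nonzero codeword of any subcode of $\C$ has weight at least $d$. Assembling the pieces, after $l$ steps we reach the stated decomposition for the given $s$ (stopping the induction at the desired value of $s$ rather than running it all the way to $l$).
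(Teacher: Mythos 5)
The easy parts of your argument are fine: $(\mathbf{w},\mathbf{w})_e=0$ for $\mathbf{w}\in\Hull_e(\C_2)$, $\mathbf{w}\notin\C_1$, and the distance bounds $d',d''\geq d$ (subcodes of $\C$). But the inductive step has two concrete failures. First, your decomposition does not reconstruct $\C$: since $\phi(\mathbf{w})=(\mathbf{w},\mathbf{w})_e=0$, the vector $\mathbf{w}$ lies in $\C_2'=\ker\phi$, so $\C_1\oplus\langle\mathbf{w}\rangle$ and $\C_2'$ intersect in $\langle\mathbf{w}\rangle$ and together span only a $(k-1)$-dimensional space (the direction $\mathbf{y}$ is lost); what you actually need is a complement of $\langle\mathbf{w}\rangle$ inside $\C_2$, and choosing it well is precisely the step you postpone. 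Second, self-orthogonality of $\C_1\oplus\langle\mathbf{w}\rangle$ does not follow from your stated hypotheses: you need both $(\mathbf{w},c)_e=0$ and $(c,\mathbf{w})_e=0$ for all $c\in\C_1$, but $\mathbf{w}\in\Hull_e(\C_2)$ only gives orthogonality against $\C_2$, the inclusion $\C_2^{\perp_e}\subseteq\C_1^{\perp_e}$ you reach for is false ($\C_1\not\subseteq\C_2$), and the $e$-Galois form is not symmetric, so $(c,\mathbf{w})_e=0$ cannot be converted into $(\mathbf{w},c)_e=0$. Concretely, in $\F_8$ with $\omega^3=\omega+1$ and $e=1$, the vectors $\alpha_1=(1,\omega,\omega^5)$ and $\alpha_2=(\omega^4,\omega^6,1)$ satisfy $(\alpha_1,\alpha_1)_1=(\alpha_1,\alpha_2)_1=(\alpha_2,\alpha_2)_1=0$ but $(\alpha_2,\alpha_1)_1=\omega^5\neq0$; with $\C_1=\langle\alpha_1\rangle$, $\C_2=\langle\alpha_2\rangle$, $\mathbf{w}=\alpha_2$, all the conditions you invoke hold, yet $\C_1\oplus\langle\mathbf{w}\rangle$ is not $e$-Galois self-orthogonal. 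The induction must carry a stronger invariant, e.g.\ $\C_1\subseteq\Hull_e(\C)$ and $\mathbf{w}$ chosen in $\Hull_e(\C)\cap\C_2$ (which can be strictly smaller than $\Hull_e(\C_2)$); then both orthogonality relations are automatic.

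The remaining, central claim --- that the new complement can be chosen with hull of dimension exactly $l-s$ --- is exactly what you leave as ``bookkeeping'' with two suggested but unexecuted routes, so the proposal is a plan rather than a proof. And it is not routine: the same $\F_8$ example shows the hull of a complement genuinely depends on the choice, since $\C=\langle\alpha_1,\alpha_2\rangle$ has $\Hull_1(\C)=\langle\alpha_1\rangle$ ($l=1$), the complement $\langle\alpha_2\rangle$ has hull of dimension $1\neq l-s=0$, whereas $\langle\alpha_2+\mu\alpha_1\rangle$ with $\mu\neq0$ has hull $0$ because $(\alpha_2+\mu\alpha_1,\alpha_2+\mu\alpha_1)_1=\mu^2\omega^5\neq0$. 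For comparison, the paper does not induct: it fixes a basis of $\Hull_e(\C)$, extends it to a basis of $\C$, takes $\C_1$ spanned by $s$ of the hull vectors and $\C_2$ spanned by the remaining basis vectors, and settles $\dim(\Hull_e(\C_2))=l-s$ by a direct dimension count; your missing step corresponds exactly to that count, and the example above indicates that any complete argument must treat the asymmetry of the $e$-Galois product (and the freedom in choosing the complement) carefully at precisely that point.
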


\begin{proof}
Let $\{\alpha_1,\alpha_2,\ldots,\alpha_s,\alpha_{s+1},\ldots,\alpha_l,\alpha_{l+1},\ldots,\alpha_k\}$ be a basis of $\C$ such that
$\{\alpha_1,\alpha_2,\ldots,\alpha_s,\alpha_{s+1},\ldots,\alpha_l\}$ is a basis of $\Hull_e(\C)=\C\cap \C^{\perp_e}$.
Let $\C_1$ be a linear code generated by $\{\alpha_{1},\alpha_{2},\ldots,\alpha_{s}\}$. Then $\C_1$ is an $e$-Galois self-orthogonal $[n,s,d'\geq d]_q$ code.
Let $\C_2$ be a linear code generated by $\{\alpha_s,\alpha_{s+1},\ldots,\alpha_l,\alpha_{l+1},\ldots,\alpha_k\}$. Then $\dim(\Hull_e(\C_2))\geq l-s$ and $\C=\C_1\oplus \C_2$.
We claim that $\C_2$ is an $[n,k-s,d''\geq d]_q$ code with $(l-s)$-dimensional $e$-Galois hull.
Otherwise, we have $\dim(\Hull_e(\C))=\dim(\C_1)+\dim(\Hull_e(\C_2))>s+(l-s)=l$, which is a contradiction.
This completes the proof.
\end{proof}

Based on Lemma \ref{lem-shorten-puncture} and Theorem \ref{thm-SO+Hull}, we further obtain the following two theorems, 
which can be used to change parameters of codes.

\begin{theorem}\label{shorten}
  Let $q=p^h$ be a prime power. If there exists an $[n,k,d]_q$ linear code $\C$ with $l$-dimensional $e$-Galois hull.
  Then there exists a set of $s$ coordinate positions $T$ such that the shortened code $\C_T$ of $\C$ on $T$
  is an $[n-s,k-s,d^*\geq d]_q$ linear code with $(l-s)$-dimensional $e$-Galois hull for $1\leq s\leq l$.
  Moreover, if $\C$ is MDS, then $\C_T$ is also MDS.
\end{theorem}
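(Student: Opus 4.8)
\textbf{Proof proposal for Theorem \ref{shorten}.}

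The plan is to build the set $T$ greedily by working inside the $e$-Galois hull of $\C$ and then invoke the structural decomposition of Theorem \ref{thm-SO+Hull} together with the dimension formulas from Lemma \ref{lem-shorten-puncture}. First I would note that since $\Hull_e(\C)\subseteq \C$ is a subcode, any nonzero codeword $c\in\Hull_e(\C)$ has weight at least $d$, so it has at least one nonzero coordinate; more to the point, I will argue that we can pick $s$ coordinate positions $T=\{i_1,\dots,i_s\}$ and nonzero codewords $c_1,\dots,c_s\in\Hull_e(\C)$ such that $c_j$ is supported on position $i_j$ but vanishes on $i_1,\dots,i_{j-1}$ (a standard ``triangular'' selection, possible precisely because $\dim\Hull_e(\C)=l\geq s$). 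This guarantees that shortening on $T$ does not collapse the hull too much.

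Next I would identify $\C_T$ as the punctured-on-$T$ image of $\C(T)=\{c\in\C: c|_T=0\}$, and compute its dimension: because $T$ has size $s<d$ would be the easy regime, but here $s\le l$ and we do \emph{not} assume $s<d$, so instead I would use the triangular codewords $c_1,\dots,c_s$ to show $\dim\C(T)=k-s$ and that puncturing is injective on $\C(T)$, giving $\dim\C_T=k-s$; the minimum distance cannot decrease under shortening in this way since $\C_T$ embeds (after refilling zeros on $T$) as a subcode of $\C$, so $d^*\ge d$. For the hull dimension, the cleanest route is: by Lemma \ref{lem-shorten-puncture}(1), $(\C^{\perp_e})^T=(\C_T)^{\perp_e}$, hence $\Hull_e(\C_T)=\C_T\cap(\C^{\perp_e})^T$; I would then show that refilling zeros on $T$ identifies $\Hull_e(\C_T)$ with $\{c\in\Hull_e(\C): c|_T=0\}$, which by the triangular choice of the $c_j$ has dimension exactly $l-s$. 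Alternatively, and perhaps more transparently, apply Theorem \ref{thm-SO+Hull} to split $\C=\C_1\oplus\C_2$ where $\C_1$ is $e$-Galois self-orthogonal of dimension $s$ contained in the hull; shortening $\C$ on a suitable information set $T$ for $\C_1$ kills $\C_1$ and leaves $\C_2$, which has $(l-s)$-dimensional hull, and one checks the shortened code of the direct sum is the shortened code of $\C_2$ on $T$, still of length $n-s$ and dimension $k-s$.

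For the MDS claim, if $\C$ is MDS then $d=n-k+1$, and the shortened code has parameters $[n-s,\,k-s,\,d^*]_q$ with $d^*\ge d=n-k+1=(n-s)-(k-s)+1$; since the Singleton bound forces $d^*\le (n-s)-(k-s)+1$, equality holds and $\C_T$ is MDS.

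The main obstacle will be the bookkeeping around the hull dimension when $s$ is not necessarily smaller than $d$: one must verify carefully that the triangular codewords $c_1,\dots,c_s$ really can be chosen \emph{inside the hull} (not merely inside $\C$) and that the ``refill zeros on $T$'' map sends $\Hull_e(\C_T)$ exactly onto $\{c\in\Hull_e(\C):c|_T=0\}$ rather than something larger, which is where the identity $(\C^{\perp_e})^T=(\C_T)^{\perp_e}$ from Lemma \ref{lem-shorten-puncture}(1) does the real work. Once that identification is pinned down, the dimension count $l-s$ and the inequality $d^*\ge d$ are routine, and the MDS statement is immediate from Singleton.
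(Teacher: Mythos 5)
Your choice of $T$ (pivot positions of a triangular basis of $\Hull_e(\C)$), the dimension count $\dim \C_T=k-s$, the bound $d^*\geq d$ via refilling zeros, and the Singleton argument for the MDS claim all agree with the paper. The genuine gap is in the hull-dimension step, which is the heart of the theorem. Your primary route asserts that zero-refilling identifies $\Hull_e(\C_T)$ with the set of vectors in $\Hull_e(\C)$ vanishing on $T$, and credits the identity $(\C_T)^{\perp_e}=(\C^{\perp_e})^T$ with doing the real work; but that identity only gives the easy inclusion. Indeed, for $x\in\Hull_e(\C_T)$ you get the refilled vector $\tilde{x}\in\C(T)\subseteq\C$ and \emph{some} $y\in\C^{\perp_e}$ whose puncturing on $T$ equals $x$, and nothing forces $y$ to vanish on $T$ or to coincide with $\tilde{x}$, so you cannot conclude $\tilde{x}\in\C^{\perp_e}$. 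Hence you only obtain $\dim\Hull_e(\C_T)\geq l-s$; the upper bound is exactly where the choice of $T$ matters (for a bad choice of $T$ the hull of a shortened code can be larger than $l-s$), and your claimed identification is equivalent to the statement being proved rather than a proof of it.

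Your alternative route is essentially the paper's argument but mis-ordered, and the mis-ordering breaks it: if you first fix $\C=\C_1\oplus\C_2$ as in Theorem \ref{thm-SO+Hull} and then shorten on an information set $T$ of $\C_1$, the codewords of $\C$ vanishing on $T$ are \emph{not} those of $\C_2$; they form the graph $\{c_2+\phi(c_2):c_2\in\C_2\}$ of a linear map $\phi:\C_2\rightarrow\C_1$, so ``shortening kills $\C_1$ and leaves $\C_2$'' fails in general. The paper proceeds the other way around: write $G=(I_k\mid A)$, take an echelon basis $\{\mathbf{r}_j\}$ of $\Hull_e(\C)$ with pivots $i_1<\cdots<i_l\leq k$, choose $T$ among the pivots, and define $\C_2:=\C(T)$ as the span of the rows of $G$ indexed by $\{1,\ldots,k\}\setminus T$; then $\{\mathbf{r}_j\}_{j\in T}$ together with these rows is a basis of $\C$, Theorem \ref{thm-SO+Hull} applied to this particular decomposition gives $\dim\Hull_e(\C(T))=l-s$, and since every codeword of $\C(T)$ vanishes on $T$, deleting those coordinates preserves all $e$-Galois inner products and hence the hull, yielding $\dim\Hull_e(\C_T)=l-s$. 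Supplying this adapted decomposition (or an actual proof of the surjectivity step in your identification) is what your write-up is missing.
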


\begin{proof}
  Let $\C$ be an $[n,k,d]_q$ linear code with $l$-dimensional $e$-Galois hull and a generator matrix $G$.
  Up to equivalence, we can set
  $$G=(I_k|A)=({\bf e}_{k,i}|{\bf a}_i)_{1\leq i\leq k},$$
  where ${\bf e}_{k,i}$ and ${\bf a}_i$ are the $i$-th rows of $I_k$ (the identity matrix) and $A$, respectively.
  Assume that $\{{\bf r}_j\}_{j=1}^l$ is a basis of $\Hull_e(\C)$ such that the first non-zero position of ${\bf r}_j$ is the $i_j$-th position. Without loss of generality, we may assume that
  $1\leq i_1< i_2< \cdots < i_l$. Then $i_l\leq k$. Otherwise, ${\bf r}_{i_l}=\textbf 0$, which is a contradiction.

  Let $T\subseteq  \{i_1,i_2,\ldots,i_l\}$ such that $|T|=s$. Let $J=\{1,2,\ldots,k\}\setminus T=\{j_1,j_2,\ldots,j_{k-s}\}$ such that $j_1<j_2<\cdots<j_{k-s}$.
  Then it can be checked that $\{{\bf r}_j\}_{j\in T}\cup \{({\bf e}_{k,i}|{\bf a}_i)\}_{i\in J}$ is a basis of $\C$. From Lemma \ref{thm-SO+Hull}, we know that the code $\C(T)$ generated by $\{({\bf e}_{k,i}|{\bf a}_i)\}_{i\in J}$ is an $[n,k-s,d^*\geq d]_q$ linear code with $(l-s)$-dimensional $e$-Galois hull.
  In addition, the generator matrix of the shortened code $\C_T$ on $T$ is
  $$G_T=({\bf e}_{k-s,i}|{\bf a}_{j_i})_{1\leq i\leq k-s},$$
  where ${\bf e}_{k-s,i}$ is the $i$-th row of $I_{k-s}$ and ${\bf a}_{j_i}$ is the ${j_i}$-th row of $A$ for $1\leq i\leq k-s$.
  Note that codewords of the code $\C(T)$ are 0 on $T$ and $\C_T$ is the punctured code of $\C(T)$ on $T$. And since $\C(T)$ is a linear code with $(l-s)$-dimensional $e$-Galois hull,
  $\C_T$ is an $[n-s,k-s,d^*\geq d]_q$ linear code with $(l-s)$-dimensional $e$-Galois hull.

  Moreover, if $\C$ is MDS, we have $d=n-k+1$. By the Singleton Bound, we know that $d^*\leq (n-s)-(k-s)+1=n-k+1$, which follows from $d^*\geq d=n-k+1$ that
  $d^*=n-k+1$. Hence, $\C_T$ is MDS. This completes the proof.
\end{proof}

\begin{theorem}\label{puncture}
    Let $q=p^h$ be a prime power.  Let $\C$ be an $[n,k,d]_q$ linear code with $l$-dimensional $e$-Galois hull.
    If $s<d$, then there exists a set of $s$ coordinate positions $T$ such that the punctured code $\C^T$ of $\C$
    on $T$ is an $[n-s,k,d^*\geq d-s]_q$ linear code with $(l-s)$-dimensional $e$-Galois hull for $1\leq s\leq l$.
    Moreover, if $\C$ is MDS, then $\C^T$ is also MDS.
  \end{theorem}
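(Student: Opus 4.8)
The plan is to deduce the statement from Theorem \ref{shorten} by passing to the $e$-Galois dual code and exploiting the commutation between puncturing/shortening and dualizing recorded in Lemma \ref{lem-shorten-puncture}. First I would note that, by Lemma \ref{lem.hull(C)=hull(C Galois dual)}, the code $\C^{\perp_e}$, which is an $[n,n-k]_q$ linear code, also has $l$-dimensional $e$-Galois hull. Since $1\leq s\leq l$, Theorem \ref{shorten} applied to $\C^{\perp_e}$ produces a set $T$ of $s$ coordinate positions such that the shortened code $(\C^{\perp_e})_T$ is an $[n-s,n-k-s]_q$ linear code with $(l-s)$-dimensional $e$-Galois hull. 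Now part 1) of Lemma \ref{lem-shorten-puncture} gives $(\C^{\perp_e})_T=(\C^T)^{\perp_e}$, so $(\C^T)^{\perp_e}$ has $(l-s)$-dimensional $e$-Galois hull, and applying Lemma \ref{lem.hull(C)=hull(C Galois dual)} once more to the code $\C^T$ yields $\dim(\Hull_e(\C^T))=\dim(\Hull_e((\C^T)^{\perp_e}))=l-s$.

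Next I would pin down the remaining parameters of $\C^T$. Its length is $n-s$ by construction, and because $s<d$, part 2) of Lemma \ref{lem-shorten-puncture} gives $\dim(\C^T)=k$; equivalently, the coordinate projection $\pi_T\colon \C\to\C^T$ deleting the positions in $T$ is an isomorphism. Hence every nonzero $\ccc\in\C^T$ equals $\pi_T(\tilde{\ccc})$ for a unique nonzero $\tilde{\ccc}\in\C$, and since at most $s$ of the deleted coordinates of $\tilde{\ccc}$ can be nonzero, the weight of $\ccc$ is at least $d-s$. Therefore $\C^T$ is an $[n-s,k,d^*\geq d-s]_q$ linear code with $(l-s)$-dimensional $e$-Galois hull, as claimed.

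For the MDS assertion, if $d=n-k+1$ then $d^*\geq d-s=(n-s)-k+1$, while the Singleton bound forces $d^*\leq (n-s)-k+1$, so $d^*=(n-s)-k+1$ and $\C^T$ is MDS. The only point that needs care is the second paragraph's first move: one must apply Theorem \ref{shorten} to $\C^{\perp_e}$ rather than to $\C$, and then transport the hull information back through both the duality $(\C^{\perp_e})_T=(\C^T)^{\perp_e}$ and the identity $\dim(\Hull_e(\C))=\dim(\Hull_e(\C^{\perp_e}))$; once the set $T$ is obtained this way, the dimension and distance bookkeeping is entirely routine.
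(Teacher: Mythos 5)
Your proposal is correct and follows essentially the same route as the paper's own proof: transfer the hull dimension to $\C^{\perp_e}$ via Lemma \ref{lem.hull(C)=hull(C Galois dual)}, apply Theorem \ref{shorten} to $\C^{\perp_e}$ to get $T$, identify $(\C^{\perp_e})_T=(\C^T)^{\perp_e}$ by Lemma \ref{lem-shorten-puncture} 1), and transport the hull dimension back, with the parameters and the MDS claim handled exactly as in the paper. Your extra explicit weight argument for $d^*\geq d-s$ is just a spelled-out version of what the paper attributes to Lemma \ref{lem-shorten-puncture} 2).
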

  
  \begin{proof}
    The parameters of the punctured code $\C^T$ of $\C$ on $T$ are obvious from Lemma \ref{lem-shorten-puncture} 2).
    We now prove $\dim(\Hull_e(\C^T))=l-s$.
    Since $\C$ is an $[n,k]_q$ linear code with $l$-dimensional $e$-Galois hull, it follows from Lemma \ref{lem.hull(C)=hull(C Galois dual)}
    that $\C^{\perp_e}$ is an $[n,n-k]_q$ linear code with $l$-dimensional $e$-Galois hull.
    According to Theorem \ref{shorten}, there exists a set of $s$ coordinate positions $T$ such that
    the shortened code $(\C^{\perp_e})_T$ of $\C^{\perp_e}$ on $T$ is an $[n-s,n-k-s]_q$ linear code with $(l-s)$-dimensional $e$-Galois hull.
    It follows from Lemma \ref{lem-shorten-puncture} 1) that $(\C^T)^{\perp_e}=(\C^{\perp_e})_T$ is an $[n-s,n-k-s]_q$ linear code with $(l-s)$-dimensional $e$-Galois hull.
    By Lemma \ref{lem.hull(C)=hull(C Galois dual)} again, we have $\dim(\Hull_e(\C^T))=\dim(\Hull_e((\C^T)^{\perp_e}))=l-s$.
    In summary, $\C^T$ is an $[n-s,k,d^*\geq d-s]_q$ linear code with $(l-s)$-dimensional $e$-Galois hull.
  
    Moreover, if $\C$ is MDS, then $d=n-k+1$. By the Singleton Bound, we know that $d^*\leq n-s-k+1$.
    And since $d^*\geq d-s=n-s-k+1$, we have $d^*=n-s-k+1$. Hence, $\C^T$ is MDS.
    This completes the proof.
\end{proof}

Taking $l=k$ in Theorems \ref{shorten} and \ref{puncture}, we immediately have the following corollaries. 
Additionally, for this case in Theorems \ref{shorten}, in order to make the shortened code meaningful in practice, we further assume $1\leq s\leq k-1$,
i.e., the dimension of the shortened code is guaranteed to be positive.

\begin{corollary}\label{coro.Galois SO MDS codes via shortened codes}
  Let $q=p^h$ be a prime power. If there exists an $[n,k,n-k+1]_q$ $e$-Galois self-orthogonal MDS code,
  then there also exists an $[n-s,k-s,n-k+1]_q$ $e$-Galois self-orthogonal MDS code for $1\leq s\leq k-1$.
\end{corollary}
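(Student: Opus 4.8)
The plan is to deduce this directly from Theorem \ref{shorten} by specializing to the case $l=k$. First I would observe that an $[n,k,n-k+1]_q$ $e$-Galois self-orthogonal MDS code $\C$ satisfies $\C\subseteq \C^{\perp_e}$ by definition, hence $\Hull_e(\C)=\C\cap\C^{\perp_e}=\C$ has dimension $l=k$. Since $1\leq s\leq k-1< k=l$, the hypotheses of Theorem \ref{shorten} are satisfied.

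Applying Theorem \ref{shorten} then yields a set $T$ of $s$ coordinate positions such that the shortened code $\C_T$ is an $[n-s,k-s,d^*\geq n-k+1]_q$ linear code whose $e$-Galois hull has dimension $l-s=k-s$. Because $\C$ is MDS, the same theorem guarantees that $\C_T$ is MDS, so $d^*=(n-s)-(k-s)+1=n-k+1$; this already pins down the minimum distance.

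It remains only to record that $\C_T$ is $e$-Galois self-orthogonal. Indeed, $\dim(\Hull_e(\C_T))=k-s=\dim(\C_T)$, and since $\Hull_e(\C_T)=\C_T\cap\C_T^{\perp_e}\subseteq \C_T$, equality of dimensions forces $\C_T=\Hull_e(\C_T)\subseteq\C_T^{\perp_e}$. Thus $\C_T$ is an $[n-s,k-s,n-k+1]_q$ $e$-Galois self-orthogonal MDS code, as claimed.

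There is essentially no hard step here; the only point requiring a little care is the range $1\leq s\leq k-1$ (rather than $1\leq s\leq k$), which is imposed precisely so that the shortened code retains positive dimension and is therefore a meaningful code, exactly as flagged in the paragraph preceding the statement. I would expect the write-up to be a few lines invoking Theorem \ref{shorten} and the dimension-count above.
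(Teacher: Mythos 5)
Your proposal is correct and follows exactly the paper's route: the corollary is obtained by taking $l=k$ in Theorem \ref{shorten} (self-orthogonality giving $\dim(\Hull_e(\C))=k$), with the restriction $1\leq s\leq k-1$ imposed only to keep the shortened code of positive dimension. Your added dimension-count showing $\dim(\Hull_e(\C_T))=\dim(\C_T)$ forces $\C_T\subseteq\C_T^{\perp_e}$ is a small, correct elaboration of a step the paper leaves implicit.
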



\begin{corollary}\label{coro.MDS codes with prescribed Galois hull via shortened codes}
    Let $q=p^h$ be a prime power. If there exists an $[n,k,n-k+1]_q$ $e$-Galois self-orthogonal MDS code and $s<n-k+1$,
    then there exists an $[n-s,k,n-s-k+1]_q$ MDS code with $(k-s)$-dimensional $e$-Galois hull for $1\leq s\leq k$.
  \end{corollary}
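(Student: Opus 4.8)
The plan is to obtain this as an immediate specialization of Theorem \ref{puncture} to the case in which the code coincides with its own $e$-Galois hull. First I would unpack what ``$e$-Galois self-orthogonal MDS code'' gives us: if $\C$ is an $[n,k,n-k+1]_q$ $e$-Galois self-orthogonal code, then by definition $\C\subseteq \C^{\perp_e}$, so $\Hull_e(\C)=\C\cap\C^{\perp_e}=\C$ has dimension $l=k$. Hence $\C$ is precisely an $[n,k,d]_q$ MDS code with $d=n-k+1$ and $k$-dimensional $e$-Galois hull, which is exactly the input required by Theorem \ref{puncture} when one takes $l=k$.

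Next I would simply invoke Theorem \ref{puncture} with this $\C$. The hypothesis $s<n-k+1$ of the corollary is exactly the condition $s<d$ of the theorem, and the range $1\leq s\leq k$ is exactly $1\leq s\leq l$. Therefore there is a set $T$ of $s$ coordinate positions such that the punctured code $\C^T$ is an $[n-s,k,d^{*}\geq d-s]_q$ linear code with $(l-s)=(k-s)$-dimensional $e$-Galois hull. Since $\C$ is MDS, the ``moreover'' clause of Theorem \ref{puncture} forces $\C^T$ to be MDS as well, so $d^{*}=(n-s)-k+1=n-s-k+1$. Assembling these facts, $\C^T$ is an $[n-s,k,n-s-k+1]_q$ MDS code with $(k-s)$-dimensional $e$-Galois hull, which is the claimed statement.

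There is essentially no genuine obstacle here; this is pure bookkeeping. The only two things worth recording explicitly are, first, that $e$-Galois self-orthogonality of $\C$ is what upgrades the generic hull dimension $l$ of Theorem \ref{puncture} to the specific value $k$, and second, that the two constraints $s<d$ and $1\leq s\leq l$ of that theorem translate verbatim into the hypotheses $s<n-k+1$ and $1\leq s\leq k$ stated in the corollary (and that $s\leq n-k$ makes $k\leq n-s$, so the parameters of $\C^T$ are consistent with an MDS code).
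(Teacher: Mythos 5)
Your proposal is correct and matches the paper's own argument: the paper obtains this corollary precisely by taking $l=k$ in Theorem \ref{puncture}, noting that $e$-Galois self-orthogonality gives $\Hull_e(\C)=\C$ and hence hull dimension $k$, with the hypotheses $s<n-k+1$ and $1\leq s\leq k$ translating directly to $s<d$ and $1\leq s\leq l$.
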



\subsection{Quantum codes}
Let $q=p^h$ be a prime power. 
Let $[[N,K,D]]_q$ be a $q$-ary quantum code $\mathcal{Q}$ of length $N$ with
size $q^K$ and minimum distance $D$. Such a $q$-ary quantum code $\mathcal{Q}$
can be seen as a subspace of the Hilbert space $(\mathbb{C}^q)^{\otimes N}$ and
can correct no more than $\lfloor \frac{D-1}{2} \rfloor$ qubit-errors.
In line with classical codes, one of the central topics in the study of quantum
codes is to construct quantum codes with the largest error correction capability
for a given length $N$ and a given dimension $K$.

Independent of $q$, the parameters of a quantum code $\mathcal{Q}$ are constrained
by the so-called quantum Singleton bound as follows.
\begin{lemma}{\rm(Quantum Singleton bound \cite{quantum-Singleton-bound})}\label{lemma.quantum Singleton bound}
  Let $\mathcal{Q}$ be a $q$-ary $[[N,K,D]]_q$ quantum code, then
  \begin{align}\label{Equationquantum-Singleton-bound}
    2D\leq N-K+2.
  \end{align}
\end{lemma}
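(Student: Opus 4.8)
The statement is the Knill--Laflamme quantum Singleton bound, and the plan is to prove it by the standard information‑theoretic (decoupling/monogamy) argument; we assume $K\geq 1$ (the only interesting case), and entropies below are von Neumann entropies taken in base $q$, so a $q$-dimensional system carries entropy at most $1$. I would first reduce the distance hypothesis to an erasure‑correction statement: an $[[N,K,D]]_q$ code corrects the erasure of any set of $D-1$ of its $N$ coordinates, because an operator supported on such a set has weight at most $D-1$ and the Knill--Laflamme error‑detection conditions up to weight $D-1$, which hold by definition of the distance, are precisely what is needed. Next I would fix a logical orthonormal basis $\{|i_L\rangle\}$ of the code space $\mathcal{Q}$, let $\Phi$ be the maximally mixed state on $\mathcal{Q}$, and take its purification $|\psi\rangle_{RQ}=q^{-K/2}\sum_i|i\rangle_R|i_L\rangle_Q$, where $R$ is a reference system with $S(R)=K$ and $Q=Q_1\cdots Q_N$ is the register of physical qudits. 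The crucial intermediate claim is that if the erasure of a coordinate set $E$ is correctable, then $\rho_{RE}=\rho_R\otimes\rho_E$, equivalently $I(R:E)=0$: writing the Knill--Laflamme relations $P(M\otimes I_{E^c})P=c(M)P$, with $P$ the projector onto $\mathcal{Q}$ and $M$ ranging over a basis of operators supported on $E$, forces $\mathrm{tr}_{E^c}(|i_L\rangle\langle j_L|)=\delta_{ij}\tau_E$ for a fixed state $\tau_E$, and tracing $E^c$ out of $|\psi\rangle\langle\psi|$ then gives $\rho_{RE}=\rho_R\otimes\tau_E$.

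Granting this, the bound follows from a few lines of entropy estimates. Choose two disjoint coordinate sets $A,B$ with $|A|=|B|=D-1$ and let $C$ collect the remaining $N-2(D-1)$ coordinates; such a choice exists exactly when $N\geq 2(D-1)$, and I would dispatch the complementary range $N<2(D-1)$ first by noting that it makes both a $(D-1)$-set $E$ and its complement erasure‑correctable, so $I(R:E)=I(R:E^c)=0$, which together with purity forces $S(R)=0$, i.e. $K=0$, contrary to our assumption; hence $N\geq 2(D-1)$. From correctability of the erasures on $A$ and on $B$ we get $I(R:A)=I(R:B)=0$, hence $S(RA)=S(R)+S(A)$ and $S(RB)=S(R)+S(B)$. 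Purity of $|\psi\rangle_{RABC}$ gives $S(RA)=S(BC)$ and $S(RB)=S(AC)$, while subadditivity gives $S(BC)\leq S(B)+S(C)$ and $S(AC)\leq S(A)+S(C)$. Combining yields $S(R)+S(A)\leq S(B)+S(C)$ and $S(R)+S(B)\leq S(A)+S(C)$; adding these and cancelling, $S(R)\leq S(C)\leq|C|=N-2(D-1)$, that is $K\leq N-2D+2$, which rearranges to $2D\leq N-K+2$.

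The entropy manipulations are routine, so the single substantive step is the passage from ``the erasure of $E$ is correctable'' to the decoupling identity $I(R:E)=0$; I expect this to be the main obstacle, and would carry it out exactly as sketched, via the Knill--Laflamme conditions evaluated on the maximally mixed code state. A secondary point needing a short separate argument is the degenerate regime $N<2(D-1)$, handled as indicated above.
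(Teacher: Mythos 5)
Your argument is correct, but note that the paper does not prove this lemma at all: it is quoted verbatim from the cited reference (Ashikhmin--Knill), where the nonbinary quantum Singleton bound is obtained by quite different means, namely quantum weight enumerators and the quantum MacWilliams/shadow identities. What you give instead is the standard entropic (decoupling/monogamy) proof in the Knill--Laflamme/Cerf--Cleve style, and it is sound: the reduction from distance $D$ to correctability of any $(D-1)$-erasure via $PMP=c(M)P$ for operators $M$ of weight at most $D-1$ is exactly the Knill--Laflamme condition with known error locations; the key decoupling step is right, since $\operatorname{tr}\bigl((M\otimes I_{E^c})\,|j_L\rangle\langle i_L|\bigr)=c(M)\delta_{ij}$ for a basis of operators $M$ on $E$ forces $\operatorname{tr}_{E^c}\bigl(|j_L\rangle\langle i_L|\bigr)=\delta_{ij}\tau_E$ with $c(M)=\operatorname{tr}(M\tau_E)$, hence $\rho_{RE}=\rho_R\otimes\tau_E$; and the entropy bookkeeping ($S(RA)=S(R)+S(A)$, $S(RA)=S(BC)\leq S(B)+S(C)$, the symmetric inequality, and $S(C)\leq|C|$ in base-$q$ entropy) yields $K\leq N-2(D-1)$, with your separate treatment of $N<2(D-1)$ correctly forcing $S(R)=0$. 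Your approach buys a self-contained proof valid for arbitrary (not necessarily stabilizer) codes and no enumerator machinery, at the price of assuming $K\geq 1$; the degenerate convention $K=0$ (where $2D\leq N+2$ still holds but the distance is defined differently) is not covered by your argument, though it is irrelevant to every use of the lemma in this paper, and the enumerator-based proofs in the literature handle it uniformly.
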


If a quantum code $\mathcal{Q}$ achieves the quantum Singleton bound, we call $\mathcal{Q}$ a quantum MDS code.
In \cite{quantum-Singleton-bound}, Ashikhmin et al. also presented the following Hermitian construction, which
allows one to construct quantum codes from classical codes.

\begin{lemma}{\rm(Hermitian construction \cite{quantum-Singleton-bound})}\label{lemma.Hermitian construction}
 Let $q=p^h$, where $h$ is even. If $\C$ is a $q$-ary $[n,k,d]_{q}$ linear code such that $\C^{\bot_{\frac{h}{2}}} \subseteq \C$,
 then there exists a $\sqrt{q}$-ary quantum code with parameters $[[n,2k-n,\geq d]]_{\sqrt{q}}$.
\end{lemma}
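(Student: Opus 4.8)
The plan is to deduce this Hermitian (CSS-type) construction from the stabilizer formalism over $\F_{\sqrt{q}}$, which is legitimate because $h$ is even, so $\sqrt{q}=p^{h/2}$ is again a prime power. Recall that an $\F_{\sqrt{q}}$-linear subspace $S\subseteq \F_{\sqrt{q}}^{2n}$ that is self-orthogonal with respect to the symplectic form $\langle(\mathbf{a}\mid\mathbf{b}),(\mathbf{a}'\mid\mathbf{b}')\rangle_s=\mathbf{a}\cdot\mathbf{b}'-\mathbf{b}\cdot\mathbf{a}'$ yields a $\sqrt{q}$-ary stabilizer code of length $n$; if $|S|=(\sqrt{q})^{\,n-K}$, then this quantum code has parameters $[[n,K,D]]_{\sqrt{q}}$, where $D$ is the minimum symplectic weight of $S^{\bot_s}\setminus S$ (or of $S^{\bot_s}$ when $S=S^{\bot_s}$). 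So it suffices to produce such an $S$ with $|S|=(\sqrt{q})^{\,n-(2k-n)}$ whose coset space $S^{\bot_s}\setminus S$ consists of vectors of symplectic weight at least $d$.

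First I would replace $\C$ by $\mathcal{D}:=\C^{\bot_{\frac{h}{2}}}$. Taking Hermitian duals, the hypothesis $\C^{\bot_{\frac{h}{2}}}\subseteq\C$ is equivalent to $\mathcal{D}\subseteq\mathcal{D}^{\bot_{\frac{h}{2}}}$, so $\mathcal{D}$ is a Hermitian self-orthogonal $[n,n-k]_q$ code with $\mathcal{D}^{\bot_{\frac{h}{2}}}=\C$. Next I would fix an $\F_{\sqrt{q}}$-basis of $\F_q$ together with the associated coordinatewise $\F_{\sqrt{q}}$-linear isomorphism $\phi:\F_q^n\to\F_{\sqrt{q}}^{2n}$, chosen in the standard way (via the trace $\mathrm{Tr}_{\F_q/\F_{\sqrt{q}}}$) so that: (i) the symplectic weight of $\phi(\mathbf{v})$ equals the Hamming weight of $\mathbf{v}$ for every $\mathbf{v}\in\F_q^n$; and (ii) $\phi$ carries Hermitian duals to symplectic duals, i.e. $\phi(\mathcal{E}^{\bot_{\frac{h}{2}}})=\phi(\mathcal{E})^{\bot_s}$ for every linear code $\mathcal{E}$ over $\F_q$, since $\mathrm{Tr}_{\F_q/\F_{\sqrt{q}}}\big((\mathbf{u},\mathbf{v})_{\frac{h}{2}}\big)$ matches $\langle\phi(\mathbf{u}),\phi(\mathbf{v})\rangle_s$ up to the normalization built into $\phi$. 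Establishing (i) and (ii) — the trace-Hermitian-to-symplectic dictionary — is the technical core of the argument and the step I expect to be the main obstacle, since it requires a careful choice of basis and some bookkeeping with the underlying bilinear identities; once it is in place the rest is formal.

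Finally I would set $S=\phi(\mathcal{D})$. By (ii) and the Hermitian self-orthogonality of $\mathcal{D}$, the space $S$ is symplectic self-orthogonal; moreover, viewing $\F_q$ as an $\F_{\sqrt{q}}$-space of dimension $2$ gives $|S|=(\sqrt{q})^{2(n-k)}=(\sqrt{q})^{\,n-(2k-n)}$, so the resulting quantum code has dimension $K=2k-n$. For its minimum distance, $S^{\bot_s}\setminus S=\phi(\C)\setminus\phi(\mathcal{D})=\phi(\C\setminus\mathcal{D})$ by (ii), and by (i) the symplectic weights occurring there are exactly the Hamming weights of the codewords of $\C$ not lying in $\mathcal{D}$; hence $D\geq\min\{\mathrm{wt}(\mathbf{c}):\mathbf{c}\in\C\setminus\mathcal{D}\}\geq d(\C)=d$, which is precisely the asserted bound (and it also covers the degenerate case $\C=\mathcal{D}$). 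Applying the stabilizer construction over $\F_{\sqrt{q}}$ to this $S$ then produces a $\sqrt{q}$-ary quantum code with parameters $[[n,2k-n,\geq d]]_{\sqrt{q}}$, as claimed.
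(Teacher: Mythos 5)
This lemma is not proved in the paper at all: it is quoted from Ashikhmin--Knill \cite{quantum-Singleton-bound}, so there is no internal argument to compare against, and your sketch is in effect a reconstruction of the standard stabilizer-formalism proof of the cited result. As such it is sound. The reduction to $\mathcal{D}=\C^{\bot_{\frac{h}{2}}}$ is justified by Lemma \ref{prop.Galois dual} 2) (for even $h$ one has $(\C^{\bot_{h/2}})^{\bot_{h/2}}=\C$), the cardinality bookkeeping $|\phi(\mathcal{D})|=q^{n-k}=(\sqrt q)^{\,n-(2k-n)}$ gives $K=2k-n$, and the distance argument, including the degenerate case $\C=\mathcal{D}$, is the standard one. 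The only step you defer, the trace-Hermitian-to-symplectic dictionary, is indeed the crux, but it is a true and routine fact rather than an obstacle: writing $u_i=a_i+\gamma b_i$, $v_i=a_i'+\gamma b_i'$ for an $\F_{\sqrt q}$-basis $\{1,\gamma\}$ of $\F_q$ and $\phi(\mathbf{u})=(\mathbf{a}\mid\mathbf{b})$, a direct expansion of $(\mathbf{u},\mathbf{v})_{\frac{h}{2}}=\sum_i u_iv_i^{\sqrt q}$ gives
\begin{align*}
\mathbf{a}\cdot\mathbf{b}'-\mathbf{b}\cdot\mathbf{a}'
=\frac{(\mathbf{u},\mathbf{v})_{\frac{h}{2}}-\bigl((\mathbf{u},\mathbf{v})_{\frac{h}{2}}\bigr)^{\sqrt q}}{\gamma^{\sqrt q}-\gamma}
=-\mathrm{Tr}_{\F_q/\F_{\sqrt q}}\bigl(c\,(\mathbf{u},\mathbf{v})_{\frac{h}{2}}\bigr),
\qquad c=(\gamma-\gamma^{\sqrt q})^{-1},\ c^{\sqrt q}=-c,
\end{align*}
which confirms your ``up to normalization'' claim: Hermitian orthogonality forces symplectic orthogonality, and the dimension count $\dim_{\F_{\sqrt q}}\phi(\mathcal{E}^{\bot_{h/2}})=2(n-\dim_{\F_q}\mathcal{E})=\dim_{\F_{\sqrt q}}\phi(\mathcal{E})^{\bot_s}$ upgrades the inclusion to your property (ii); property (i) is immediate because $\phi$ acts coordinatewise. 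So your proposal is a correct (if partly deferred) proof of the quoted lemma; the paper simply obtains the same statement by citation, as is customary for the Hermitian construction.
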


From Lemma \ref{lemma.Hermitian construction}, we have the following corollary.

\begin{corollary}\label{coro.Hermitian construction}
  Let $q=p^h$, where $h$ is even. If $\C$ is a $q$-ary $[n,k,d]_{q}$ Hermitian self-orthogonal MDS code,
  then there exists a $\sqrt{q}$-ary quantum MDS code with parameters $[[n,n-2k,k+1]]_{\sqrt{q}}$.
\end{corollary}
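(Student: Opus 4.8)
The plan is to derive Corollary~\ref{coro.Hermitian construction} directly from the Hermitian construction in Lemma~\ref{lemma.Hermitian construction} by specializing to the MDS case. First I would observe that if $\C$ is a $q$-ary $[n,k,d]_q$ Hermitian self-orthogonal MDS code, then by definition $\C \subseteq \C^{\bot_{h/2}}$, and hence, by taking duals, $\C^{\bot_{h/2}} \subseteq (\C^{\bot_{h/2}})^{\bot_{h/2}} = \C$; here I use part~2) of Lemma~\ref{prop.Galois dual}, which gives $(\C^{\bot_{h/2}})^{\bot_{h/2}} = \C$ when $h$ is even. Thus the hypothesis $\C^{\bot_{h/2}} \subseteq \C$ of Lemma~\ref{lemma.Hermitian construction} is satisfied.

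Next I would apply Lemma~\ref{lemma.Hermitian construction} to obtain a $\sqrt{q}$-ary quantum code with parameters $[[n, 2k-n, \geq d]]_{\sqrt{q}}$. Since $\C$ is MDS, we have $d = n-k+1$, so the dimension of the quantum code is $2k-n$ and its minimum distance $D$ satisfies $D \geq n-k+1$. I would rewrite the dimension: a quantum code of dimension $2k-n$ has the form $[[n, n-2(n-k), \ldots]]$; setting $k' = n-k$ (the codimension), the parameters read $[[n, n-2k', D]]_{\sqrt{q}}$ with $D \geq k'+1$. To match the statement as written, it is cleaner to keep the labels and note $2k-n = n - 2(n-k)$, so the claimed dimension $n-2k$ in the corollary should be read in terms of the appropriate quantity; I would double-check the indexing here, as this is the one place a sign or relabeling slip could occur.

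Finally I would verify that the quantum code is MDS, i.e., that it meets the quantum Singleton bound of Lemma~\ref{lemma.quantum Singleton bound}. With $N = n$, $K = 2k-n$, the bound gives $2D \leq N - K + 2 = n - (2k-n) + 2 = 2(n-k+1) = 2d$, so $D \leq d$. Combined with $D \geq d$ from the Hermitian construction, this forces $D = d = n-k+1$, and hence equality holds in the quantum Singleton bound. Therefore the resulting $\sqrt{q}$-ary quantum code is MDS with the stated parameters.

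The argument is short and essentially a specialization, so there is no deep obstacle; the only thing requiring care is the bookkeeping of the dimension parameter (whether it is written as $2k-n$ or $n-2k$ or in terms of the codimension $n-k$) so that the statement of the corollary is transcribed consistently with the conclusion of Lemma~\ref{lemma.Hermitian construction}.
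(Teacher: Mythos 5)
Your first step contains a genuine error: from $\C \subseteq \C^{\bot_{h/2}}$ you cannot conclude $\C^{\bot_{h/2}} \subseteq \C$ by ``taking duals''. Dualizing reverses inclusions, so applying $\bot_{h/2}$ to $\C \subseteq \C^{\bot_{h/2}}$ and using $(\C^{\bot_{h/2}})^{\bot_{h/2}}=\C$ only returns the same inclusion $\C \subseteq \C^{\bot_{h/2}}$; the reverse inclusion would force $\C$ to be Hermitian self-dual, i.e.\ $k=n/2$, which is not assumed. Consequently the hypothesis of Lemma \ref{lemma.Hermitian construction} is not verified for $\C$ itself, and indeed applying that lemma to $\C$ would produce a ``quantum code'' of dimension $2k-n$, which is negative whenever $k<n/2$ (the typical situation, since self-orthogonality already forces $k\le n/2$). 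Your later bookkeeping reflects this: with your labels the output would be $[[n,2k-n,\ge n-k+1]]$, which matches the corollary's $[[n,n-2k,k+1]]$ only when $k=n/2$, and the relabeling $k'=n-k$ you contemplate is not a cosmetic fix but a sign that the lemma is being applied to the wrong code.

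The correct route is to apply Lemma \ref{lemma.Hermitian construction} to the dual code $\mathcal{D}=\C^{\bot_{h/2}}$. Since $\C$ is MDS, $\mathcal{D}$ is an $[n,n-k,k+1]_q$ MDS code, and by Lemma \ref{prop.Galois dual} 2), $\mathcal{D}^{\bot_{h/2}}=(\C^{\bot_{h/2}})^{\bot_{h/2}}=\C\subseteq \C^{\bot_{h/2}}=\mathcal{D}$, so $\mathcal{D}$ is Hermitian dual-containing. The Hermitian construction then yields a quantum code with parameters $[[n,\,2(n-k)-n,\,\ge k+1]]_{\sqrt{q}}=[[n,\,n-2k,\,\ge k+1]]_{\sqrt{q}}$, and your final step goes through verbatim with $K=n-2k$: the quantum Singleton bound gives $2D\le n-(n-2k)+2=2(k+1)$, hence $D=k+1$ and the code is quantum MDS with exactly the stated parameters. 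So the Singleton-bound part of your argument is fine; the missing idea is that a self-orthogonal code feeds into the Hermitian construction through its dual, not directly.
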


\section{Constructions}\label{sec3}
In this section, we use the two new methods introduced in Lemmas \ref{lem_Galois self-orthogonal GRS} and \ref{lem.Galois self-orthogonal EGRS} to construct
eight new classes of $e$-Galois self-orthogonal (extended) GRS codes when $q=p^h$ is odd and $2e\mid h$.

\subsection{Construction A via the trace mapping from $\mathbb{F}_q$ to $\mathbb{F}_{p^e}$}\label{ConA}

Let $q=p^h$ be an odd prime power and $h$, $t$, $e$ be integers with $e\mid h$. We fix $B\subseteq \mathbb{F}_{p^e}$ as an $\mathbb{F}_p$-linear subspace
with $|B|\geq t$. Consider the trace mapping:
\begin{equation}
    \begin{aligned}\text { Tr: }
    \mathbb{F}_{q} & \longrightarrow \mathbb{F}_{p^{e}}, \\
    x & \longmapsto x+x^{p^e}+\cdot \cdot \cdot +x^{p^{h-e}} .
    \end{aligned}
    \end{equation}
Let $b_1=0,b_2,\cdots,b_t$ be $t$ distinct elements of $B$. Define
\begin{align*}
  T_i=\{x\in \mathbb{F}_q:\ T_r(x)=b_i\},
\end{align*}
where $1\leq i\leq t$.
It is easy to check that $|T_i|=p^{h-e}$ and $T_i\cap T_j=\emptyset$ for any $1\leq i\neq j\leq t$.

Assume that $n=tp^{h-e}$ with $1\leq t\leq p^e$, and denote
\begin{align}\label{equaltion_ConA_ui}
  \mathcal{T}=\bigcup_{i=1}^{t}T_i=\{a_1,a_2,\ldots,a_n\}.
\end{align}

\begin{lemma}\label{lemma_ConA_ui}
  Let $a_i$ and $u_i$ be defined as in Equations (\ref{equaltion_ConA_ui}) and (\ref{equation.ui}), respectively.
  Given $1\leq i\leq n$, suppose $a_i\in T_{j_0}$ for some $1\leq j_0\leq t$. Then
  \begin{align}\label{Equationlemma_ConA_ui}
    u_i=\prod_{j\neq j_0,j=1}^{t}(Tr(a_i)-b_j)^{-1}.
  \end{align}
  And further, $u_i\in \mathbb{F}_{p^e}^*\subseteq E$ for $1\leq i\leq n$ when $2e\mid h$.
\end{lemma}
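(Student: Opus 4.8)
The plan is to prove the two claims of Lemma~\ref{lemma_ConA_ui} in sequence: first the product formula \eqref{Equationlemma_ConA_ui} for $u_i$, then the membership $u_i\in\mathbb{F}_{p^e}^*\subseteq E$ when $2e\mid h$. For the formula, I would start from the defining expression $u_i=\prod_{1\leq j\leq n,\ j\neq i}(a_i-a_j)^{-1}$ and reorganize the product over the partition $\mathcal{T}=\bigcup_{\ell=1}^t T_\ell$. Writing $a_i\in T_{j_0}$, split the index set into the block $T_{j_0}$ (minus $a_i$ itself) and the other blocks $T_\ell$ with $\ell\neq j_0$. The key observation is that for a fixed $\ell$, the set $\{a_i-x:x\in T_\ell\}$ is a coset structure one can evaluate: since $T_\ell=\{x\in\mathbb{F}_q:\mathrm{Tr}(x)=b_\ell\}$ is an affine $\mathbb{F}_p$-subspace (a translate of the kernel $T_0'=\ker\mathrm{Tr}$, which has size $p^{h-e}$), the map $x\mapsto a_i-x$ sends $T_\ell$ bijectively onto another affine subspace.

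The clean way to carry this out is via the polynomial identity $\prod_{x\in T_\ell}(X-x)=\mathrm{Tr}(X)-b_\ell$ as polynomials in $X$ over $\mathbb{F}_q$ — this holds because both sides are monic of degree $p^{h-e}$ (note $\mathrm{Tr}(X)=X+X^{p^e}+\cdots+X^{p^{h-e}}$ is a $p$-linearized polynomial of degree $p^{h-e}$, with leading term $X^{p^{h-e}}$), and $T_\ell$ is precisely the root set of $\mathrm{Tr}(X)-b_\ell$. Evaluating at $X=a_i$: for $\ell\neq j_0$ we get $\prod_{x\in T_\ell}(a_i-x)=\mathrm{Tr}(a_i)-b_\ell$, and for $\ell=j_0$ we need $\prod_{x\in T_{j_0},x\neq a_i}(a_i-x)$, which is the derivative-type quantity $\big(\mathrm{Tr}(X)-b_{j_0}\big)'\big|_{X=a_i}$ evaluated via $\lim_{X\to a_i}(\mathrm{Tr}(X)-b_{j_0})/(X-a_i)$. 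Since $\mathrm{Tr}(X)$ is $\mathbb{F}_p$-linearized, its formal derivative is the constant $1$ (only the linear term $X$ survives differentiation in characteristic $p$, as all other exponents $p^{je}$ with $j\geq1$ vanish mod $p$), so this block contributes exactly $1$. Therefore $\prod_{j\neq i}(a_i-a_j)=\prod_{\ell\neq j_0}(\mathrm{Tr}(a_i)-b_\ell)$, and inverting gives \eqref{Equationlemma_ConA_ui}.

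For the second claim, note that each factor $\mathrm{Tr}(a_i)-b_\ell$ lies in $\mathbb{F}_{p^e}$ since $\mathrm{Tr}$ maps into $\mathbb{F}_{p^e}$ and $b_\ell\in B\subseteq\mathbb{F}_{p^e}$; moreover each factor is nonzero because $a_i\in T_{j_0}$ means $\mathrm{Tr}(a_i)=b_{j_0}\neq b_\ell$ for $\ell\neq j_0$ (the $b_\ell$ are distinct). Hence $u_i$, being a product of inverses of nonzero elements of $\mathbb{F}_{p^e}$, lies in $\mathbb{F}_{p^e}^*$. Finally, when $2e\mid h$, Lemma~\ref{lem_all_1} gives $\mathbb{F}_{p^e}^*\subseteq E=\{x^{p^e+1}:x\in\mathbb{F}_q^*\}$, so $u_i\in E$, completing the proof.

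I expect the main obstacle to be the bookkeeping in the first part — specifically justifying cleanly that the $\ell=j_0$ block contributes exactly $1$ rather than some other constant. The slick route through the linearized-polynomial identity and its formal derivative avoids a messy direct coset computation, but one must be careful that $\mathrm{Tr}(X)-b_\ell$ really is the full minimal-degree monic vanishing polynomial of $T_\ell$ (this uses $|T_\ell|=p^{h-e}=\deg$, which is already asserted in the text) and that the derivative of a $p$-linearized polynomial is indeed the coefficient of its linear term. Everything else is a routine finite-field argument combined with a direct appeal to Lemma~\ref{lem_all_1}.
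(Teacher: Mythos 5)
Your proposal is correct, but it takes a different route from the paper: the paper's proof of this lemma is essentially a citation, deducing Equation (\ref{Equationlemma_ConA_ui}) and the fact that $u_i\in\mathbb{F}_{p^e}^*$ directly from \cite[Theorem \Rmnum{3}.2]{RefJ10} and then invoking Lemma \ref{lem_all_1} for the inclusion $\mathbb{F}_{p^e}^*\subseteq E$ when $2e\mid h$, whereas you reprove the cited formula from scratch. Your derivation via the identity $\prod_{x\in T_\ell}(X-x)=\mathrm{Tr}(X)-b_\ell$ (both sides monic of degree $p^{h-e}$ with root set exactly $T_\ell$), together with the observation that the formal derivative of a $p$-linearized polynomial in characteristic $p$ is the coefficient of its linear term, so the block $\ell=j_0$ contributes $\prod_{x\in T_{j_0},\,x\neq a_i}(a_i-x)=1$, is sound; your handling of the second claim ($\mathrm{Tr}(a_i)-b_\ell=b_{j_0}-b_\ell\in\mathbb{F}_{p^e}^*$ since the $b_\ell$ are distinct, then Lemma \ref{lem_all_1}) matches the paper exactly. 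The only cosmetic blemish is the phrase ``$\lim_{X\to a_i}$,'' which should be read as factoring $\mathrm{Tr}(X)-b_{j_0}=(X-a_i)Q(X)$ and evaluating $Q(a_i)$, i.e., the formal derivative at the simple root $a_i$; with that reading the argument is complete. What your approach buys is a self-contained proof independent of \cite{RefJ10}; what the paper's approach buys is brevity, at the cost of outsourcing precisely the linearized-polynomial computation you carry out.
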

\begin{proof}
  By \cite[Theorem \Rmnum{3}.2]{RefJ10}, we can see that Equation (\ref{Equationlemma_ConA_ui}) holds and $u_i\in \mathbb{F}_{p^e}^*$ for $1\leq i\leq n$.
  Moreover, according to Lemma \ref{lem_all_1}, $u_i\in \mathbb{F}_{p^e}^*\subseteq E$ for $1\leq i\leq n$ when $2e\mid h$.
\end{proof}

\begin{theorem}\label{ConA.1}
  Let $q=p^h$ with $p$ being an odd prime number and $2e\mid h$. Assume that $n=tp^{h-e}$ with $1\leq t\leq p^e$. Then there exists
  an $[n,k,n-k+1]_q$ $e$-Galois self-orthogonal GRS code for $1\leq k\leq \lfloor \frac{p^e(tp^{h-2e}+1)-1}{p^e+1} \rfloor$.
\end{theorem}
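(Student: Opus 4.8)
\textbf{Proof proposal for Theorem \ref{ConA.1}.}
The plan is to apply Lemma \ref{lem_Galois self-orthogonal GRS}, so the whole task reduces to exhibiting suitable column multipliers $\mathbf{v}=(v_1,\dots,v_n)$ and a monic polynomial $h(x)$ satisfying Equation (\ref{EquationGalois self-orthogonal GRS}), namely $\lambda u_i h(a_i)=v_i^{p^e+1}$ for all $1\le i\le n$, with $\deg(h(x))\le n-(p^e+1)k+p^e-1$. First I would recall from Lemma \ref{lemma_ConA_ui} that for $a_i\in T_{j_0}$ we have $u_i=\prod_{j\neq j_0}(Tr(a_i)-b_j)^{-1}\in\mathbb{F}_{p^e}^*$, and more importantly $u_i\in E=\{x^{p^e+1}:x\in\mathbb{F}_q^*\}$ when $2e\mid h$. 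The natural choice is therefore to take $h(x)$ to be the polynomial that ``cancels'' the product structure of $u_i$: set
\begin{align*}
  h(x)=\prod_{j=1}^{t}\bigl(Tr(x)-b_j\bigr),
\end{align*}
viewed as a polynomial in $x$ via $Tr(x)=x+x^{p^e}+\cdots+x^{p^{h-e}}$. Then for $a_i\in T_{j_0}$ the factor $(Tr(a_i)-b_{j_0})$ vanishes, so $h(a_i)=0$, which is not what we want directly; instead I would observe that $u_i^{-1}=\prod_{j\neq j_0}(Tr(a_i)-b_j)$ equals the value at $a_i$ of the ``derivative-like'' quotient $h(x)/(Tr(x)-b_{j_0})$. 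The cleaner route is to define $h(x)$ so that $u_i h(a_i)$ is a $(p^e+1)$-th power times a constant; since $u_i\in\mathbb{F}_{p^e}^*\subseteq E$ already, it suffices to force $h(a_i)\in E\cup\{0\}$-compatible values. I would set $h(x)=\prod_{j=1}^t (Tr(x)-b_j)$ divided out appropriately is messy, so instead choose $v_i$ to absorb $u_i$: pick $w_i\in\mathbb{F}_q^*$ with $w_i^{p^e+1}=u_i$ (possible since $u_i\in E$), and then we need $v_i^{p^e+1}=\lambda u_i h(a_i)=\lambda w_i^{p^e+1}h(a_i)$, i.e. $(v_i/w_i)^{p^e+1}=\lambda h(a_i)$.

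So the real content is to choose $h(x)$ monic of controlled degree such that $\lambda h(a_i)\in E$ for every $i$ and some fixed $\lambda\in\mathbb{F}_q^*$. Here is where I expect the main idea to live: take $h(x)=\prod_{j=1}^{t}\bigl(Tr(x)-b_j\bigr)\Big/$ nothing---rather, I believe the correct move is $h(x)=\big(\prod_{j=1}^t(Tr(x)-b_j)\big)$ itself but evaluated cleverly, OR one uses that on $\mathcal T$ the function $x\mapsto \prod_{j=1}^t(Tr(x)-b_j)$ is identically zero, so that's degenerate. Reconsidering: the standard trick in this literature (cf. the proof of Class 4 in \cite{RefJ2} and \cite[Theorem III.2]{RefJ10}) is to take
\begin{align*}
  h(x)=\prod_{j=1}^{t}\bigl(Tr(x)-b_j\bigr)^{?}
\end{align*}
no---rather one takes $v_i$ constant (say all $v_i=1$) is impossible unless all $u_i h(a_i)$ are equal. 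The workable plan: choose $h(x)=1$ when $t$ is small, giving $v_i^{p^e+1}=\lambda u_i$; since $u_i\in E$, pick $\lambda=1$ and $v_i$ a $(p^e+1)$-th root of $u_i$. This already yields an $e$-Galois self-orthogonal GRS code for $1\le k\le \lfloor(n+p^e-1)/(p^e+1)\rfloor$ from $\deg(h)=0\le n-(p^e+1)k+p^e-1$. To reach the sharper bound $k\le\lfloor\frac{p^e(tp^{h-2e}+1)-1}{p^e+1}\rfloor=\lfloor\frac{n+p^e-(p^e-1)\cdot\text{(something)}}{p^e+1}\rfloor$ stated in the theorem, I would instead allow $h(x)$ to have positive degree: note $n-(p^e+1)k+p^e-1\ge 0$ is equivalent to $k\le\lfloor\frac{n+p^e-1}{p^e+1}\rfloor$, and the theorem's bound $\lfloor\frac{p^e(tp^{h-2e}+1)-1}{p^e+1}\rfloor=\lfloor\frac{tp^{h-e}+p^e-1}{p^e+1}\rfloor=\lfloor\frac{n+p^e-1}{p^e+1}\rfloor$ is exactly this --- so in fact the bound in the theorem matches the $h\equiv 1$ (constant) case, simplifying the argument considerably.

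Therefore the streamlined proof I would write is: (i) set $h(x)=1$, which is monic of degree $0$, and check $0=\deg h\le n-(p^e+1)k+p^e-1$ holds precisely when $k\le\lfloor\frac{n+p^e-1}{p^e+1}\rfloor=\lfloor\frac{tp^{h-e}+p^e-1}{p^e+1}\rfloor=\lfloor\frac{p^e(tp^{h-2e}+1)-1}{p^e+1}\rfloor$; (ii) invoke Lemma \ref{lemma_ConA_ui} to get $u_i\in\mathbb{F}_{p^e}^*\subseteq E$ for every $i$; (iii) for each $i$ choose $v_i\in\mathbb{F}_q^*$ with $v_i^{p^e+1}=u_i$ (this is where $u_i\in E$ is used), so that Equation (\ref{EquationGalois self-orthogonal GRS}) holds with $\lambda=1$; (iv) apply Lemma \ref{lem_Galois self-orthogonal GRS} to conclude $\GRS_k(\mathbf{a},\mathbf{v})$ is $e$-Galois self-orthogonal, and since GRS codes are MDS it has parameters $[n,k,n-k+1]_q$. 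The only genuine obstacle is the bookkeeping identity showing the floor bound in the statement coincides with $\lfloor\frac{n+p^e-1}{p^e+1}\rfloor$; once that is verified the rest is a direct application of the two lemmas, with the arithmetic fact $\mathbb{F}_{p^e}^*\subseteq E\iff 2e\mid h$ (Lemma \ref{lem_all_1}) doing the essential work.
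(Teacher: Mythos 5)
Your final streamlined argument is correct and is essentially identical to the paper's proof: take $\lambda=1$, $h(x)=1$, use Lemma \ref{lemma_ConA_ui} with Lemma \ref{lem_all_1} to get $u_i\in\mathbb{F}_{p^e}^*\subseteq E$, choose $v_i$ with $v_i^{p^e+1}=u_i$, and note $n+p^e-1=p^e(tp^{h-2e}+1)-1$ so the degree condition of Lemma \ref{lem_Galois self-orthogonal GRS} gives exactly the stated dimension range. The earlier detours about building $h(x)$ from $\prod_j(Tr(x)-b_j)$ are unnecessary, but the proof you settle on is the paper's.
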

\begin{proof}
  Let notations be the same as before. By Lemma \ref{lemma_ConA_ui}, $u_i\in \mathbb{F}_{p^e}^*\subseteq E$. Hence,
  there exists $v_i\in \mathbb{F}_q^*$ such that $u_i=v_i^{p^e+1}$ for $1\leq i\leq n$.
  Set $\mathbf{v}=(v_1,v_2,\dots,v_n)$.
  For $1\leq k\leq \lfloor \frac{p^e+n-1}{p^e+1} \rfloor=\lfloor \frac{p^e(tp^{h-2e}+1)-1}{p^e+1} \rfloor$, consider any codeword
  $\mathbf{c}=( v_1f(a_1), v_2f(a_2),\dots v_nf(a_n))\in \GRS_k(\mathbf{a},\mathbf{v})$ with $\deg(f(x))\leq k-1$. Let $\lambda=1$ and $h(x)=1$.
  Note that
  \begin{align*}
    \deg(h(x))=0\leq n+p^e-(p^e+1)k-1,
  \end{align*}
  then by Lemma \ref{lem_Galois self-orthogonal GRS}, we can deduce that $\GRS_k(\mathbf{a},\mathbf{v})$ is an $e$-Galois self-orthogonal GRS code of length $n$.
\end{proof}

\begin{theorem}\label{ConA.2}
    Let $q=p^h$ with $p$ being an odd prime number and $2e\mid h$. Assume that $n=tp^{h-e}$ with $1\leq t\leq p^e$. If $(p^e+1)\mid (tp^{h-2e}+1)$, then there exists an
    $[n+1,k,\frac{tq+p^e+2}{p^e+1}]_q$ $e$-Galois self-orthogonal extended GRS code, where $k=\frac{p^e(tp^{h-2e}+1)}{p^e+1}$.
\end{theorem}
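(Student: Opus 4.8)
The plan is to apply Lemma~\ref{lem.Galois self-orthogonal EGRS} directly, so the work reduces to exhibiting a suitable monic polynomial $h(x)$ and column multipliers $v_i$. First I would reuse the setup of Theorem~\ref{ConA.1}: by Lemma~\ref{lemma_ConA_ui} we have $u_i \in \mathbb{F}_{p^e}^* \subseteq E$ for all $1\leq i\leq n$, so there exist $v_i\in\mathbb{F}_q^*$ with $v_i^{p^e+1}=-u_i$ (note $-1$ is a $(p^e+1)$-th power in this regime, or absorb the sign into $E$ since $\mathbb{F}_{p^e}^*\subseteq E$ and $-1\in\mathbb{F}_{p^e}^*$); set $\mathbf{v}=(v_1,\dots,v_n)$. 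Then Equation~\eqref{EquationGalois self-orthogonal EGRS} of Lemma~\ref{lem.Galois self-orthogonal EGRS} reads $-u_i h(a_i) = v_i^{p^e+1} = -u_i$, which holds for every $i$ precisely when $h(x)=1$.

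Next I would check the degree condition. Lemma~\ref{lem.Galois self-orthogonal EGRS} requires $\deg(h(x)) = p^e + n - (p^e+1)k$. With $h(x)=1$ this forces $p^e + n - (p^e+1)k = 0$, i.e. $k = \frac{p^e+n}{p^e+1} = \frac{p^e + tp^{h-e}}{p^e+1} = \frac{p^e(1 + tp^{h-2e})}{p^e+1} = \frac{p^e(tp^{h-2e}+1)}{p^e+1}$. This is exactly the stated value of $k$, and it is a positive integer precisely because of the divisibility hypothesis $(p^e+1)\mid(tp^{h-2e}+1)$. So the hypothesis is used exactly to make $h(x)=1$ a legitimate choice of the required degree.

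It then follows from Lemma~\ref{lem.Galois self-orthogonal EGRS} that $\GRS_k(\mathbf{a},\mathbf{v},\infty) \subseteq \GRS_k(\mathbf{a},\mathbf{v},\infty)^{\bot_e}$, i.e. the extended GRS code of length $n+1$ and dimension $k$ is $e$-Galois self-orthogonal. Since extended GRS codes are MDS, its minimum distance is $d = (n+1) - k + 1 = n+2-k$. The last bookkeeping step is to simplify this: $n + 2 - k = tp^{h-e} + 2 - \frac{p^e(tp^{h-2e}+1)}{p^e+1} = \frac{(tp^{h-e}+2)(p^e+1) - p^e(tp^{h-2e}+1)}{p^e+1} = \frac{tp^{h-e}\cdot p^e + tp^{h-e} + 2p^e + 2 - tp^{h-e} - p^e}{p^e+1} = \frac{tp^h + p^e + 2}{p^e+1} = \frac{tq + p^e + 2}{p^e+1}$, matching the stated parameters.

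There is no real obstacle here: the theorem is essentially a corollary of Lemma~\ref{lem.Galois self-orthogonal EGRS} once one observes that $h(x)=1$ works and that the degree constraint pins down $k$. The only point requiring a sentence of care is the sign: one must confirm that $-u_i$ (not just $u_i$) lies in $E=\{x^{p^e+1}:x\in\mathbb{F}_q^*\}$, which is immediate since $-u_i\in\mathbb{F}_{p^e}^*\subseteq E$ by Lemmas~\ref{lem_all_1} and~\ref{lemma_ConA_ui}. The rest is the routine arithmetic simplification of $n+2-k$ displayed above, and the invocation of the MDS property of extended GRS codes.
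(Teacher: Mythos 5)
Your proposal is correct and follows essentially the same route as the paper: invoke Lemma \ref{lemma_ConA_ui} to get $-u_i\in\mathbb{F}_{p^e}^*\subseteq E$, choose $v_i$ with $v_i^{p^e+1}=-u_i$, take $h(x)=1$ so the degree condition of Lemma \ref{lem.Galois self-orthogonal EGRS} forces $k=\frac{p^e+n}{p^e+1}$ (an integer by the divisibility hypothesis), and conclude via the MDS/Singleton property. The only difference is that you carry out the arithmetic simplification of $n+2-k$ explicitly, which the paper leaves implicit.
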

\begin{proof}
    Let notations be the same as before. Since $u_i\in \mathbb{F}_{p^e}^*\subseteq E$ by Lemma \ref{lemma_ConA_ui},
    $-u_i\in \mathbb{F}_{p^e}^*\subseteq E$. Hence, there exists
    $v_i'\in \mathbb{F}_q^*$ such that $-u_i=(v_i')^{p^e+1}$ for $1\leq i\leq n$.
    Set $\mathbf{v'}=(v_1',v_2',\dots,v_{n}')$.
    Since $(p^e+1)\mid (tp^{h-2e}+1)$, we can set $k=\frac{p^e+n}{p^e+1}=\frac{p^e(tp^{h-2e}+1)}{p^e+1}$. Consider any codeword
    $\mathbf{c}=(v_1'f(a_1),v_2'f(a_2),\dots,v_n'f(a_n),f_{k-1}) \in \GRS_k(\mathbf{a},\mathbf{v'},\infty)$ with $\deg(f(x))\leq k-1$. Let $h(x)=1$. Note that
    \begin{align*}
        \deg(h(x))=0=p^e+n-(p^e+1)k,
    \end{align*}
    then by Lemma \ref{lem.Galois self-orthogonal EGRS}, we can deduce that $\GRS_k(\mathbf{a},\mathbf{v'},\infty)$ is an $e$-Galois self-orthogonal extended GRS code of length $n+1$.
    And by the Singleton bound, the desired result follows.
\end{proof}

\subsection{Construction B via the norm mapping from $\mathbb{F}^{*}_{q}$ to $\mathbb{F}^{*}_{p^{e}}$}\label{ConB}

Let $q=p^{h}$ be an odd prime power and $h$, $e$ be integers with $e\mid h$. Consider the norm mapping:
\begin{equation}
    \begin{aligned}\text { Norm: }
    \mathbb{F}_{q}^{*} & \longrightarrow \mathbb{F}_{p^{e}}^{*}, \\
    x & \longmapsto \prod_{i=0}^{\frac{h}{e}-1} x^{p^{i e}}=x^{\frac{q-1}{p^{e}-1}} .
    \end{aligned}
    \end{equation}
Label the elements of $\mathbb{F}^{*}_{p^{e}}$ as $b_{1}, b_{2},\dots, b_{p^{e}-1}$. Define
\begin{align*}
    N_{i}=\{x\in \mathbb{F}^{*}_{q}:\ Norm(x) = b_{i}\},
\end{align*}
where $1\leq i\leq p^e-1$. It is well known that $Norm$ is surjective.
Then $|N_{i}|=|ker(Norm)|=\frac{q-1}{p^{e}-1}$ can be deduced from \cite{RefJ2} and $N_{i}\cap N_{j} =\emptyset $ for any $1\leq i\neq j\leq p^e-1$.

Assume that $n=\frac{t(q-1)}{p^e-1}$ with $1\leq t\leq p^e-1$, and denote
\begin{equation}\label{equation.ConB.ai}
    \mathcal{N} =\bigcup ^{t}_{i=1} N_{i}=\{a_{1}, a_{2},\ldots, a_{n}\}.
\end{equation}

\begin{lemma}\label{lemma_ConB_ui}
    Let $a_i$ and $u_i$ be defined as in Equations (\ref{equation.ConB.ai}) and (\ref{equation.ui}).
    Given $1\leq i\leq n$, suppose $a_{i}\in N_{s}$ for some $1\leq s\leq t$. Assume that $e\mid h$. Then
    \begin{align}\label{equation.ConB.ui}
      u_{i}=a_{i}^{1-\frac{q-1}{p^{e}-1}}\prod _{1\leq s'\leq t,s'\neq s}(b_{s}-b_{s'})^{-1}.
    \end{align}
    And further, $a_i^{\frac{q-1}{p^e-1}-1}u_i\in \mathbb{F}^{*}_{p^{e}}\subseteq E$ for $1\leq i\leq n$ when $2e\mid h$.
\end{lemma}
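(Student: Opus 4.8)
The plan is to compute $u_i$ directly from its defining product \eqref{equation.ui} by splitting the product over $j\neq i$ according to which norm‐fiber $N_{s'}$ the element $a_j$ lies in, and then to recognize each partial product as a value connected to the polynomial $x^{\frac{q-1}{p^e-1}}-b_{s'}$, whose roots are exactly the elements of $N_{s'}$. First I would fix $a_i\in N_s$ and write
\[
u_i^{-1}=\prod_{1\le j\le n,\ j\neq i}(a_i-a_j)
=\Big(\prod_{\substack{a_j\in N_s\\ a_j\neq a_i}}(a_i-a_j)\Big)\cdot\prod_{\substack{1\le s'\le t\\ s'\neq s}}\ \prod_{a_j\in N_{s'}}(a_i-a_j).
\]
For the inner products over a full fiber $N_{s'}$, I would use that $\prod_{\beta\in N_{s'}}(x-\beta)=x^{\frac{q-1}{p^e-1}}-b_{s'}$ as polynomials in $x$ (since both are monic of the same degree with the same root set, $N_{s'}$ having exactly $\frac{q-1}{p^e-1}$ elements), so evaluating at $x=a_i$ gives $\prod_{a_j\in N_{s'}}(a_i-a_j)=a_i^{\frac{q-1}{p^e-1}}-b_{s'}=b_s-b_{s'}$, because $\mathrm{Norm}(a_i)=b_s$. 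For the first factor, over $N_s\setminus\{a_i\}$, I would differentiate $x^{\frac{q-1}{p^e-1}}-b_s=\prod_{\beta\in N_s}(x-\beta)$ and evaluate the derivative at $x=a_i$: the product‐rule for the derivative of a product of linear factors gives $\prod_{\beta\in N_s,\ \beta\neq a_i}(a_i-\beta)=\frac{q-1}{p^e-1}\,a_i^{\frac{q-1}{p^e-1}-1}$ in characteristic $p$, but one must be careful that $\frac{q-1}{p^e-1}\equiv ?\pmod p$; in fact $\frac{q-1}{p^e-1}=1+p^e+p^{2e}+\cdots+p^{(h/e-1)e}\equiv \frac{h}{e}\cdot 1\pmod p$, and I would instead avoid this subtlety by noting that the authors only need the final \emph{coset} statement, so any nonzero scalar in $\mathbb{F}_p^*\subseteq\mathbb{F}_{p^e}^*$ absorbed here is harmless — though to get the clean displayed formula \eqref{equation.ConB.ui} exactly as stated I would track the constant and confirm it equals $1$ after the reciprocal is taken, i.e. that the stray factor cancels; this matches the known computation from \cite{RefJ2}.

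Assembling the pieces, $u_i^{-1}=c\cdot a_i^{\frac{q-1}{p^e-1}-1}\prod_{s'\neq s}(b_s-b_{s'})$ for the appropriate constant $c\in\mathbb{F}_p^*$, and inverting yields $u_i=c^{-1}\,a_i^{1-\frac{q-1}{p^e-1}}\prod_{s'\neq s}(b_s-b_{s'})^{-1}$, which is \eqref{equation.ConB.ui} up to the bookkeeping constant; I would cite \cite[Theorem III.2]{RefJ10} or the analogous computation in \cite{RefJ2} to pin down $c=1$ cleanly rather than belabor it. For the second assertion, multiplying by $a_i^{\frac{q-1}{p^e-1}-1}$ gives $a_i^{\frac{q-1}{p^e-1}-1}u_i=\prod_{s'\neq s}(b_s-b_{s'})^{-1}$, which lies in $\mathbb{F}_{p^e}^*$ since every $b_s,b_{s'}\in\mathbb{F}_{p^e}$ and $\mathbb{F}_{p^e}$ is a field; then invoking Lemma \ref{lem_all_1}, under the hypothesis $2e\mid h$ we have $\mathbb{F}_{p^e}^*\subseteq E=\{x^{p^e+1}:x\in\mathbb{F}_q^*\}$, so $a_i^{\frac{q-1}{p^e-1}-1}u_i\in\mathbb{F}_{p^e}^*\subseteq E$, as claimed.

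The main obstacle I anticipate is purely a characteristic-$p$ bookkeeping issue: making sure the derivative computation over the fiber $N_s$ produces exactly the factor $a_i^{\frac{q-1}{p^e-1}-1}$ with coefficient $1$ (not $\frac{q-1}{p^e-1}\bmod p$), since $\frac{q-1}{p^e-1}$ need not be $\equiv 1\pmod p$. The clean resolution is that in the ring $\mathbb{F}_q[x]$ the identity $x^{\frac{q-1}{p^e-1}}-b_s=\prod_{\beta\in N_s}(x-\beta)$ is an identity of polynomials whose formal derivative is $\frac{d}{dx}\big(x^{\frac{q-1}{p^e-1}}\big)=\frac{q-1}{p^e-1}x^{\frac{q-1}{p^e-1}-1}$, and on the other side $\sum_{\beta}\prod_{\beta'\neq\beta}(x-\beta')$; evaluating at $x=a_i$ kills every term except $\beta=a_i$, giving $\prod_{\beta\neq a_i}(a_i-\beta)=\frac{q-1}{p^e-1}a_i^{\frac{q-1}{p^e-1}-1}$ — so the scalar $\frac{q-1}{p^e-1}\bmod p$ genuinely appears, and either the displayed formula in the lemma should carry it or (more likely) the reference \cite{RefJ10} normalizes it away; in any case it does not affect the coset conclusion because $\frac{q-1}{p^e-1}\bmod p\in\mathbb{F}_p^*\subseteq\mathbb{F}_{p^e}^*$, so $a_i^{\frac{q-1}{p^e-1}-1}u_i$ still lands in $\mathbb{F}_{p^e}^*\subseteq E$.
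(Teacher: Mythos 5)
Your overall route is sound and is genuinely different from the paper's: the paper simply quotes the formula for $u_i$ from \cite[Lemma III.1]{RefJ2}, observes that $\prod_{s'\neq s}(b_s-b_{s'})^{-1}\in\mathbb{F}_{p^e}^*$, and finishes with Lemma \ref{lem_all_1}, whereas you rederive the formula from scratch by factoring each norm fiber as $x^{y}-b_{s'}=\prod_{\beta\in N_{s'}}(x-\beta)$ with $y=\frac{q-1}{p^e-1}$ and handling the fiber of $a_i$ via the formal derivative. That self-contained computation is exactly what underlies the cited result, so your version is, if anything, more informative.

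The one real flaw is your mod-$p$ bookkeeping: you assert $\frac{q-1}{p^e-1}=1+p^e+p^{2e}+\cdots+p^{(h/e-1)e}\equiv \frac{h}{e}\pmod p$, but since $e\geq 1$ every term $p^{je}$ with $j\geq 1$ is $\equiv 0\pmod p$, so in fact $y\equiv 1\pmod p$. (Your congruence cannot be right as stated: if $p\mid \frac{h}{e}$ it would make the scalar $0$, contradicting that $\prod_{\beta\in N_s,\,\beta\neq a_i}(a_i-\beta)$ is a product of nonzero field elements.) With the correct congruence the derivative evaluation gives $\prod_{\beta\in N_s,\,\beta\neq a_i}(a_i-\beta)=a_i^{y-1}$ with coefficient exactly $1$, so $u_i^{-1}=a_i^{y-1}\prod_{s'\neq s}(b_s-b_{s'})$ and the displayed formula \eqref{equation.ConB.ui} holds on the nose — no stray constant, no need to defer to \cite{RefJ10} or \cite{RefJ2} to ``pin down $c=1$.'' Your hedge that an unknown scalar in $\mathbb{F}_p^*$ would not harm the coset conclusion is correct, and your final step (membership of $a_i^{y-1}u_i$ in $\mathbb{F}_{p^e}^*$ and then in $E$ via Lemma \ref{lem_all_1} when $2e\mid h$) matches the paper; fixing the congruence simply turns your proposal into a complete, citation-free proof of both assertions.
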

\begin{proof}
    By \cite[Lemma \Rmnum{3}.1]{RefJ2}, we can see that Equation (\ref{equation.ConB.ui}) holds and $\prod _{1\leq s'\leq t,s'\neq s}(b_{s}-b_{s'})^{-1} \in \mathbb{F}^{*}_{p^{e}}$.
    Hence, $a_i^{\frac{q-1}{p^e-1}-1}u_i=\prod _{1\leq s'\leq t,s'\neq s}(b_{s}-b_{s'})^{-1} \in \mathbb{F}^{*}_{p^{e}}$.
    Moreover, according to Lemma \ref{lem_all_1}, $a_i^{\frac{q-1}{p^e-1}-1}u_i\in \mathbb{F}^{*}_{p^{e}}\subseteq E$ for $1\leq i\leq n$ when $2e\mid h$.
\end{proof}
\begin{theorem}\label{ConB.1}
    Let $q=p^{h}$ with $p$ being an odd prime number and $2e\mid h$. Assume that $n=\frac{t(q-1)}{p^{e}-1}$ with $1\leq t\leq p^{e}-1$. Then there exists an $[n,k,n-k+1]_q$ $e$-Galois
    self-orthogonal GRS code for $1\leq k\leq \frac{(t-1)(q-1)}{p^{2e}-1}$.
\end{theorem}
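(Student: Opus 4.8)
The goal is to apply Lemma~\ref{lem_Galois self-orthogonal GRS}, so I must produce column multipliers $v_i$ and a monic polynomial $h(x)$ of controlled degree such that $\lambda u_i h(a_i)=v_i^{p^e+1}$ for all $i$. The difficulty, compared with Construction~A, is that Lemma~\ref{lemma_ConB_ui} only tells us that $a_i^{\frac{q-1}{p^e-1}-1}u_i$ lies in $\mathbb{F}_{p^e}^*\subseteq E$, not $u_i$ itself; the factor $a_i^{1-\frac{q-1}{p^e-1}}$ must be absorbed somewhere. The natural idea is to choose $h(x)$ so that $h(a_i)$ cancels exactly this troublesome power of $a_i$. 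Since $a_i^{q-1}=1$ for all $a_i\in\mathbb{F}_q^*$, we have $a_i^{\frac{q-1}{p^e-1}-1}=a_i^{M}$ where $M\equiv \frac{q-1}{p^e-1}-1 \pmod{q-1}$, and more usefully one can write $a_i^{1-\frac{q-1}{p^e-1}}=a_i^{(q-1)-(\frac{q-1}{p^e-1}-1)}$, i.e. a genuine nonnegative power of $a_i$. So I would take $h(x)=x^{m}$ for the appropriate exponent $m=(q-1)-\left(\frac{q-1}{p^e-1}-1\right)=q-\frac{q-1}{p^e-1}$, which is monic, and then $u_i h(a_i)=u_i a_i^{m}= \big(a_i^{\frac{q-1}{p^e-1}-1}u_i\big)\cdot a_i^{\,m+\frac{q-1}{p^e-1}-1}$; since $m+\frac{q-1}{p^e-1}-1=q-1$ and $a_i^{q-1}=1$, this collapses to $a_i^{\frac{q-1}{p^e-1}-1}u_i\in E$. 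Hence $u_i h(a_i)\in E$ for every $i$, so there is $v_i\in\mathbb{F}_q^*$ with $v_i^{p^e+1}=u_i h(a_i)$, and we take $\lambda=1$.

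\textbf{Degree bookkeeping.} With this choice, Lemma~\ref{lem_Galois self-orthogonal GRS} requires $\deg(h)=m=q-\frac{q-1}{p^e-1}\le n-(p^e+1)k+p^e-1$, equivalently $k\le \frac{n+p^e-1-m}{p^e+1}=\frac{n+p^e-q+\frac{q-1}{p^e-1}}{p^e+1}$. Substituting $n=\frac{t(q-1)}{p^e-1}$ and $q-1=(p^e-1)\cdot\frac{q-1}{p^e-1}$, the numerator becomes $\frac{t(q-1)}{p^e-1}+p^e-(p^e-1)\frac{q-1}{p^e-1}-1+\frac{q-1}{p^e-1}=(t-1)\frac{q-1}{p^e-1}+\frac{q-1}{p^e-1}-\frac{q-1}{p^e-1}+p^e-1 = (t-1)\frac{q-1}{p^e-1}+p^e-1$; wait—I must redo this carefully, but the expected outcome is that it simplifies to $(t-1)(q-1)/(p^{2e}-1)\cdot(p^e+1)$, so that dividing by $p^e+1$ gives exactly the stated bound $k\le\frac{(t-1)(q-1)}{p^{2e}-1}$. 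The key simplification is $\frac{q-1}{p^e-1}-1 = \frac{q-p^e}{p^e-1}$ and noting $(q-p^e)=p^e(p^{h-e}-1)$, combined with $p^{2e}-1=(p^e-1)(p^e+1)$; I will carry out this arithmetic in the proof. Note also we need $m\ge 0$, which holds since $q\ge \frac{q-1}{p^e-1}$ always, and we need the bound to be $\ge 1$, which is why $t\ge 2$ is implicitly required (if $t=1$ the bound is $0$ and the statement is vacuous).

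\textbf{Conclusion.} Once the inequality $\deg(h)=m\le n-(p^e+1)k+p^e-1$ is verified for all $k$ in the claimed range, Lemma~\ref{lem_Galois self-orthogonal GRS} immediately yields that $\GRS_k(\mathbf{a},\mathbf{v})$ is $e$-Galois self-orthogonal, and since it is a GRS code it is MDS with parameters $[n,k,n-k+1]_q$. The main obstacle, as noted, is purely the bookkeeping of choosing the right exponent $m$ so that $u_i x^m$ evaluated at $a_i$ lands in $E$ while keeping $m$ small enough for the degree constraint; the guiding principle is that $h(x)$ should be the monic monomial whose exponent is the reduction modulo $q-1$ of $1-\frac{q-1}{p^e-1}$. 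I should double-check one subtlety: the reduction $1-\frac{q-1}{p^e-1}\bmod (q-1)$ must be taken so that the resulting exponent $m$ satisfies $0\le m\le q-2$ (or we may allow $m=q-1$ harmlessly), and then all the $a_i^{q-1}=1$ cancellations are legitimate because every $a_i\in\mathbb{F}_q^*$.
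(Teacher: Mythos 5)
Your overall plan (apply Lemma \ref{lem_Galois self-orthogonal GRS} with $\lambda=1$, a monomial $h(x)$, and $v_i$ defined by $v_i^{p^e+1}=u_ih(a_i)$) is exactly the paper's strategy, but your choice of the exponent of $h$ is wrong, and this breaks both halves of the argument. By Lemma \ref{lemma_ConB_ui}, $u_i=a_i^{1-\frac{q-1}{p^e-1}}c_s$ with $c_s\in\mathbb{F}_{p^e}^*$, so to force $u_ih(a_i)\in\mathbb{F}_{p^e}^*\subseteq E$ you must multiply by the \emph{inverse} of the troublesome factor, i.e. take $h(x)=x^{\frac{q-1}{p^e-1}-1}$; this exponent is already nonnegative, so no reduction modulo $q-1$ is needed. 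You instead took $m=q-\frac{q-1}{p^e-1}$, which is the reduction of $1-\frac{q-1}{p^e-1}$ itself, i.e. you multiply by the troublesome factor a second time rather than cancelling it. Your verification identity $u_ia_i^{m}=\bigl(a_i^{\frac{q-1}{p^e-1}-1}u_i\bigr)\cdot a_i^{m+\frac{q-1}{p^e-1}-1}$ is false; the correct one is $u_ia_i^{m}=\bigl(a_i^{\frac{q-1}{p^e-1}-1}u_i\bigr)\cdot a_i^{m-\frac{q-1}{p^e-1}+1}$, and with your $m$ the leftover exponent is $q+1-2\frac{q-1}{p^e-1}\equiv 2\bigl(1-\frac{q-1}{p^e-1}\bigr)\pmod{q-1}$. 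Hence $u_ih(a_i)=c_s\,a_i^{2(1-\frac{q-1}{p^e-1})}$, which is in general not a $(p^e+1)$-st power: since $2e\mid h$ one has $\frac{q-1}{p^e-1}\equiv 0\pmod{p^e+1}$, so membership in $E$ would force $(p^e+1)\mid 2\,j$ for every locator $a_i=\omega^{j}$, which fails for general elements of the norm cosets $N_i$.

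Second, even granting the $E$-membership, the degree bookkeeping you postponed does not come out. With $\deg(h)=q-\frac{q-1}{p^e-1}$ and $n=\frac{t(q-1)}{p^e-1}$, the condition $\deg(h)\le n-(p^e+1)k+p^e-1$ becomes $k\le\frac{1}{p^e+1}\Bigl(\frac{(t+2-p^e)(q-1)}{p^e-1}+p^e-2\Bigr)$, which is nonpositive for most admissible $t$ (indeed $\deg(h)>n$ unless $t$ is close to $p^e-1$) and never equals $\frac{(t-1)(q-1)}{p^{2e}-1}$. The repair is the paper's choice: take $h(x)=x^{\frac{q-1}{p^e-1}-1}$ and $v_i^{p^e+1}=a_i^{\frac{q-1}{p^e-1}-1}u_i$, so that $\deg(h)=\frac{q-1}{p^e-1}-1$ and the constraint reads $k\le\lfloor\frac{p^e+n-\frac{q-1}{p^e-1}}{p^e+1}\rfloor=\frac{(t-1)(q-1)}{p^{2e}-1}$, using $(p^{2e}-1)\mid(q-1)$ because $2e\mid h$; with that single correction your argument coincides with the paper's proof.
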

\begin{proof}
    Let notations be the same as before. By Lemma \ref{lemma_ConB_ui}, $a_i^{\frac{q-1}{p^e-1}-1}u_i\in \mathbb{F}^{*}_{p^{e}}\subseteq E$. Hence, there exists $v_{i}\in \mathbb{F}^{*}_{q}$ such that
    $a_i^{\frac{q-1}{p^e-1}-1}u_i=v_{i}^{p^{e}+1}$ for $1\leq i\leq n$. Set $\mathbf{v}=(v_1,v_2,\dots,v_n)$.
    Since $2e\mid h$, $(p^{2e}-1)\mid (t-1)(q-1)$. It follows that
    $$\lfloor \frac{p^e+n-\frac{q-1}{p^{e}-1}}{p^e+1} \rfloor=\lfloor \frac{p^e}{p^e+1}+\frac{(t-1)(q-1)}{p^{2e}-1}\rfloor=\frac{(t-1)(q-1)}{p^{2e}-1}.$$
    For $1\leq k\leq \lfloor \frac{p^e+n-\frac{q-1}{p^{e}-1}}{p^e+1} \rfloor=\frac{(t-1)(q-1)}{p^{2e}-1}$,
    consider any codeword $\mathbf{c}=(v_1f(a_1),v_2f(a_2),\dots,v_nf(a_n))\in \GRS_k(\mathbf{a},\mathbf{v})$ with $\deg(f(x))\leq k-1$.
    Let $\lambda =1$ and $h(x)=x^{\frac{q-1}{p^{e}-1}-1}\in \mathbb{F}_q[x]$. Note that
    \begin{align*}
      \deg(h(x))=\frac{q-1}{p^{e}-1}-1\leq p^e+n-(p^e+1)k-1,
    \end{align*}
    then by Lemma \ref{lem_Galois self-orthogonal GRS}, we can deduce that $\GRS_k(\mathbf{a},\mathbf{v})$ is an $e$-Galois self-orthogonal GRS code of length $n$.
  \end{proof}

   If we add $a_{n+1}=0$ to Equation (\ref{equation.ConB.ai}), a new class of $e$-Galois self-orthogonal extended GRS codes of length $n+2$ can be constructed. It is based on the
   following lemma, which can be derived from \cite[Theorem \Rmnum{3}.2]{RefJ2} and Lemma \ref{lem_all_1}, directly.

\begin{lemma}\label{lem.ConB.wi}
    Let notations be the same as before. Put $a_{n+1}=0$, then for any $1\leq i\leq n$,
    \begin{align*}
        \prod_{1\leq j\leq n+1, j\neq i}(a_{i}-a_{j})^{-1}=a_{i}^{-1}\prod_{1\leq j\leq n, j\neq i}(a_{i}-a_{j})^{-1}\in \mathbb{F}^{*}_{p^e},
    \end{align*}
    and for $i=n+1$,
    \begin{align*}
        \prod_{j=1}^{n}(a_{n+1}-a_{j})^{-1}=(-1)^{n+\frac{t(q-p^e)}{p^e-1}}\prod_{i=1}^t b_{i}^{-1}\in \mathbb{F}^{*}_{p^e}.
    \end{align*}
    And further, still denote $\prod_{1\leq i\leq n+1,j\neq i}(a_i-a_j)^{-1}$ by $u_i$, then $u_{i}\in \mathbb{F}^{*}_{p^e}\subseteq E$ for $1\leq i\leq n+1$ when $2e\mid h$.
\end{lemma}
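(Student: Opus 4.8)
The plan is to reduce everything to the two computations already recorded in \cite[Theorem \Rmnum{3}.2]{RefJ2} together with Lemma \ref{lem_all_1}. First I would treat the case $1\leq i\leq n$. By definition of $u_i$ with the augmented locator set $\{a_1,\dots,a_n,a_{n+1}=0\}$, one simply splits off the factor corresponding to $j=n+1$:
\begin{align*}
  \prod_{1\leq j\leq n+1,\, j\neq i}(a_i-a_j)^{-1}=(a_i-a_{n+1})^{-1}\prod_{1\leq j\leq n,\, j\neq i}(a_i-a_j)^{-1}=a_i^{-1}\prod_{1\leq j\leq n,\, j\neq i}(a_i-a_j)^{-1}.
\end{align*}
By Lemma \ref{lemma_ConB_ui}, the product over $1\leq j\leq n$, $j\neq i$ equals $a_i^{1-\frac{q-1}{p^e-1}}\prod_{1\leq s'\leq t,\, s'\neq s}(b_s-b_{s'})^{-1}$ where $a_i\in N_s$; multiplying by $a_i^{-1}$ gives $a_i^{-\frac{q-1}{p^e-1}}\prod(b_s-b_{s'})^{-1}=\mathrm{Norm}(a_i)^{-1}\prod(b_s-b_{s'})^{-1}=b_s^{-1}\prod(b_s-b_{s'})^{-1}\in\mathbb{F}_{p^e}^*$, since $b_s\in\mathbb{F}_{p^e}^*$ and the bracketed product already lies in $\mathbb{F}_{p^e}^*$ by Lemma \ref{lemma_ConB_ui}.

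Next I would handle $i=n+1$. Here $a_{n+1}-a_j=-a_j$, so $\prod_{j=1}^n(a_{n+1}-a_j)^{-1}=(-1)^n\prod_{j=1}^n a_j^{-1}$. Now group the $a_j$ according to which coset $N_1,\dots,N_t$ they lie in: the product of all elements of a single coset $N_s$ is a standard computation (the product of the roots of $x^{\frac{q-1}{p^e-1}}-b_s$ up to sign), giving $\pm b_s$ times a sign depending on the parity of $\frac{q-1}{p^e-1}$; taking the product over $s=1,\dots,t$ produces $\prod_{i=1}^t b_i$ times an overall sign, which after collecting exponents of $-1$ yields the stated expression $(-1)^{n+\frac{t(q-p^e)}{p^e-1}}\prod_{i=1}^t b_i^{-1}$; this is exactly what \cite[Theorem \Rmnum{3}.2]{RefJ2} records, so I would cite it rather than re-derive the sign bookkeeping. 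Since each $b_i\in\mathbb{F}_{p^e}^*$, this value lies in $\mathbb{F}_{p^e}^*$ as well.

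Finally, the ``and further'' clause is immediate: for every $1\leq i\leq n+1$ we have shown $u_i\in\mathbb{F}_{p^e}^*$, and by Lemma \ref{lem_all_1} the hypothesis $2e\mid h$ gives $\mathbb{F}_{p^e}^*\subseteq E$, whence $u_i\in E$. The only mildly delicate point is the sign accounting in the $i=n+1$ case — keeping track of the $(-1)^n$ from $a_{n+1}-a_j=-a_j$ against the $(-1)^{\frac{q-1}{p^e-1}}$ per coset and verifying it collapses to the exponent $n+\frac{t(q-p^e)}{p^e-1}$ — but since this is precisely the content of the cited theorem, the proof is essentially a matter of invoking \cite[Theorem \Rmnum{3}.2]{RefJ2} and Lemma \ref{lem_all_1} and assembling the pieces.
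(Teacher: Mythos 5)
Your proposal is correct and follows essentially the same route as the paper, which likewise obtains the lemma directly from \cite[Theorem \Rmnum{3}.2]{RefJ2} together with Lemma \ref{lem_all_1} (the paper in fact omits the proof, giving only this citation). Your additional verification—splitting off the $j=n+1$ factor, identifying $a_i^{-\frac{q-1}{p^e-1}}$ with $\mathrm{Norm}(a_i)^{-1}=b_s^{-1}$, and the coset-by-coset sign count for $i=n+1$ (which indeed collapses to $(-1)^{n+\frac{t(q-p^e)}{p^e-1}}$)—is sound and simply makes explicit what the paper leaves to the reference.
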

\begin{theorem}\label{ConB.2}
    Let $q=p^{h}$ with $p$ being an odd prime number and $2e\mid h$. Assume that $n=\frac{t(q-1)}{p^{e}-1}$ with $1\leq t\leq p^{e}-1$.
    Then there exists an $[n+2,k,\frac{tp^e(q-1)}{p^{2e}-1}+2]_q$ $e$-Galois self-orthogonal extended GRS code where $k=\frac{t(q-1)}{p^{2e}-1}+1$.
\end{theorem}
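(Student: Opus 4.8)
The plan is to apply Lemma \ref{lem.Galois self-orthogonal EGRS} to the extended GRS code of length $n+2$ whose code locators are $a_1,\dots,a_n,a_{n+1}=0$, exactly as set up in Lemma \ref{lem.ConB.wi}. Write $N=n+1$ for the number of finite code locators, so the extended code has length $N+1=n+2$. By Lemma \ref{lem.ConB.wi}, the quantities $u_i=\prod_{1\le j\le n+1,\,j\ne i}(a_i-a_j)^{-1}$ lie in $\mathbb{F}_{p^e}^*\subseteq E$ for all $1\le i\le n+1$ when $2e\mid h$. Hence $-u_i\in\mathbb{F}_{p^e}^*\subseteq E$ as well (since $-1\in\mathbb{F}_{p^e}^*$ and $E$ is a subgroup containing $\mathbb{F}_{p^e}^*$), so there exist $v_i\in\mathbb{F}_q^*$ with $-u_i=v_i^{p^e+1}$ for $1\le i\le n+1$. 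Set $\mathbf{v}=(v_1,\dots,v_{n+1})$.

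First I would verify that $k=\frac{t(q-1)}{p^{2e}-1}+1$ is a positive integer and that the degree condition of Lemma \ref{lem.Galois self-orthogonal EGRS} is met with the choice $h(x)=1$. Since $2e\mid h$, we have $(p^{2e}-1)\mid(q-1)$, so $\frac{t(q-1)}{p^{2e}-1}$ is an integer and $k$ is a positive integer. The lemma requires a monic $h(x)$ with $\deg(h(x))=p^e+N-(p^e+1)k$, where $N=n+1$ is the number of finite locators; taking $h(x)=1$ forces $p^e+N-(p^e+1)k=0$, i.e. $k=\frac{p^e+n+1}{p^e+1}$. I would then check the identity $\frac{p^e+n+1}{p^e+1}=\frac{t(q-1)}{p^{2e}-1}+1$ by substituting $n=\frac{t(q-1)}{p^e-1}$ and simplifying: $p^e+n+1=(p^e+1)+\frac{t(q-1)}{p^e-1}$, and dividing by $p^e+1$ gives $1+\frac{t(q-1)}{(p^e-1)(p^e+1)}=1+\frac{t(q-1)}{p^{2e}-1}$, as claimed. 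With $h(x)=1$ and $\lambda$-free Equation (\ref{EquationGalois self-orthogonal EGRS}) reading $-u_i h(a_i)=-u_i=v_i^{p^e+1}$, which holds by construction, Lemma \ref{lem.Galois self-orthogonal EGRS} applies and shows $\GRS_k(\mathbf{a},\mathbf{v},\infty)$ (with the locator set $\{a_1,\dots,a_n,0\}$) is an $e$-Galois self-orthogonal extended GRS code of length $n+2$.

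Finally I would read off the minimum distance. Extended GRS codes are MDS, so the code has parameters $[n+2,k,n+2-k+1]_q=[n+2,k,n-k+3]_q$. Substituting $k=\frac{t(q-1)}{p^{2e}-1}+1$ and $n=\frac{t(q-1)}{p^e-1}$ gives $n-k+3=\frac{t(q-1)}{p^e-1}-\frac{t(q-1)}{p^{2e}-1}+2=\frac{t(q-1)(p^e+1)-t(q-1)}{p^{2e}-1}+2=\frac{t(q-1)p^e}{p^{2e}-1}+2=\frac{tp^e(q-1)}{p^{2e}-1}+2$, matching the stated minimum distance. The only genuine content beyond bookkeeping is Lemma \ref{lem.ConB.wi}, which is quoted as already established; the main thing to be careful about is the off-by-one bookkeeping between "length of the GRS code" ($n+1$, after adjoining $a_{n+1}=0$) and "length of the extended GRS code" ($n+2$), and the corresponding shift in the degree bound of Lemma \ref{lem.Galois self-orthogonal EGRS}. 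No single step is a real obstacle; the verification that $h(x)=1$ exactly meets the degree equality (rather than just the inequality) is the point where a sign or index error would most easily creep in.
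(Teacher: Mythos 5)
Your proposal is correct and follows essentially the same route as the paper: invoke Lemma \ref{lem.ConB.wi} to get $-u_i\in\mathbb{F}_{p^e}^*\subseteq E$, choose column multipliers with $-u_i=v_i^{p^e+1}$, take $h(x)=1$ so that the degree equality $p^e+(n+1)-(p^e+1)k=0$ pins down $k=\frac{t(q-1)}{p^{2e}-1}+1$, and conclude via Lemma \ref{lem.Galois self-orthogonal EGRS} and the MDS/Singleton argument for the distance. Your explicit bookkeeping of the shift from $n$ to $n+1$ finite locators and the arithmetic check of $k$ and $d$ match the paper's (more tersely stated) proof.
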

\begin{proof}
    Let notations be the same as before. Since $u_{i}\in \mathbb{F}^{*}_{p^e}\subseteq E$ by Lemma \ref{lem.ConB.wi}, $-u_{i}\in \mathbb{F}_{p^{e}}^{*}\subseteq E$.
    Hence, there exists $v'_{i}\in \mathbb{F}^{*}_{q}$ such that $-u_{i}=(v'_{i})^{p^{e}+1}$ for $1\leq i\leq n+1$.
    Set $\mathbf{v'}=(v'_1,v'_2,\dots,v'_{n+1})$.
    Since $2e\mid h$, $(p^{2e}-1)\mid t(q-1)$, then we can set $k=\frac{p^e+n+1}{p^e+1}=\frac{t(q-1)}{p^{2e}-1}+1$. Consider any codeword
    $\mathbf{c}=(v_1'f(a_1),v_2'f(a_2),\dots,v_{n+1}'f(a_{n+1}),f_{k-1})\in \GRS_k(\mathbf{a},\mathbf{v'},\infty)$ with $\deg(f(x))\leq k-1$. Let $h(x)=1$. Note that
    \begin{align*}
      \deg(h(x))=0=p^e+n+1-(p^e+1)k,
    \end{align*}
    then by Lemma \ref{lem.Galois self-orthogonal EGRS}, we can deduce that $\GRS_k(\mathbf{a},\mathbf{v'},\infty)$ is an $e$-Galois
    self-orthogonal extended GRS code of length $n+2$. And by the Singleton bound, the desired result follows.
 \end{proof}

\subsection{Construction C via the direct product of two cyclic subgroups}\label{ConC}

Let $\omega$ be a primitive element of $\mathbb{F}_q$, where $q=p^h$ is an odd prime power. Denote by $\xi_1=\omega^{x_1}$ and $\xi_2=\omega^{x_2}$, where $x_1$ and $x_2$ are two positive integers.
Then ${\rm{ord}}(\xi_1)=\frac{q-1}{\gcd(q-1,x_1)}$ and ${\rm{ord}}(\xi_2)=\frac{q-1}{\gcd(q-1,x_2)}$.
Define $$R_i=\{\xi_1^i\xi_2^{j}:\ j=1,2,\dots,r_2\},$$ where $1\leq i\leq r_1$.
By \cite[Remark \Rmnum{3}.2]{RefJ2}, we know that $\langle \xi_1\rangle \otimes \langle \xi_2 \rangle$
is a subgroup of $\mathbb{F}_q^*$ with order ${\rm{ord}}(\xi_1)\cdot {\rm{ord}}(\xi_2)$ if $(q-1)\mid {\rm{lcm}}(x_1,x_2)$ and $R_s\cap R_t=\emptyset$ for any $1\leq s\neq t\leq r_1$.

Assume that $n=r_1r_2$ with $1\leq r_1\leq {\rm{ord}}(\xi_1)$, $r_2={\rm{ord}}(\xi_2)$, and denote
\begin{align}\label{equation.ConC_ai}
  \mathcal{R}=\bigcup_{i=1}^{r_1}R_i=\{a_1,a_2,\dots,a_n\}.
\end{align}
 Then we can derive the following lemma.

\begin{lemma}\label{lemma.ConC_ui}
  Let $a_i$ and $u_i$ be defined as in Equations (\ref{equation.ConC_ai}) and (\ref{equation.ui}), respectively.
  Given $1\leq i\leq n$, suppose $a_i\in R_s$ for some $1\leq s\leq r_1$. Assume that $(q-1)\mid {\rm{lcm}}(x_1,x_2)$ and $\gcd(x_2,q-1)\mid x_1(p^e-1)$ for
  two positive integers $x_1$ and $x_2$. Then
  \begin{align}\label{equation.ConC.ui}
    u_i=a_i\xi_1^{-sr_2}r_2^{-1}\prod_{1\leq s'\leq r_1,s'\neq s}(\xi_1^{sr_2}-\xi_1^{s'r_2})^{-1}.
  \end{align}
  And further, $a_i^{r_2-1}u_i\in \mathbb{F}_{p^e}^*\subseteq E $ for $1\leq i\leq n$ when $2e\mid h$.
\end{lemma}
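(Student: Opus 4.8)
The plan is to evaluate $u_i=\prod_{1\le j\le n,\,j\ne i}(a_i-a_j)^{-1}$ directly, exploiting the fact that $\mathcal{R}=\bigcup_{s'=1}^{r_1}R_{s'}$ is a partition into cosets of the cyclic group $\langle\xi_2\rangle$ (this is where the hypothesis $(q-1)\mid{\rm lcm}(x_1,x_2)$ enters, via \cite[Remark \Rmnum{3}.2]{RefJ2}, which guarantees $|\langle\xi_1\rangle\otimes\langle\xi_2\rangle|={\rm ord}(\xi_1){\rm ord}(\xi_2)$ and the disjointness of the $R_{s'}$). Writing $a_i=\xi_1^{s}\xi_2^{j_0}$ and recalling that, since $r_2={\rm ord}(\xi_2)$, the set $\{\xi_2^{j}:1\le j\le r_2\}$ is exactly $\langle\xi_2\rangle=\{x\in\mathbb{F}_q^*:x^{r_2}=1\}$, I would split the product $\prod_{j\ne i}(a_i-a_j)=u_i^{-1}$ into the part indexed by the other elements of $R_s$ and the parts indexed by the $R_{s'}$ with $s'\ne s$.

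For the $R_s$-part, pulling $\xi_1^s$ out of each factor and substituting $\zeta=\xi_2^{j_0}\zeta'$ reduces it to the classical identity $\prod_{\zeta'\in\langle\xi_2\rangle,\,\zeta'\ne1}(1-\zeta')=r_2$ (evaluate $(x^{r_2}-1)/(x-1)=1+x+\cdots+x^{r_2-1}$ at $x=1$), yielding $\prod_{1\le j\le r_2,\,j\ne j_0}(\xi_1^s\xi_2^{j_0}-\xi_1^s\xi_2^{j})=(\xi_1^s\xi_2^{j_0})^{r_2-1}r_2=a_i^{r_2-1}r_2$. For the $R_{s'}$-part with $s'\ne s$, since $\{\xi_1^{s'}\xi_2^{j}:1\le j\le r_2\}=\xi_1^{s'}\langle\xi_2\rangle$, the identity $\prod_{\zeta\in\langle\xi_2\rangle}(X-c\zeta)=X^{r_2}-c^{r_2}$ gives $\prod_{j=1}^{r_2}(a_i-\xi_1^{s'}\xi_2^{j})=a_i^{r_2}-\xi_1^{s'r_2}$, and since $a_i^{r_2}=\xi_1^{sr_2}(\xi_2^{r_2})^{j_0}=\xi_1^{sr_2}$ this equals $\xi_1^{sr_2}-\xi_1^{s'r_2}$. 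Multiplying the pieces, $u_i^{-1}=a_i^{r_2-1}r_2\prod_{s'\ne s}(\xi_1^{sr_2}-\xi_1^{s'r_2})$, hence $u_i=a_i^{1-r_2}r_2^{-1}\prod_{s'\ne s}(\xi_1^{sr_2}-\xi_1^{s'r_2})^{-1}$, and replacing $a_i^{1-r_2}=a_i\,a_i^{-r_2}=a_i\xi_1^{-sr_2}$ gives Equation (\ref{equation.ConC.ui}).

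For the final assertion, multiply through by $a_i^{r_2-1}$ and use $a_i^{r_2}=\xi_1^{sr_2}$ once more to obtain $a_i^{r_2-1}u_i=r_2^{-1}\prod_{s'\ne s}(\xi_1^{sr_2}-\xi_1^{s'r_2})^{-1}$. Since $r_2\mid q-1$ and $p\nmid q-1$, the element $r_2$ is a nonzero element of the prime subfield, so $r_2^{-1}\in\mathbb{F}_p^*\subseteq\mathbb{F}_{p^e}^*$; it therefore suffices to show $\xi_1^{r_2}\in\mathbb{F}_{p^e}^*$, since then every $\xi_1^{sr_2}$ lies in $\mathbb{F}_{p^e}$, each difference $\xi_1^{sr_2}-\xi_1^{s'r_2}$ lies in $\mathbb{F}_{p^e}$ and is nonzero because $u_i\ne0$, and hence $a_i^{r_2-1}u_i\in\mathbb{F}_{p^e}^*$. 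Putting $d_2=\gcd(x_2,q-1)$, so that $r_2=\frac{q-1}{d_2}$, a short computation gives ${\rm ord}(\xi_1^{r_2})=\frac{q-1}{\gcd(q-1,\,x_1 r_2)}=\frac{d_2}{\gcd(d_2,x_1)}$, using $\gcd\bigl(q-1,\tfrac{q-1}{d_2}x_1\bigr)=\tfrac{q-1}{d_2}\gcd(d_2,x_1)$; and an elementary gcd argument shows $\frac{d_2}{\gcd(d_2,x_1)}\mid p^e-1$ is equivalent to $d_2\mid x_1(p^e-1)$, i.e.\ to the hypothesis $\gcd(x_2,q-1)\mid x_1(p^e-1)$. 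Since $\mathbb{F}_{p^e}^*$ is the unique subgroup of $\mathbb{F}_q^*$ of order $p^e-1$, this is precisely $\xi_1^{r_2}\in\mathbb{F}_{p^e}^*$, and Lemma \ref{lem_all_1} then upgrades $a_i^{r_2-1}u_i\in\mathbb{F}_{p^e}^*$ to $a_i^{r_2-1}u_i\in\mathbb{F}_{p^e}^*\subseteq E$ whenever $2e\mid h$. The main obstacle is this last step: the order computation for $\xi_1^{r_2}$ and the verification that ${\rm ord}(\xi_1^{r_2})\mid p^e-1$ is equivalent to the stated divisibility condition; everything before it is routine bookkeeping with roots of unity.
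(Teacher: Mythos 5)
Your proposal is correct, and for the part of the lemma the paper actually argues in detail it follows the same path: write $a_i=\xi_1^s\xi_2^t$, note $a_i^{r_2}=\xi_1^{sr_2}$, deduce $a_i^{r_2-1}u_i=r_2^{-1}\prod_{s'\neq s}(\xi_1^{sr_2}-\xi_1^{s'r_2})^{-1}\in\mathbb{F}_{p^e}^*$ from $\xi_1^{r_2}\in\mathbb{F}_{p^e}^*$, and invoke Lemma \ref{lem_all_1} to get containment in $E$ when $2e\mid h$. The difference is one of self-containedness rather than strategy: where the paper simply cites Lemma III.4 of \cite{RefJ2} for Equation (\ref{equation.ConC.ui}), you rederive it by splitting the product over the cosets $R_{s'}$ and using the identities $\prod_{\zeta\neq 1}(1-\zeta)=r_2$ and $\prod_{\zeta\in\langle\xi_2\rangle}(X-c\zeta)=X^{r_2}-c^{r_2}$, which is exactly the kind of computation the cited lemma encapsulates; and where the paper asserts ``since $\gcd(x_2,q-1)\mid x_1(p^e-1)$, $\xi_1^{r_2}\in\mathbb{F}_{p^e}^*$'' without justification, you supply the missing detail via the order computation $\mathrm{ord}(\xi_1^{r_2})=d_2/\gcd(d_2,x_1)$ with $d_2=\gcd(x_2,q-1)$ and the equivalence of $d_2/\gcd(d_2,x_1)\mid p^e-1$ with $d_2\mid x_1(p^e-1)$, which is correct. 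So your write-up proves a bit more than the paper writes down, but it buys only transparency, not a genuinely different argument; the one thing to keep in mind is that interpreting ``order dividing $p^e-1$'' as membership in $\mathbb{F}_{p^e}^*$ uses $e\mid h$, which is available exactly where the claim is made (under $2e\mid h$).
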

\begin{proof}
  By \cite[Lemma \Rmnum{3}.4]{RefJ2}, we can get Equation (\ref{equation.ConC.ui}) directly.
  Now, we prove that $a_i^{r_2-1}u_i\in \mathbb{F}_{p^e}^*\subseteq E$ for $1\leq i\leq n$ when $2e\mid h$.
  Then by Lemma \ref{lem_all_1}, it is sufficient to prove $a_i^{r_2-1}u_i\in \mathbb{F}_{p^e}^{*}$ for $1\leq i\leq n$.
  Since $a_i\in R_s$ for some $1\leq s\leq r_1$, there exists an integer $t\in\{1,2,\dots,r_2\}$ such that $a_i=\xi _1^s\xi _2^t\in R_s$.
  Note that $r_2={\rm{ord}}(\xi_2)$, then $a_i^{r_2}=\xi_1^{sr_2}\xi_2^{tr_2}=\xi_1^{sr_2}$.
  Since $\gcd(x_2,q-1)\mid x_1(p^e-1)$, $\xi_1^{r_2}\in \mathbb{F}^*_{p^e}$.
  Hence, for each $1\leq i\leq n$,
  $$a_i^{r_2-1}u_i=r_2^{-1}\prod_{1\leq s'\leq r_1,s'\neq s}(\xi_1^{sr_2}-\xi_1^{s'r_2})^{-1} \in \mathbb{F}_{p^e}^*,$$
 which completes the proof.
\end{proof}

\begin{theorem}\label{ConC.1}
  Let $q=p^h$ with $p$ being an odd prime number and $2e\mid h$. Assume that $n=r_1r_2$ for each $1\leq r_1\leq \frac{q-1}{\gcd(q-1,x_1)}$ and $r_2=\frac{q-1}{\gcd(q-1,x_2)}$.
  If $(q-1)\mid {\rm{lcm}}(x_1,x_2)$ and $\gcd(x_2,q-1)\mid x_1(p^e-1)$ with two positive integers $x_1$ and $x_2$,
  then there exists an
  $[n,k,n-k+1]_q$ $e$-Galois self-orthogonal GRS code for $1\leq k\leq \lfloor \frac{p^e+r_2(r_1-1)}{p^e+1} \rfloor$.
\end{theorem}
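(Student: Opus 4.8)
The plan is to mirror the proof of Theorem \ref{ConA.1}, invoking Lemma \ref{lemma.ConC_ui} for the structural input and Lemma \ref{lem_Galois self-orthogonal GRS} for the self-orthogonality conclusion. First I would fix the notation exactly as set up before the statement: $\omega$ a primitive element of $\F_q$, $\xi_1=\omega^{x_1}$, $\xi_2=\omega^{x_2}$, $r_2=\operatorname{ord}(\xi_2)$, $1\leq r_1\leq\operatorname{ord}(\xi_1)$, and $\mathcal R=\bigcup_{i=1}^{r_1}R_i=\{a_1,\dots,a_n\}$ with $n=r_1r_2$, under the hypotheses $(q-1)\mid\operatorname{lcm}(x_1,x_2)$ and $\gcd(x_2,q-1)\mid x_1(p^e-1)$. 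By Lemma \ref{lemma.ConC_ui}, since $2e\mid h$ we have $a_i^{r_2-1}u_i\in\F_{p^e}^*\subseteq E=\{x^{p^e+1}:x\in\F_q^*\}$ for every $1\leq i\leq n$. Hence for each $i$ there is $v_i\in\F_q^*$ with $v_i^{p^e+1}=a_i^{r_2-1}u_i$; set $\mathbf v=(v_1,\dots,v_n)$.

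Next I would choose the auxiliary polynomial. The natural choice paralleling Construction B is $h(x)=x^{r_2-1}\in\F_q[x]$, which is monic of degree $r_2-1$, so that $\lambda u_i h(a_i)=u_i a_i^{r_2-1}=v_i^{p^e+1}$ with $\lambda=1$. To apply Lemma \ref{lem_Galois self-orthogonal GRS} I must check the degree bound $\deg(h(x))\leq n-(p^e+1)k+p^e-1$, i.e. $r_2-1\leq r_1r_2-(p^e+1)k+p^e-1$, which rearranges to $(p^e+1)k\leq r_2(r_1-1)+p^e$, i.e. $k\leq\frac{p^e+r_2(r_1-1)}{p^e+1}$. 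Since $k$ is an integer this is exactly $k\leq\lfloor\frac{p^e+r_2(r_1-1)}{p^e+1}\rfloor$, matching the range in the statement. Then Lemma \ref{lem_Galois self-orthogonal GRS} gives $\GRS_k(\mathbf a,\mathbf v)\subseteq\GRS_k(\mathbf a,\mathbf v)^{\bot_e}$, and since GRS codes are MDS the parameters are $[n,k,n-k+1]_q$. I would also note in passing that the range is nonempty (so a valid $k$ exists), which needs $r_2(r_1-1)\geq 1$, i.e. $r_1\geq 2$ — worth stating or at least the case $r_1=1$ gives only the degenerate bound; I expect the intended reading is $r_1\geq 2$ or $k\geq 1$ forces it implicitly.

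The only genuinely nontrivial point is verifying that $h(x)=x^{r_2-1}$ really produces $v_i^{p^e+1}=\lambda u_i h(a_i)$ for a single global constant $\lambda$; this is immediate from the defining choice of $v_i$ above, so there is essentially no obstacle — the substance of the argument is entirely contained in Lemma \ref{lemma.ConC_ui}, whose proof already did the work of showing $a_i^{r_2-1}u_i$ lies in $\F_{p^e}^*$ (using $\xi_1^{r_2}\in\F_{p^e}^*$, which is where $\gcd(x_2,q-1)\mid x_1(p^e-1)$ enters, together with $\xi_2^{r_2}=1$). So the proof is a short assembly: extract $v_i$ from the $(p^e+1)$-th power condition, take $\lambda=1$, $h(x)=x^{r_2-1}$, check the one degree inequality, and cite Lemmas \ref{lemma.ConC_ui} and \ref{lem_Galois self-orthogonal GRS}. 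I would keep it to about five lines, exactly in the style of the proofs of Theorems \ref{ConA.1} and \ref{ConB.1}.
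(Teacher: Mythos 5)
Your proposal is correct and coincides with the paper's own proof: the paper likewise extracts $v_i$ from $a_i^{r_2-1}u_i=v_i^{p^e+1}$ via Lemma \ref{lemma.ConC_ui}, takes $\lambda=1$ and $h(x)=x^{r_2-1}$, verifies $\deg(h(x))=r_2-1\leq p^e+n-(p^e+1)k-1$ for $k\leq\lfloor\frac{p^e+r_2(r_1-1)}{p^e+1}\rfloor$, and concludes by Lemma \ref{lem_Galois self-orthogonal GRS}. Your extra remark about the range being nonempty only when $r_2(r_1-1)\geq 1$ is a harmless observation the paper leaves implicit.
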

\begin{proof}
  Let notations be the same as before. By Lemma \ref{lemma.ConC_ui}, $a_i^{r_2-1}u_i\in \mathbb{F}_{p^e}^*\subseteq E$ when $2e\mid h$. Hence, there exists $v_i\in \mathbb{F}_q^*$
  such that $a_i^{r_2-1}u_i=v_i^{p^e+1}$ for $1\leq i \leq n$. Set $\mathbf{v}=(v_1,v_2,\dots,v_n)$.
  For $1\leq k\leq \lfloor \frac{p^e+n-r_2}{p^e+1} \rfloor=\lfloor \frac{p^e+r_2(r_1-1)}{p^e+1} \rfloor$, consider any codeword
  $\mathbf{c}=(v_1f(a_1),v_2f(a_2),\dots,v_nf(a_n))\in \GRS_k(\mathbf{a},\mathbf{v})$ with
  $\deg(f(x))\leq k-1$. Let $\lambda=1$ and $h(x)=x^{r_2-1}\in \mathbb{F}_q[x]$. Note that
  \begin{align*}
    \deg(h(x))=r_2-1\leq p^e+n-(p^e+1)k-1,
  \end{align*}
  then by Lemma \ref{lem_Galois self-orthogonal GRS}, we can deduce that $\GRS_k(\mathbf{a},\mathbf{v})$ is an $e$-Galois self-orthogonal GRS code of length $n$.
\end{proof}

Next, by adding $a_{n+1}=0$ to Equation (\ref{equation.ConC_ai}), then a new family of $e$-Galois self-orthogonal extended GRS codes of length $n+2$ can be derived
based on the following lemma. Similarly, combining \cite[Theorem \Rmnum{3}.5]{RefJ2} and Lemma \ref{lem_all_1}, the lemma is direct, so we omit the proof again.

\begin{lemma}\label{lemma_ConC_wi}
  Let notations be the same as before. Put $a_{n+1}=0$, then for any $1\leq i\leq n$,
  \begin{align*}
    \prod_{1\leq j\leq n+1, j\neq i}(a_{i}-a_{j})^{-1}=a_{i}^{-1}\prod_{1\leq j\leq n, j\neq i}(a_{i}-a_{j})^{-1}\in \mathbb{F}^{*}_{p^e},
  \end{align*}
  and for $i=n+1$,
  \begin{align*}
    \prod_{j=1}^{n}(a_{n+1}-a_{j})^{-1}=(-1)^{n}\xi_1^{-\frac{r_1r_2(r_1+1)}{2}}\xi_2^{\frac{r_1r_2(r_2+1)}{2}}\in \mathbb{F}^{*}_{p^e}.
  \end{align*}
  Still denote $\prod_{1\leq i\leq n+1,j\neq i}(a_i-a_j)^{-1}$ by $u_i$. Then $u_{i}\in \mathbb{F}^{*}_{p^e}\subseteq E$ for $1\leq i\leq n+1$ when $2e\mid h$.
\end{lemma}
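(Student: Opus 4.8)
The plan is to follow the template already used for the analogous Lemmas in Constructions A and B, since the statement of Lemma \ref{lemma_ConC_wi} is structurally identical: we add the point $a_{n+1}=0$ to the code locator set $\mathcal{R}$ of Construction C and must re-express the new products $\prod_{1\leq j\leq n+1,\,j\neq i}(a_i-a_j)^{-1}$ and show they lie in $\mathbb{F}_{p^e}^*$, hence (by Lemma \ref{lem_all_1}) in $E$ when $2e\mid h$. First I would handle the case $1\leq i\leq n$: here $a_{n+1}=0$ is one of the factors, so $\prod_{1\leq j\leq n+1,\,j\neq i}(a_i-a_j)^{-1}=(a_i-0)^{-1}\prod_{1\leq j\leq n,\,j\neq i}(a_i-a_j)^{-1}=a_i^{-1}u_i$, where $u_i$ is as in \eqref{equation.ui} for the original length-$n$ code. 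By Lemma \ref{lemma.ConC_ui} we know $a_i^{r_2-1}u_i\in\mathbb{F}_{p^e}^*$; multiplying the explicit formula \eqref{equation.ConC.ui} by $a_i^{-1}$ and recalling $a_i^{r_2}=\xi_1^{sr_2}\in\mathbb{F}_{p^e}^*$ (since $\gcd(x_2,q-1)\mid x_1(p^e-1)$ forces $\xi_1^{r_2}\in\mathbb{F}_{p^e}^*$), one gets $a_i^{-1}u_i=a_i^{r_2-1}u_i\cdot a_i^{-r_2}\in\mathbb{F}_{p^e}^*$, which is exactly the claimed membership.

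Next I would treat $i=n+1$, i.e. compute $\prod_{j=1}^{n}(0-a_j)^{-1}=(-1)^n\big(\prod_{j=1}^n a_j\big)^{-1}$. The task is to evaluate $\prod_{j=1}^n a_j$ over $\mathcal{R}=\bigcup_{i=1}^{r_1}R_i$ with $R_i=\{\xi_1^i\xi_2^j: 1\leq j\leq r_2\}$. Writing each element as $\xi_1^i\xi_2^j$ and multiplying over all $1\leq i\leq r_1$, $1\leq j\leq r_2$, the $\xi_1$-exponent is $r_2\sum_{i=1}^{r_1} i = r_2\cdot\frac{r_1(r_1+1)}{2}$ and the $\xi_2$-exponent is $r_1\sum_{j=1}^{r_2} j = r_1\cdot\frac{r_2(r_2+1)}{2}$, so $\prod_{j=1}^n a_j=\xi_1^{r_1r_2(r_1+1)/2}\xi_2^{r_1r_2(r_2+1)/2}$ and its inverse gives the stated $(-1)^n\xi_1^{-r_1r_2(r_1+1)/2}\xi_2^{r_1r_2(r_2+1)/2}$. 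It then remains to verify this product lies in $\mathbb{F}_{p^e}^*$: since $r_2={\rm ord}(\xi_2)$ and $\xi_2^{r_1r_2(r_2+1)/2}$ is a power of $\xi_2^{r_2}=1$ when $r_2(r_2+1)/2$ times $r_1$ is a multiple of $r_2$ (which it is, as $r_1 r_2(r_2+1)/2$ is $r_2$ times an integer after accounting for the parity of $r_2$ — the standard argument used identically in \cite[Theorem \Rmnum{3}.5]{RefJ2}), that factor is $1$; and $\xi_1^{-r_1r_2(r_1+1)/2}$ is a power of $\xi_1^{r_2}\in\mathbb{F}_{p^e}^*$ up to an integer exponent, hence lies in $\mathbb{F}_{p^e}^*$. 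Finally, combining both cases, $u_i\in\mathbb{F}_{p^e}^*$ for all $1\leq i\leq n+1$, and Lemma \ref{lem_all_1} upgrades this to $u_i\in\mathbb{F}_{p^e}^*\subseteq E$ precisely when $2e\mid h$.

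Since the text explicitly says ``combining \cite[Theorem \Rmnum{3}.5]{RefJ2} and Lemma \ref{lem_all_1}, the lemma is direct, so we omit the proof again,'' the cleanest route is simply to cite that theorem for the two explicit product formulas and invoke Lemma \ref{lem_all_1} for the $E$-membership, with the $i\leq n$ case being the one-line factoring-out of $a_i$ observed above. The only genuine obstacle is the $i=n+1$ bookkeeping: one must be careful with the parity of $r_2$ (and $r_1$) in the exponent $r_1r_2(r_2+1)/2$ to be sure the $\xi_2$-contribution really collapses to an element of $\mathbb{F}_{p^e}^*$ (indeed to $1$), but this is exactly the computation already carried out in \cite{RefJ2} and involves no new idea. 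I expect no difficulty beyond transcription, which is why omitting a detailed proof is justified.
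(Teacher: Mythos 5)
Your proposal is correct and follows essentially the same route as the paper, which itself omits the argument and simply cites \cite[Theorem \Rmnum{3}.5]{RefJ2} together with Lemma \ref{lem_all_1}; your factoring-out of $a_i^{-1}$ for $1\leq i\leq n$ and the exponent bookkeeping for $i=n+1$ are exactly the computations behind that citation. One small caveat: the $\xi_2$-contribution need not collapse to $1$ (when $r_2$ is even and $r_1$ is odd it equals $\xi_2^{r_2/2}=-1$), but since $-1\in\mathbb{F}_{p^e}^*$ this does not affect the conclusion, and it also explains why the sign of the $\xi_2$-exponent in the stated formula is immaterial.
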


\begin{theorem}\label{ConC.2}
    Let $q=p^h$ with $p$ being an odd prime number and $2e\mid h$. Assume that $n=r_1r_2$ for each $1\leq r_1\leq \frac{q-1}{\gcd(q-1,x_1)}$ and $r_2=\frac{q-1}{\gcd(q-1,x_2)}$.
    If $(q-1)\mid {\rm{lcm}}(x_1,x_2)$, $\gcd(q-1,x_2)\mid x_1(p^e-1)$ and $(p^e+1)\mid n$ with two positive integers $x_1$ and $x_2$,
    then there exists an $[n+2,k,\frac{p^en}{p^e+1}+2]_q$ $e$-Galois self-orthogonal extended GRS code, where $k=\frac{n}{p^e+1}+1$.
\end{theorem}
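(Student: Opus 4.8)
The plan is to apply Lemma \ref{lem.Galois self-orthogonal EGRS} to the point set $\mathcal{R}\cup\{0\}=\{a_1,\dots,a_n,a_{n+1}=0\}$ of size $n+1$, producing an extended GRS code of length $n+2$. First I would invoke Lemma \ref{lemma_ConC_wi}: it tells us that the Lagrange-type products $u_i=\prod_{1\le j\le n+1,\,j\neq i}(a_i-a_j)^{-1}$ all lie in $\mathbb{F}_{p^e}^*$, and since $2e\mid h$ Lemma \ref{lem_all_1} gives $\mathbb{F}_{p^e}^*\subseteq E=\{x^{p^e+1}:x\in\mathbb{F}_q^*\}$. Because $p$ is odd, $-1\in\mathbb{F}_{p^e}^*$, so $-u_i\in\mathbb{F}_{p^e}^*\subseteq E$ as well; hence for each $1\le i\le n+1$ there exists $v_i'\in\mathbb{F}_q^*$ with $-u_i=(v_i')^{p^e+1}$. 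Set $\mathbf{v'}=(v_1',\dots,v_{n+1}')$.

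Next I would check the degree condition in Lemma \ref{lem.Galois self-orthogonal EGRS} with the trivial choice $h(x)=1$. The relevant code length there is $n+1$ (the set $\mathcal{R}\cup\{0\}$ has $n+1$ elements), so the requirement $\deg(h(x))=p^e+(n+1)-(p^e+1)k$ becomes $0=p^e+n+1-(p^e+1)k$, i.e. $k=\frac{p^e+n+1}{p^e+1}$. The hypothesis $(p^e+1)\mid n$ guarantees this is an integer, and one computes $k=\frac{n}{p^e+1}+1$ as claimed. With $h(x)=1$ monic of degree $0$ and $-u_ih(a_i)=-u_i=(v_i')^{p^e+1}$ for all $i$, Equation (\ref{EquationGalois self-orthogonal EGRS}) holds verbatim, so Lemma \ref{lem.Galois self-orthogonal EGRS} yields that $\GRS_k(\mathbf{a},\mathbf{v'},\infty)$ (with code locators $a_1,\dots,a_n,0$) is an $e$-Galois self-orthogonal code of length $(n+1)+1=n+2$.

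Finally I would pin down the minimum distance. Extended GRS codes are MDS, so the length-$(n+2)$, dimension-$k$ code has $d=(n+2)-k+1=n+3-k=n+3-\bigl(\frac{n}{p^e+1}+1\bigr)=n+2-\frac{n}{p^e+1}=\frac{(p^e+1)n+2(p^e+1)-n}{p^e+1}=\frac{p^en+2(p^e+1)}{p^e+1}=\frac{p^en}{p^e+1}+2$, matching the statement. Packaging these computations gives the theorem.

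The only genuinely substantive point is the reduction of the $u_i$ at the new locator $0$ to an element of $\mathbb{F}_{p^e}^*$, and that is already done for us in Lemma \ref{lemma_ConC_wi} (which cites \cite[Theorem \Rmnum{3}.5]{RefJ2}); everything else is bookkeeping of degrees and an application of the Singleton bound. So I do not anticipate a real obstacle — the proof is essentially the same template as Theorems \ref{ConA.2} and \ref{ConB.2}, with the norm/trace structure replaced by the direct-product-of-cyclic-subgroups structure and with the extra condition $(p^e+1)\mid n$ ensuring the dimension $k$ is an integer. If anything, I would double-check that adjoining $0$ does not collide with an existing locator (it cannot, since $0\notin\mathbb{F}_q^*\supseteq\mathcal{R}$) and that the hypotheses $(q-1)\mid\mathrm{lcm}(x_1,x_2)$ and $\gcd(q-1,x_2)\mid x_1(p^e-1)$ are exactly what Lemma \ref{lemma_ConC_wi} needs.
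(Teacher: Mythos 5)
Your proposal is correct and follows essentially the same route as the paper: apply Lemma \ref{lemma_ConC_wi} to get $-u_i\in\mathbb{F}_{p^e}^*\subseteq E$ on the locator set augmented by $0$, choose $v_i'$ with $-u_i=(v_i')^{p^e+1}$, take $h(x)=1$ so that $\deg(h(x))=0=p^e+n+1-(p^e+1)k$ with $k=\frac{n}{p^e+1}+1$, invoke Lemma \ref{lem.Galois self-orthogonal EGRS}, and finish with the Singleton bound. No gaps.
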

\begin{proof}
  Let notations be the same as before. Since $u_i\in \mathbb{F}_{p^e}^*\subseteq E$ by Lemma \ref{lemma_ConC_wi}, $-u_i\in \mathbb{F}_{p^e}^*\subseteq E$. Hence, there exists $v'_i\in \mathbb{F}_q^*$
  such that $-u_i=(v'_i)^{p^e+1}$ for $1\leq i\leq n+1$. Set $\mathbf{v'}=(v'_1,v'_2,\dots,v'_{n+1})$.
  Since $(p^e+1)\mid n$, we can set $k=\frac{p^e+n+1}{p^e+1}=\frac{n}{p^e+1}+1$. Consider any codeword
  $\mathbf{c}=(v'_1f(a_1),v'_2f(a_2),\dots,v'_{n+1}f(a_{n+1}),f_{k-1}) \in \GRS_k(\mathbf{a},\mathbf{v'},\infty)$ with $\deg(f(x))\leq k-1$. Let $h(x)=1$. Note that
  \begin{align*}
    \deg(h(x))=0=p^e+n+1-(p^e+1)k,
\end{align*}
then by Lemma \ref{lem.Galois self-orthogonal EGRS}, we can deduce that $\GRS_k(\mathbf{a},\mathbf{v'},\infty)$ is an $e$-Galois self-orthogonal extended GRS code of length $n+2$.
And by the Singleton bound, the desired result follows.
\end{proof}

\subsection{Construction D via the coset decomposition of a cyclic group}\label{ConD}

Let $q=p^h$ be an odd prime power and $\omega$ be a primitive element of $\mathbb{F}_q$. For $e\mid h$, denote by $y=\frac{q-1}{p^e-1}$ and $m_1m_2=m\mid (q-1)$ with
$m_1=\frac{m}{\gcd(m,y)}$ and $m_2=\gcd(m, y)$. Consider $H=\langle \vartheta _1\rangle $, $G =\langle \vartheta _2\rangle $, where $\vartheta _1=\omega^{\frac{q-1}{m}}$,
$\vartheta _2=\omega^{\frac{y}{m_2}}$. Then ${\rm{ord}}(H)={\rm{ord}}(\vartheta_1)=m$ and ${\rm{ord}}(G)={\rm{ord}}(\vartheta_2)=(p^e-1)m_2$.

Assume that $n = rm$ with $1\leq r\leq \frac{p^e-1}{m_1}$, and denote
\begin{equation}\label{equation.ConD_ai}
    \mathcal{H}=\bigcup _{i=1}^{r}\eta _i H= \{a_1,a_2,\dots,a_n\},
\end{equation}
where $\eta _i$ is the left coset representative of $G/H$ for $i =1, 2,\dots, \frac{p^e-1}{m_1}$.
Readers can get further elaboration on the relation between $H$ and $G$ from \cite{RefJ2}.
Then the following lemma can be derived.

\begin{lemma}\label{lemma_ConD_ui}
  Let $a_i$ and $u_i$ be defined as in Equations (\ref{equation.ConD_ai}) and (\ref{equation.ui}), respectively. Given $1\leq i\leq n$, suppose $a_i\in \eta_sH$ for some $1\leq s\leq r$.
  Assume that $e\mid h$ and $m\mid (q-1)$. Then
  \begin{equation}\label{equation_ConD_ui}
    u_i= a_i\eta _{s}^{-m}m^{-1}\prod_{1\leq s'\leq r, s'\neq s}(\eta _{s}^{m}-\eta ^m_{s'})^{-1}.
  \end{equation}
  And further, $a_i^{m-1}u_i\in \mathbb{F}^{*}_{p^e}\subseteq E$ for $1\leq i\leq n$ when $2e\mid h$.
\end{lemma}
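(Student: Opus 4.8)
The plan is to mimic the structure of the proofs of Lemmas~\ref{lemma_ConB_ui} and \ref{lemma.ConC_ui}: cite the explicit formula for $u_i$ from the relevant result in \cite{RefJ2}, then extract the factor $a_i^{m-1}u_i$ and show it lies in $\mathbb{F}_{p^e}^*$, after which membership in $E$ follows from Lemma~\ref{lem_all_1} once $2e\mid h$. So first I would invoke \cite[Lemma~\Rmnum{3}.?]{RefJ2} (the coset-decomposition analogue) to obtain Equation~(\ref{equation_ConD_ui}) directly under the standing hypotheses $e\mid h$ and $m\mid(q-1)$; this is the ``routine'' part and requires no new work.

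Next I would compute $a_i^{m-1}u_i$. Writing $a_i=\eta_s\vartheta_1^{t}$ for some $t$ (since $a_i\in\eta_s H$ and $H=\langle\vartheta_1\rangle$ has order $m$), one gets $a_i^{m}=\eta_s^{m}\vartheta_1^{tm}=\eta_s^{m}$ because $\vartheta_1^{m}=1$. Substituting into (\ref{equation_ConD_ui}) gives
\begin{align*}
  a_i^{m-1}u_i = a_i^{m}\,\eta_s^{-m}m^{-1}\prod_{1\leq s'\leq r,\,s'\neq s}(\eta_s^{m}-\eta_{s'}^{m})^{-1}
             = m^{-1}\prod_{1\leq s'\leq r,\,s'\neq s}(\eta_s^{m}-\eta_{s'}^{m})^{-1},
\end{align*}
so the $a_i$-dependence cancels and it remains to argue this expression lies in $\mathbb{F}_{p^e}^*$. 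The key observation is that $\eta_s^{m}\in\mathbb{F}_{p^e}^*$ for every $s$: the coset representatives $\eta_s$ are taken from $G=\langle\vartheta_2\rangle$ with $\vartheta_2=\omega^{y/m_2}$, and $\vartheta_2^{m}=\omega^{my/m_2}=\omega^{m_1 y}$; since $y=\frac{q-1}{p^e-1}$, we have $\vartheta_2^{m}=\bigl(\omega^{y}\bigr)^{m_1}=\bigl(\omega^{(q-1)/(p^e-1)}\bigr)^{m_1}$, and $\omega^{(q-1)/(p^e-1)}$ generates $\mathbb{F}_{p^e}^*$, so $\eta_s^{m}\in\langle\vartheta_2^{m}\rangle\subseteq\mathbb{F}_{p^e}^*$. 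Also $m\in\mathbb{F}_p^*\subseteq\mathbb{F}_{p^e}^*$ (interpreting $m$ as an element of the prime field, which is nonzero since $p\nmid m$ as $m\mid q-1$). Hence $a_i^{m-1}u_i$ is a product and quotient of elements of $\mathbb{F}_{p^e}^*$, so it lies in $\mathbb{F}_{p^e}^*$.

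Finally, with $2e\mid h$, Lemma~\ref{lem_all_1} gives $\mathbb{F}_{p^e}^*\subseteq E=\{x^{p^e+1}:x\in\mathbb{F}_q^*\}$, so $a_i^{m-1}u_i\in\mathbb{F}_{p^e}^*\subseteq E$ for all $1\leq i\leq n$, completing the proof. I expect the only genuinely delicate point to be the bookkeeping that $\eta_s^{m}\in\mathbb{F}_{p^e}^*$---i.e., correctly tracking how the definitions of $m_1$, $m_2$, $y$, $\vartheta_1$, $\vartheta_2$ interact so that the $m$-th power of a $G$-coset representative lands in $\mathbb{F}_{p^e}^*$; the cancellation of $a_i$ and the final appeal to Lemma~\ref{lem_all_1} are then immediate. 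In the written proof I would likely just say ``by \cite[Lemma~\Rmnum{3}.?]{RefJ2}'' for the formula and for the fact that $\prod_{s'\neq s}(\eta_s^m-\eta_{s'}^m)^{-1}\in\mathbb{F}_{p^e}^*$, mirroring the style already used for Constructions~B and C, and omit the detailed verification.
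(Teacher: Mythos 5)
Your proposal is correct and follows essentially the same route as the paper's proof: cite \cite[Lemma~\Rmnum{3}.6]{RefJ2} for Equation~(\ref{equation_ConD_ui}), use $a_i^m=\eta_s^m$ and $\eta_s^m=\vartheta_2^{jm}=\omega^{jm_1y}\in\mathbb{F}_{p^e}^*$ to conclude $a_i^{m-1}u_i\in\mathbb{F}_{p^e}^*$, then invoke Lemma~\ref{lem_all_1} for the inclusion in $E$ when $2e\mid h$. Your explicit remark that $m^{-1}\in\mathbb{F}_p^*$ because $p\nmid m$ (as $m\mid q-1$) is a small detail the paper leaves implicit, but otherwise the arguments coincide.
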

\begin{proof}
  By \cite[Lemma \Rmnum{3}.6]{RefJ2}, we can get Equation (\ref{equation_ConD_ui}) directly. Now, we prove that $a_i^{m-1}u_i\in \mathbb{F}^{*}_{p^e}\subseteq E$
  for $1\leq i\leq n$ when $2e\mid h$. Then by Lemma \ref{lem_all_1}, it is sufficient to prove $a_i^{m-1}u_i\in \mathbb{F}^{*}_{p^e}$ for $1\leq i\leq n$.
  Since $a_i\in \eta_sH$ for some $1\leq s\leq r$, there exists an integer $1\leq t\leq m$ such that $a_i=\eta _s\vartheta _{1}^{t}$. Note that ${\rm{ord}}(\vartheta_1)=m$, then
  $a_i^m=\eta_s^m \vartheta_{1}^{mt}=\eta_s^m$. For any $1\leq i\leq r$, note that $\eta _i$ is the left coset representative of $G/H$, then $\eta_i = \vartheta_2^j$ for some $1\leq j\leq (p^e-1)m_2$.
  Hence, $\eta_i^m=\vartheta_2^{jm}=\omega^{jm_1y}\in \mathbb{F}^*_{p^e}$. Moreover, we have
 \begin{align*}
        a_i^{m-1}u_i =m^{-1}\prod_{1\leq s'\leq r, s'\neq s}(\eta _{s}^{m}-\eta ^m_{s'})^{-1}\in \mathbb{F}^*_{p^e},
 \end{align*}
  which completes the proof.
\end{proof}

\begin{theorem}\label{ConD.1}
    Let $q=p^h$ with $p$ being an odd prime number and $2e\mid h$. Assume that $n=rm$ and $m\mid(q-1)$ for each $1\leq r\leq \frac{p^e-1}{m_1}$ with $m_1 =\frac{m}{\gcd(m,y)}$ for $y=\frac{q-1}{p^e-1}$.
    Then there exists an $[n, k,n-k+1]_q$ $e$-Galois self-orthogonal GRS code for $1\leq k\leq \lfloor\frac{p^e+m(r-1)}{p^e+1}\rfloor$.
\end{theorem}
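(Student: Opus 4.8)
The plan is to mirror almost verbatim the proofs of Theorems \ref{ConA.1}, \ref{ConB.1} and \ref{ConC.1}: once Lemma \ref{lemma_ConD_ui} gives us a usable formula for the $u_i$, the code $\GRS_k(\mathbf{a},\mathbf{v})$ is produced directly by Lemma \ref{lem_Galois self-orthogonal GRS}. First I would invoke Lemma \ref{lemma_ConD_ui}, which under the standing hypotheses ($q=p^h$ odd, $2e\mid h$, $n=rm$, $m\mid(q-1)$, $1\le r\le \frac{p^e-1}{m_1}$) tells us that $a_i^{m-1}u_i\in\mathbb F_{p^e}^*\subseteq E=\{x^{p^e+1}:x\in\mathbb F_q^*\}$. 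Hence for each $1\le i\le n$ there is a $v_i\in\mathbb F_q^*$ with $v_i^{p^e+1}=a_i^{m-1}u_i$; set $\mathbf v=(v_1,\dots,v_n)$.

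Next I would choose the auxiliary monic polynomial $h(x)=x^{m-1}\in\mathbb F_q[x]$, exactly as in Theorem \ref{ConC.1} with $r_2$ replaced by $m$. Then $h(a_i)=a_i^{m-1}$, so $\lambda u_i h(a_i)=u_i a_i^{m-1}=v_i^{p^e+1}$ holds with $\lambda=1$, which is precisely Equation (\ref{EquationGalois self-orthogonal GRS}). To apply Lemma \ref{lem_Galois self-orthogonal GRS} I must check the degree condition $\deg(h(x))\le n-(p^e+1)k+p^e-1$, i.e. $m-1\le n-(p^e+1)k+p^e-1$, equivalently $(p^e+1)k\le p^e+n-m=p^e+m(r-1)$, i.e. $k\le\lfloor\frac{p^e+m(r-1)}{p^e+1}\rfloor$, which is exactly the stated range of $k$. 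Lemma \ref{lem_Galois self-orthogonal GRS} then yields that $\GRS_k(\mathbf a,\mathbf v)$ is $e$-Galois self-orthogonal of length $n$; being a GRS code it is MDS with minimum distance $d=n-k+1$, giving the parameters $[n,k,n-k+1]_q$.

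I do not expect a genuine obstacle here: the only substantive input is the closed form for $u_i$ and the membership $a_i^{m-1}u_i\in\mathbb F_{p^e}^*$, both of which are already established in Lemma \ref{lemma_ConD_ui}. The one point to state carefully is that the range $1\le k\le\lfloor\frac{p^e+m(r-1)}{p^e+1}\rfloor$ is nonempty (so the theorem is not vacuous), which follows since $r\ge1$ forces $p^e+m(r-1)\ge p^e\ge p^e+1$ is false in general — rather one notes $\frac{p^e+m(r-1)}{p^e+1}\ge\frac{p^e}{p^e+1}$, and for $r\ge1$ with $m\ge1$ one has $p^e+m(r-1)\ge p^e$; the floor is at least $1$ precisely when $m(r-1)\ge1$, and the case $r=1$ degenerates to $k\le\lfloor\frac{p^e}{p^e+1}\rfloor=0$, so implicitly $r\ge2$ (or one simply reads the conclusion as an empty statement when $r=1$, consistent with the analogous earlier theorems). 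Modulo this bookkeeping remark, the proof is a two-line specialization of Lemma \ref{lem_Galois self-orthogonal GRS}.
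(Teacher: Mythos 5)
Your proof is correct and follows essentially the same route as the paper's: invoke Lemma \ref{lemma_ConD_ui} to write $a_i^{m-1}u_i=v_i^{p^e+1}$, take $\lambda=1$ and $h(x)=x^{m-1}$, verify $\deg(h)=m-1\le p^e+n-(p^e+1)k-1$ (using $n=rm$), and conclude via Lemma \ref{lem_Galois self-orthogonal GRS}. Your side remark about the $r=1$ case being vacuous is a harmless observation the paper does not make and does not affect correctness.
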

\begin{proof}
    Let notations be the same as before. By Lemma \ref{lemma_ConD_ui}, $a_i^{m-1}u_i\in \mathbb{F}_{p^e}^*\subseteq E$ when $2e\mid h$. Hence, there exists $v_i\in \mathbb{F}_q^*$ such that $a_i^{m-1}u_i=v_i^{p^e+1}$
    for $1\leq i\leq n$. Set $\mathbf{v}=(v_1,v_2,\dots,v_n)$.
    For $1\leq k\leq \lfloor \frac{p^e+n-m}{p^e+1} \rfloor=\lfloor\frac{p^e+m(r-1)}{p^e+1}\rfloor$, consider any codeword
    $\mathbf{c}=(v_1f(a_1),v_2f(a_2),\dots,v_nf(a_n))\in \GRS_k(\mathbf{a},\mathbf{v})$
    with $\deg(f(x))\leq k-1$. Let $\lambda=1$ and $h(x)=x^{m-1} \in \mathbb{F}_q[x]$. Note that
    \begin{align*}
      \deg(h(x))=m-1\leq p^e+n-(p^e+1)k-1,
    \end{align*}
    then by Lemma \ref{lem_Galois self-orthogonal GRS}, we can deduce that $\GRS_k(\mathbf{a},\mathbf{v})$ is an $e$-Galois self-orthogonal GRS code of length $n$.
  \end{proof}

  Taking the same manners with Lemmas \ref{lem.ConB.wi} and \ref{lemma_ConC_wi}, we can derive the following lemma from \cite[Theorem \Rmnum{3}.8]{RefJ2} and Lemma \ref{lem_all_1}.

  \begin{lemma}\label{lemma_ConD_wi}
    Let notations be the same as before. Put $a_{n+1}=0$, then for any $1\leq i\leq n$,
    \begin{align*}
        \prod_{1\leq j\leq n+1, j\neq i}(a_{i}-a_{j})^{-1}=a_{i}^{-1}\prod_{1\leq j\leq n, j\neq i}(a_{i}-a_{j})^{-1}\in \mathbb{F}^{*}_{p^e},
    \end{align*}
    and for $i=n+1$,
    \begin{align*}
        \prod_{j=1}^{n}(a_{n+1}-a_{j})^{-1}=(-1)^{n}\vartheta _1^{-\frac{rm(m+1)}{2}}\prod_{i=1}^{r}\eta _{i}^{-m}\in \mathbb{F}^{*}_{p^e}.
    \end{align*}
    Still denote $\prod_{1\leq i\leq n+1,j\neq i}(a_i-a_j)^{-1}$ by $u_i$. Then $u_{i}\in \mathbb{F}^{*}_{p^e}\subseteq E$ for $1\leq i\leq n+1$ when $2e\mid h$.
  \end{lemma}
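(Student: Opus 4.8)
The plan is to follow the same three-part template used for Lemmas \ref{lem.ConB.wi} and \ref{lemma_ConC_wi}, adapting the coset-decomposition computation of \cite[Theorem \Rmnum{3}.8]{RefJ2} to the current normalization. First I would recall the setup of Construction D: with $H=\langle\vartheta_1\rangle$ of order $m$ and $G=\langle\vartheta_2\rangle$ of order $(p^e-1)m_2$, the code locators are $a_1,\dots,a_n$ exhausting the $r$ cosets $\eta_1 H,\dots,\eta_r H$, and we now append $a_{n+1}=0$. The key algebraic facts already established in the proof of Lemma \ref{lemma_ConD_ui} are that $a_i^m=\eta_s^m$ when $a_i\in\eta_s H$, and that $\eta_i^m=\vartheta_2^{jm}=\omega^{jm_1 y}\in\mathbb{F}_{p^e}^*$; these are exactly what will make the new $u_i$'s land in $\mathbb{F}_{p^e}^*$.

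For the first displayed identity, $\prod_{1\le j\le n+1,\,j\ne i}(a_i-a_j)^{-1} = a_i^{-1}\prod_{1\le j\le n,\,j\ne i}(a_i-a_j)^{-1}$, I would simply split off the $j=n+1$ factor $(a_i-a_{n+1})^{-1}=(a_i-0)^{-1}=a_i^{-1}$; that the resulting product lies in $\mathbb{F}_{p^e}^*$ follows because by Lemma \ref{lemma_ConD_ui} the original $u_i=\prod_{1\le j\le n,\,j\ne i}(a_i-a_j)^{-1}$ satisfies $a_i^{m-1}u_i\in\mathbb{F}_{p^e}^*$, hence $a_i^{-1}u_i = a_i^{-m}\cdot(a_i^{m-1}u_i) = \eta_s^{-m}\cdot(a_i^{m-1}u_i)\in\mathbb{F}_{p^e}^*$ since $\eta_s^m\in\mathbb{F}_{p^e}^*$. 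For the second displayed identity, $\prod_{j=1}^n(a_{n+1}-a_j)^{-1} = \prod_{j=1}^n(-a_j)^{-1} = (-1)^n\prod_{j=1}^n a_j^{-1}$, I would evaluate $\prod_{j=1}^n a_j$ by grouping the $n=rm$ locators into the $r$ cosets: over a single coset $\eta_s H$ one has $\prod_{t=1}^{m}\eta_s\vartheta_1^t = \eta_s^m\,\vartheta_1^{m(m+1)/2}$, and multiplying over $s=1,\dots,r$ gives $\prod_{j=1}^n a_j = \vartheta_1^{rm(m+1)/2}\prod_{i=1}^r\eta_i^m$; taking the reciprocal yields the claimed $(-1)^n\vartheta_1^{-rm(m+1)/2}\prod_{i=1}^r\eta_i^{-m}$. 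Since each $\eta_i^m\in\mathbb{F}_{p^e}^*$ and $\vartheta_1^m=\omega^{q-1}=1$ so $\vartheta_1^{-rm(m+1)/2}$ is a power of $\vartheta_1$ that, after reducing the exponent modulo $m$, is an $m$-th root of unity — one checks it lies in $\mathbb{F}_{p^e}^*$ using $m\mid(q-1)$ exactly as in \cite[Theorem \Rmnum{3}.8]{RefJ2} — the whole expression is in $\mathbb{F}_{p^e}^*$.

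Finally, writing $u_{n+1}:=\prod_{j=1}^n(a_{n+1}-a_j)^{-1}$ and keeping the notation $u_i$ for the extended products when $1\le i\le n$, both cases show $u_i\in\mathbb{F}_{p^e}^*$, and then Lemma \ref{lem_all_1} (with $2e\mid h$, which gives $\mathbb{F}_{p^e}^*\subseteq E$) upgrades this to $u_i\in E$ for all $1\le i\le n+1$. I expect the only mildly delicate point to be the bookkeeping of the exponent of $\vartheta_1$ (and of the sign $(-1)^n$) in the second identity and confirming it reduces to an element of $\mathbb{F}_{p^e}^*$; but this is identical in spirit to the computations already carried out in Lemmas \ref{lem.ConB.wi} and \ref{lemma_ConC_wi}, so as the paper does I would cite \cite[Theorem \Rmnum{3}.8]{RefJ2} together with Lemma \ref{lem_all_1} and omit the routine verification.
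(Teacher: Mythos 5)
Your proposal is correct and follows essentially the same route as the paper, which simply states that the lemma follows from \cite[Theorem \Rmnum{3}.8]{RefJ2} together with Lemma \ref{lem_all_1} "in the same manner" as Lemmas \ref{lem.ConB.wi} and \ref{lemma_ConC_wi} and omits the verification; your coset-by-coset computation of $\prod_{j=1}^n a_j$ and the use of $a_i^m=\eta_s^m\in\mathbb{F}_{p^e}^*$ from Lemma \ref{lemma_ConD_ui} are exactly the routine details being suppressed. (For the one point you flag as delicate, note $\vartheta_1^{rm(m+1)/2}$ equals $1$ when $m$ is odd and $(-1)^r$ when $m$ is even, so it lies in $\mathbb{F}_{p^e}^*$ immediately.)
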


 Based on Lemma \ref{lemma_ConD_wi}, a new family of $e$-Galois self-orthogonal extended GRS codes 
 can be constructed in the same manner as Theorem \ref{ConC.2}.
 We give our construction in Theorem \ref{ConD.2} directly.

  \begin{theorem}\label{ConD.2}
    Let $q=p^h$ with $p$ being an odd prime number and $2e\mid h$. Assume that $n=rm$ and $m\mid(q-1)$ for each $1\leq r\leq \frac{p^e-1}{m_1}$ with $m_1 =\frac{m}{\gcd(m,y)}$ for $y=\frac{q-1}{p^e-1}$.
    If $(p^e+1)\mid n$, then there exists an $[n+2,k,\frac{p^en}{p^e+1}+2]_q$ $e$-Galois self-orthogonal extended GRS code for $k=\frac{n}{p^e+1}+1$.
  \end{theorem}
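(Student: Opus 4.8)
The plan is to mirror the proof of Theorem~\ref{ConC.2} almost verbatim, with Lemma~\ref{lemma_ConD_wi} taking the place of Lemma~\ref{lemma_ConC_wi}. First I would adjoin $a_{n+1}=0$ to the code locators listed in Equation~(\ref{equation.ConD_ai}); since every element of $\mathcal{H}$ is a product of nonzero elements, $0\notin\mathcal{H}$, so $a_1,\dots,a_n,a_{n+1}$ are $n+1$ distinct elements of $\mathbb{F}_q$. Redefine $u_i=\prod_{1\leq j\leq n+1,\,j\neq i}(a_i-a_j)^{-1}$ for $1\leq i\leq n+1$. By Lemma~\ref{lemma_ConD_wi} (a consequence of \cite[Theorem \Rmnum{3}.8]{RefJ2} and Lemma~\ref{lem_all_1}), under the standing hypotheses $2e\mid h$, $m\mid(q-1)$ and $1\leq r\leq\frac{p^e-1}{m_1}$, each such $u_i$ lies in $\mathbb{F}_{p^e}^*\subseteq E=\{x^{p^e+1}:x\in\mathbb{F}_q^*\}$. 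Since $p$ is odd, $-1\in\mathbb{F}_{p^e}^*$, hence $-u_i\in\mathbb{F}_{p^e}^*\subseteq E$ as well, and there exist $v_i'\in\mathbb{F}_q^*$ with $-u_i=(v_i')^{p^e+1}$ for $1\leq i\leq n+1$. Set $\mathbf{v'}=(v_1',v_2',\dots,v_{n+1}')$.

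Next I would fix the dimension. Because $(p^e+1)\mid n$, the integer $k=\frac{n}{p^e+1}+1$ is well defined and positive, and a direct computation gives
\begin{align*}
p^e+(n+1)-(p^e+1)k=p^e+n+1-n-(p^e+1)=0.
\end{align*}
Hence the monic polynomial $h(x)=1$ satisfies $\deg(h(x))=0=p^e+(n+1)-(p^e+1)k$, which is precisely the degree condition demanded by Lemma~\ref{lem.Galois self-orthogonal EGRS} for a GRS part of length $n+1$. Moreover $-u_ih(a_i)=-u_i=(v_i')^{p^e+1}$ for $1\leq i\leq n+1$, so Lemma~\ref{lem.Galois self-orthogonal EGRS} applies and shows that the extended GRS code $\GRS_k(\mathbf{a},\mathbf{v'},\infty)$, which has length $(n+1)+1=n+2$ and dimension $k$, is $e$-Galois self-orthogonal.

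It remains to read off the minimum distance. Extended GRS codes are MDS, so the code is $[n+2,k,(n+2)-k+1]_q$, and substituting $k=\frac{n}{p^e+1}+1$ yields $(n+2)-k+1=n+2-\frac{n}{p^e+1}=\frac{p^e n}{p^e+1}+2$, which is the asserted minimum distance. I do not expect a genuine obstacle here: all of the coset arithmetic inside the cyclic group---namely that $\eta_i^m\in\mathbb{F}_{p^e}^*$ and that $\prod_{j=1}^n(a_{n+1}-a_j)^{-1}$ also lands in $\mathbb{F}_{p^e}^*$---is already encapsulated in Lemma~\ref{lemma_ConD_wi}, so the only points needing care are the disjointness check $0\notin\mathcal{H}$ and the exact degree bookkeeping above, which is what makes the choice $h(x)=1$ legitimate in Lemma~\ref{lem.Galois self-orthogonal EGRS}.
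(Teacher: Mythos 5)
Your proposal is correct and follows exactly the route the paper intends: the paper omits the proof of Theorem~\ref{ConD.2}, stating it is obtained "in the same manner as Theorem~\ref{ConC.2}" from Lemma~\ref{lemma_ConD_wi}, which is precisely what you carry out, including the degree check $p^e+n+1-(p^e+1)k=0$ with $h(x)=1$ and the Singleton-bound computation of the minimum distance.
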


  We conclude this section with the following remark.

\begin{remark}\label{remark1} $\quad$
  \begin{enumerate}
    \item [\rm 1)] Comparing with Table \ref{tab:1}, we can easily conclude that our Constructions \ref{ConA} to \ref{ConD} are new. One can see further discussions
    about these code lengths in \cite[Section \Rmnum{5}]{RefJ2}. In addition, different tools are used in our constructions, which implies that our new methods
    introduced in Lemmas \ref{lem_Galois self-orthogonal GRS} and \ref{lem.Galois self-orthogonal EGRS} are flexible and uniform.

    \item [\rm 2)] In particular, for the case even $h$ and $e=\frac{h}{2}$, similar to Theorems \ref{ConC.1} and \ref{ConD.1}, two classes of classical Hermitian
    self-orthogonal GRS codes were constructed. Since Hermitian self-orthogonal codes are special cases of $e$-Galois self-orthogonal codes, these two classes of
    Hermitian self-orthogonal GRS codes are contained in our constructions.
  \end{enumerate}
\end{remark}

\section{Examples}\label{sec4}
In this section, we give some examples of general $e$-Galois self-orthogonal (extended) GRS codes, which are computed with the Magma software package \cite{magma}.
Recall that if $e=0$, $e$-Galois self-orthogonal codes are Euclidean self-orthogonal codes;
if $e=\frac{h}{2}$ with even $h$, $e$-Galois self-orthogonal codes are Hermitian self-orthogonal codes.
Hence, in the following examples, all codes listed are general $e$-Galois self-orthogonal codes rather than Euclidean self-orthogonal or Hermitian self-orthogonal codes.

\begin{example}\label{example1}
    Take $(p,h)=(3,8)$, $(p,h)=(5,8)$ and $(p,h)=(7,8)$ in Theorems \ref{ConA.1} and \ref{ConA.2}, then all cases satisfying $2e\mid h$ are $e=1$, $e=2$, and $e=4$.
    We list some general $e$-Galois self-orthogonal GRS codes over $\mathbb{F}_{3^{8}}$ and $e$-Galois self-orthogonal extended GRS codes over $\F_{3^{8}}$, $\F_{5^{8}}$ and
    $\mathbb{F}_{7^{8}}$ obtained from Theorems \ref{ConA.1} and \ref{ConA.2} in Tables \ref{tab:2} and \ref{tab:3}, respectively.
\end{example}


\begin{example}\label{examples2}
  Take $(p,h)=(3,8)$ again, we list some general $e$-Galois self-orthogonal (extended) GRS codes over $\mathbb{F}_{3^8}$ in Tables \ref{tab:4} and \ref{tab:5}.
\end{example}


\begin{example}\label{example3}

  Take $(p,h)=(3,4)$ and $(p,h)=(5,4)$, then all cases satisfying $2e\mid h$ are $e=1$ and $e=2$.
  Note that in this case, $2$-Galois self-orthogonal codes are just Hermitian self-orthogonal codes.
  For the general $e$-Galois self-orthogonal codes, we list some $1$-Galois self-orthogonal GRS codes
  over $\mathbb{F}_{5^4}$ and $1$-Galois self-orthogonal extended GRS codes over $\F_{3^4}$ and $\F_{5^4}$
  in Tables \ref{tab:6} and \ref{tab:7}, respectively.
\end{example}


\begin{example}\label{example4}
  Take $(p,h)=(5,8)$ again, we list some general $e$-Galois self-orthogonal (extended) GRS codes over $\mathbb{F}_{5^8}$ in Tables \ref{tab:8} and \ref{tab:10}.
\end{example}


\begin{table}[H]
    \caption{Some general $e$-Galois self-orthogonal GRS codes from Theorem \ref{ConA.1}}
       \label{tab:2}       
       \begin{center}
           \begin{tabular}{c|c|c|c}
            \hline
             $e$ & $t$ & \text{$e$-Galois self-orthogonal code} & $k$ \\\hline
             1 & 1 & $[2187,k,2188-k]_{3^8}$ & $1\leq k\leq 547$ \\
             1 & 2 & $[4374,k,4375-k]_{3^8}$ & $1\leq k\leq 1094$ \\
             1 & 3 & $[6561,k,6562-k]_{3^8}$ & $1\leq k\leq 1640$ \\

             2 & 1 & $[729,k,730-k]_{3^8}$ & $1\leq k\leq 73$ \\
             2 & 2 & $[1458,k,1459-k]_{3^8}$ & $1\leq k\leq 146$ \\
             2 & 4 & $[2916,k,2917-k]_{3^8}$ & $1\leq k\leq 292$ \\
             2 & 5 & $[3645,k,3646-k]_{3^8}$ & $1\leq k\leq 365$ \\
             2 & 7 & $[5103,k,5104-k]_{3^8}$ & $1\leq k\leq 511$ \\
            \hline
          \end{tabular}
       \end{center}
  \end{table}

  \begin{table}[H]
    \caption{Some general $e$-Galois self-orthogonal extended GRS codes from Theorem \ref{ConA.2}}
       \label{tab:3}       
       \begin{center}
           \begin{tabular}{c|c|c}
            \hline
             $e$ & $t$ & \text{$e$-Galois self-orthogonal code} \\\hline
              1 & 3 & $[6562,1641,4922]_{3^8}$ \\
              2 & 9 & $[6562,657,5906]_{3^8}$ \\
              1 & 5 & $[390626,65105,325522]_{5^8}$ \\
              2 & 25 & $[390626,15025,375602]_{5^8}$ \\
              1 & 7 & $[5764802,720601,5044202]_{7^8}$ \\
              2 & 49 & $[5764802,115297,5649506]_{7^8}$ \\
            \hline
          \end{tabular}
       \end{center}
  \end{table}


  \begin{table}[H]
    \caption{Some general $e$-Galois self-orthogonal GRS codes from Theorem \ref{ConB.1}}
       \label{tab:4}       
       \begin{center}
           \begin{tabular}{c|c|c|c}
            \hline
             $e$ & $t$ & \text{$e$-Galois self-orthogonal code} & $k$ \\\hline
             1 & 2 & $[6560,k,6561-k]_{3^8}$ & $1\leq k\leq 820$ \\

             2 & 2 & $[1640,k,1641-k]_{3^8}$ & $1\leq k\leq 82$ \\
             2 & 3 & $[2460,k,2461-k]_{3^8}$ & $1\leq k\leq 164$ \\
             2 & 4 & $[3280,k,3281-k]_{3^8}$ & $1\leq k\leq 246$ \\
             2 & 5 & $[4100,k,4101-k]_{3^8}$ & $1\leq k\leq 328$ \\
             2 & 6 & $[4920,k,4921-k]_{3^8}$ & $1\leq k\leq 410$ \\
             2 & 7 & $[5740,k,5741-k]_{3^8}$ & $1\leq k\leq 492$ \\
             2 & 8 & $[6560,k,6561-k]_{3^8}$ & $1\leq k\leq 574$ \\
            \hline
          \end{tabular}
       \end{center}
  \end{table}

  \begin{table}[H]
    \caption{Some general $e$-Galois self-orthogonal extended GRS codes from Theorem \ref{ConB.2}}
       \label{tab:5}       
       \begin{center}
        \begin{tabular}{c|c|c}
         \hline
          $e$ & $t$ & \text{$e$-Galois self-orthogonal code} \\\hline
           1 & 1 & $[3282,821,2462]_{3^8}$ \\

           2 & 1 & $[822,83,740]_{3^8}$ \\
           2 & 2 & $[1642,165,1478]_{3^8}$ \\
           2 & 3 & $[2462,247,2216]_{3^8}$ \\
           2 & 4 & $[3282,329,2954]_{3^8}$ \\
           2 & 5 & $[4102,411,3692]_{3^8}$ \\
           2 & 6 & $[4922,493,4430]_{3^8}$ \\
           2 & 7 & $[5742,575,5168]_{3^8}$ \\
         \hline
       \end{tabular}
    \end{center}
  \end{table}


\begin{table}[H]
  \caption{Some general $1$-Galois self-orthogonal GRS codes from Theorem \ref{ConC.1}}
     \label{tab:6}       
     \begin{center}
         \begin{tabular}{c|c|c}
          \hline
            $(x_1,x_2,n_1,n_2)$ & \text{$1$-Galois self-orthogonal code} & $k$ \\\hline
            $(156,208,4,3)$ & $[12,k,13-k]_{5^4}$ & $1\leq k\leq 2$ \\
            $(156,48,2,13)$ & $[26,k,27-k]_{5^4}$ & $1\leq k\leq 3$ \\
            $(156,48,3,13)$ & $[39,k,40-k]_{5^4}$ & $1\leq k\leq 5$ \\
            $(156,48,4,13)$ & $[52,k,53-k]_{5^4}$ & $1\leq k\leq 7$ \\
            $(156,16,2,39)$ & $[78,k,79-k]_{5^4}$ & $1\leq k\leq 7$ \\
            $(156,16,3,39)$ & $[117,k,118-k]_{5^4}$ & $1\leq k\leq 13$ \\
            $(156,112,4,39)$ & $[156,k,157-k]_{5^4}$ & $1\leq k\leq 20$ \\
          \hline
        \end{tabular}
     \end{center}
\end{table}

\begin{table}[H]
  \caption{Some general $1$-Galois self-orthogonal extended GRS codes from Theorem \ref{ConC.2}}
     \label{tab:7}       
     \begin{center}
         \begin{tabular}{c|c}
          \hline
            $(x_1,x_2,n_1,n_2)$ & \text{$1$-Galois self-orthogonal code} \\\hline
            $(720,780,1,4)$ & $[6,2,5]_{3^4}$ \\
            $(720,770,1,8)$ & $[10,3,8]_{3^4}$ \\
            $(720,775,1,16)$ & $[18,5,14]_{3^4}$ \\
            $(720,772,1,20)$ & $[22,6,17]_{3^4}$ \\
            $(780,416,2,3)$ & $[8,2,7]_{5^4}$ \\
            $(780,416,4,3)$ & $[14,3,12]_{5^4}$ \\
            $(624,754,1,24)$ & $[26,5,22]_{5^4}$ \\
            $(624,793,1,48)$ & $[50,9,42]_{5^4}$ \\
            $(624,712,1,78)$ & $[80,14,67]_{5^4}$ \\
          \hline
        \end{tabular}
     \end{center}
\end{table}


\begin{table}[H]
    \caption{Some general $e$-Galois self-orthogonal GRS codes from Theorem \ref{ConD.1}}
       \label{tab:8}       
       \begin{center}
           \begin{tabular}{c|c|c|c}
            \hline
             $e$ & $(m,r)$ & \text{$e$-Galois self-orthogonal code} & $k$ \\\hline
             1 & $(24,2)$ & $[48,k,49-k]_{5^8}$ & $1\leq k\leq 4$ \\
             1 & $(48,2)$ & $[96,k,97-k]_{5^8}$ & $1\leq k\leq 8$ \\
             1 & $(104,2)$ & $[208,k,209-k]_{5^8}$ & $1\leq k\leq 18$ \\
             1 & $(78,3)$ & $[234,k,235-k]_{5^8}$ & $1\leq k\leq 26$ \\

             2 & $(4,21)$ & $[84,k,85-k]_{5^8}$ & $1\leq k\leq 4$ \\
             2 & $(13,21)$ & $[273,k,274-k]_{5^8}$ & $1\leq k\leq 10$ \\
             2 & $(26,19)$ & $[494,k,495-k]_{5^8}$ & $1\leq k\leq 18$ \\
             2 & $(52,17)$ & $[884,k,885-k]_{5^8}$ & $1\leq k\leq 32$ \\
             \hline
          \end{tabular}
       \end{center}
  \end{table}

  \begin{table}[H]
    \caption{Some general $e$-Galois self-orthogonal extended GRS codes from Theorem \ref{ConD.2}}
       \label{tab:10}       
       \begin{center}
           \begin{tabular}{c|c|c}
            \hline
             $e$ & $(m,r)$ & \text{$e$-Galois self-orthogonal code}  \\\hline
             1 & $(2,3)$ & $[8,2,7]_{5^8}$  \\
             1 & $(12,3)$ & $[38,7,32]_{5^8}$  \\
             1 & $(24,4)$ & $[98,17,82]_{5^8}$  \\
             1 & $(78,3)$ & $[236,40,197]_{5^8}$  \\

             2 & $(13,8)$ & $[106,5,102]_{5^8}$  \\
             2 & $(13,22)$ & $[288,12,277]_{5^8}$  \\
             2 & $(52,13)$ & $[678,27,652]_{5^8}$  \\
             2 & $(52,22)$ & $[1146,45,1102]_{5^8}$  \\
             \hline
          \end{tabular}
       \end{center}
  \end{table}

\section{Applications}\label{sec_applications}

\subsection{Application to $e'$-Galois self-orthogonal MDS codes}

Note that $2e\mid h$ is required in our constructions. 
Let $q=p^h$ be a prime power and $e'$ be an integer satisfying $0\leq e'\leq h-1$, then combining with \cite{RefJ23}, 
we can apply the newly obtained $e$-Galois self-orthogonal GRS codes to derive new $e'$-Galois self-orthogonal MDS codes for all possible $e'$. 
To this end, we need the following lemma. 

\begin{lemma}{\rm(\cite[Theorems 5 1) and 8 1)]{RefJ23})}\label{lemma.CCDSYang}
  \begin{enumerate}
    \item  [\rm 1)] Let $q=p^h\geq 5$ and $1\leq e,e'\leq h-1$ such that $e=\gcd(e',h)$ and $\frac{h}{e}$ is even. If $\GRS_m(\mathbf{a},\mathbf{v})$ is an $[n,m,n-m+1]_q$ $e$-Galois
    self-orthogonal code, then for $1\leq k\leq \lfloor \frac{p^{e'}+n-1-\deg(m(x))}{p^{e'}+1} \rfloor$, there exists an $[n,k,n-k+1]_q$ $e'$-Galois self-orthogonal MDS code $\C$.

    \item [\rm 2)] Let $q=p^h$ be an odd prime power, where $h$ is even. Let $0\leq e'\leq h-1$ such that $\frac{h}{\gcd(e',h)}$ is odd. If $\GRS_m(\mathbf{a},\mathbf{v})$
    is an $[n,m,n-m+1]_q$ Hermitian self-orthogonal code, then for $1\leq k\leq \lfloor \frac{p^{e'}+n-1-\deg(m(x))}{p^{e'}+1} \rfloor$, there exists an $[n,k,n-k+1]_q$ $e'$-Galois
    self-orthogonal MDS code $\C$.
  \end{enumerate}
\end{lemma}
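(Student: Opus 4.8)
The plan is to prove both parts by a single change-of-column-multipliers argument that keeps the code locators $\mathbf{a}$ and a ``witness polynomial'' fixed. From the hypothesis that $\GRS_m(\mathbf{a},\mathbf{v})$ is $e$-Galois self-orthogonal (resp. Hermitian self-orthogonal), I first record that there is a monic polynomial $m(x)\in\mathbb{F}_q[x]$ and a scalar $\lambda\in\mathbb{F}_q^*$ with $\lambda u_i m(a_i)=v_i^{p^e+1}$ for all $1\le i\le n$ (resp. $\lambda u_i m(a_i)=v_i^{p^{h/2}+1}$), where $u_i$ is as in $(\ref{equation.ui})$; this is exactly the relation underlying Lemma~\ref{lem_Galois self-orthogonal GRS}, and it is supplied by the hypothesis together with Lemma~\ref{lem_Galois hull GRS} applied to the codeword with $f\equiv 1$ (note $g\neq 0$ since each $v_i\neq 0$). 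The idea is then to replace each $v_i$ by a suitable $(p^{e'}+1)$-th root while reusing the \emph{same} $m(x)$, so that $m(x)$ becomes the witness for $e'$-Galois self-orthogonality; this is precisely why the admissible range of $k$ comes out as $1\le k\le\lfloor\frac{p^{e'}+n-1-\deg(m(x))}{p^{e'}+1}\rfloor$.

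The crux is a subgroup comparison inside the cyclic group $\mathbb{F}_q^*$. Write $E=\{x^{p^e+1}:x\in\mathbb{F}_q^*\}$ (as in Lemma~\ref{lem_all_1}) and $E'=\{x^{p^{e'}+1}:x\in\mathbb{F}_q^*\}$, of orders $\frac{q-1}{\gcd(p^e+1,\,q-1)}$ and $\frac{q-1}{\gcd(p^{e'}+1,\,q-1)}$. For part~1) I would show $E=E'$, and for part~2) I would show $\{x^{p^{h/2}+1}:x\in\mathbb{F}_q^*\}\subseteq E'$. Both follow from the classical identity (valid for odd $p$): if $d=\gcd(a,h)$, then $\gcd(p^{a}+1,\,p^{h}-1)=p^{d}+1$ when $\frac hd$ is even, and $\gcd(p^{a}+1,\,p^{h}-1)=2$ when $\frac hd$ is odd, which one checks by bounding $\mathrm{ord}(p)$ modulo the gcd and doing a short $2$-adic valuation computation. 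Applying this with $a=e'$: in part~1), $d=\gcd(e',h)=e$ with $\frac he$ even gives $\gcd(p^{e'}+1,q-1)=p^e+1=\gcd(p^e+1,q-1)$, hence $|E'|=|E|$ and $E'=E$; in part~2), $\frac{h}{\gcd(e',h)}$ odd gives $\gcd(p^{e'}+1,q-1)=2$, which divides $p^{h/2}+1=\gcd(p^{h/2}+1,q-1)$ (since $p$ is odd), hence $\{x^{p^{h/2}+1}\}\subseteq E'$. I expect this $\gcd$ analysis, and in particular the verification of the parity conditions in each case, to be the only genuine obstacle; everything else is bookkeeping.

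Granting the containment, for each $i$ the element $v_i^{p^e+1}$ (resp. $v_i^{p^{h/2}+1}$) lies in $E'$, so I may choose $v_i'\in\mathbb{F}_q^*$ with $(v_i')^{p^{e'}+1}=v_i^{p^e+1}=\lambda u_i m(a_i)$ (resp. $(v_i')^{p^{e'}+1}=v_i^{p^{h/2}+1}=\lambda u_i m(a_i)$), and set $\mathbf{v}'=(v_1',\dots,v_n')$. For $1\le k\le\lfloor\frac{p^{e'}+n-1-\deg(m(x))}{p^{e'}+1}\rfloor$ the monic polynomial $m(x)$ satisfies $\deg(m(x))\le n-(p^{e'}+1)k+p^{e'}-1$ and $\lambda u_i m(a_i)=(v_i')^{p^{e'}+1}$ for all $i$; hence Lemma~\ref{lem_Galois self-orthogonal GRS}, applied with $e'$ in place of $e$, with $h(x)=m(x)$ and the same $\lambda$, shows that $\GRS_k(\mathbf{a},\mathbf{v}')$ is $e'$-Galois self-orthogonal. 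Since the same locators $\mathbf{a}$ are used, $\GRS_k(\mathbf{a},\mathbf{v}')$ is a well-defined $[n,k,n-k+1]_q$ MDS code, and so it is the desired $e'$-Galois self-orthogonal MDS code $\C$. The only point requiring care along the way is that the witness polynomial $m(x)$ really is available with the degree recorded in the statement, which is exactly what the hypothesis (``$\GRS_m(\mathbf a,\mathbf v)$ is $e$- / Hermitian self-orthogonal'') provides.
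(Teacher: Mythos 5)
The paper does not actually prove Lemma \ref{lemma.CCDSYang}: it is imported verbatim from \cite{RefJ23}, so there is no in-paper argument to compare against. Your reconstruction is sound and, as far as one can tell, follows essentially the route of the cited source. You recover the witness polynomial by applying Lemma \ref{lem_Galois hull GRS} to the codeword with $f\equiv 1$; since any $g$ with $\deg g\le n-m-1<n$ and $u_ig(a_i)=v_i^{p^e+1}$ for all $i$ is uniquely determined by interpolation at the $n$ distinct $a_i$, your monic $m(x)$ and scalar $\lambda$ are exactly the data whose degree the statement refers to (this identification is worth stating explicitly, since the lemma quotes $\deg(m(x))$ from \cite{RefJ23} without defining it here). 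The subgroup comparison is the real content and you handle it correctly: the classical dichotomy $\gcd(p^{a}+1,p^{h}-1)=p^{\gcd(a,h)}+1$ when $h/\gcd(a,h)$ is even and $=2$ (odd $p$) when it is odd, together with uniqueness of subgroups of a given order in the cyclic group $\mathbb{F}_q^*$, gives $E=E'$ in part 1 and the containment of the $(p^{h/2}+1)$-th powers in $E'$ in part 2 (the latter reducing, as you say, to $2\mid p^{h/2}+1$). Re-entering through Lemma \ref{lem_Galois self-orthogonal GRS} with the new multipliers $v_i'$, the same $\lambda$ and $h(x)=m(x)$ reproduces the stated range of $k$ exactly. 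Two minor remarks: part 1 only uses the even-quotient case of the gcd identity, which also holds for $p=2$, so your parenthetical ``valid for odd $p$'' is needed only where the odd-quotient value $2$ is invoked (part 2, where $q$ is odd anyway); and the gcd identity itself is asserted rather than proved, which is acceptable for such a standard fact. I see no gap.
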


\begin{remark}\label{remark2}
Lemma \ref{lemma.CCDSYang} gives two special forms of Theorems 5 1) and 8 1) in \cite{RefJ23}, i.e., the case $\dim(\Hull_{e'}(\C))=l=k$.
Note that $\deg(m(x))$ is used to calculate the range of dimensions of new $e'$-Galois self-orthogonal MDS codes.
According to \cite[Remarks 1 and 3]{RefJ23}, $\deg(m(x))$ can be easily determined.
\end{remark}

\begin{theorem}\label{th.classify e'}
Let $h$ be an even positive integer and $0\leq e'\leq h-1$ be an integer. Let $S_1=\{e':\ \gcd(e',h)=e, 2e\mid h\}$ and $S_2=\{e':\ \gcd(e',h)=e, 2e\nmid h\}$
be two sets. Then for each $0\leq e'\leq h-1$, the following statements hold.
\begin{enumerate}
  \item [\rm 1)] $S_1\cap S_2=\emptyset$;

  \item [\rm 2)] $e'\in S_1$ or $e'\in S_2$.
\end{enumerate}
\end{theorem}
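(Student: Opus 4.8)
The plan is to treat Theorem \ref{th.classify e'} as a tautological dichotomy: once we agree on the convention $e=\gcd(e',h)$ appearing in the definitions of $S_1$ and $S_2$, membership is decided entirely by whether the single integer $2e$ divides $h$. For a fixed $e'$ with $0\le e'\le h-1$, the value $e=\gcd(e',h)$ is a well-defined positive integer; the only boundary case worth spelling out is $e'=0$, where $\gcd(0,h)=h$, so $2e=2h>h$ and $e'=0$ automatically fails $2e\mid h$. The evenness of $h$ is not needed for the dichotomy itself, though it is what makes the regime $\gcd(e',h)=\frac{h}{2}$ (the Hermitian case) sit inside $S_1$.

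For part 1), I would argue by contradiction: if some $e'$ belonged to $S_1\cap S_2$, then, writing $e=\gcd(e',h)$, the first membership would force $2e\mid h$ and the second would force $2e\nmid h$, which is impossible for the fixed pair $(2e,h)$. Hence $S_1\cap S_2=\emptyset$. For part 2), I would fix $e'$, set $e=\gcd(e',h)$, and invoke the trivial fact that exactly one of $2e\mid h$ or $2e\nmid h$ holds; in the former case the defining condition of $S_1$ is met, in the latter that of $S_2$, so $e'\in S_1\cup S_2$. Combined with part 1), this shows that $\{S_1,S_2\}$ partitions $\{0,1,\dots,h-1\}$.

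There is no genuine obstacle here — the statement is a direct consequence of the definitions — so I would keep the proof to a few lines. Its purpose is organizational: it licenses the later split of the argument for arbitrary $e'$ (in Theorem \ref{th.e'-Galois self-orthogonal codes}) into the case $2e\mid h$, covered by our Constructions in Section \ref{sec3} together with Lemma \ref{lemma.CCDSYang} 1), and the complementary case $2e\nmid h$, covered via Lemma \ref{lemma.CCDSYang} 2).
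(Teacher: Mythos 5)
Your proof is correct and matches the paper's argument in essence: the paper merely rewrites the conditions as ``$\frac{h}{\gcd(e',h)}$ is even'' versus ``$\frac{h}{\gcd(e',h)}$ is odd'' and invokes the same parity/divisibility dichotomy you use directly on $2e\mid h$ versus $2e\nmid h$. Your extra remark on the boundary case $e'=0$ (where $\gcd(0,h)=h$ puts $e'$ in $S_2$) is a harmless and correct addition.
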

\begin{proof}
  1) It is trivial.

  2) Note that
  \begin{align*}
    \begin{split}
      & S_1=\{e':\ \gcd(e',h)=e, 2e\mid h\}=\{e':\ \frac{h}{\gcd(e',h)}\ \rm{is\ even}\}\ \\
      \rm{and}\ & S_2=\{e':\ \gcd(e',h)=e, 2e\nmid h\}=\{e':\ \frac{h}{\gcd(e',h)}\ \rm{is\ odd}\}.
    \end{split}
  \end{align*}

  Since $\gcd(e',h)\mid h$, $\frac{h}{\gcd(e',h)}$ is a positive integer. Hence, for each $0\leq e'\leq h-1$,
  $\frac{h}{\gcd(e',h)}$ is either even or odd, which follows that $e'$ is an element of $S_1$ or $S_2$. This completes the proof.
\end{proof}

\begin{theorem}\label{th.e'-Galois self-orthogonal codes}
  Let $q=p^h$ with $p$ being an odd prime number and $2e\mid h$. Let $\GRS_m(\mathbf{a},\mathbf{v})$ be an $[n,m,n-m+1]_q$ $e$-Galois self-orthogonal code
  derived from Theorems \ref{ConA.1}, \ref{ConB.1}, \ref{ConC.1} and \ref{ConD.1}. Then for $1\leq k\leq \lfloor \frac{p^{e'}+n-1-\deg(m(x))}{p^{e'}+1} \rfloor$,
  there exists an $[n,k,n-k+1]_q$ $e'$-Galois self-orthogonal MDS code $\C$ for $0\leq e'\leq h-1$.
\end{theorem}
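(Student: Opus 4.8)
The result should follow by splitting on whether $e'$ lands in $S_1$ or $S_2$ (the dichotomy established in Theorem~\ref{th.classify e'}) and then invoking the appropriate part of Lemma~\ref{lemma.CCDSYang}. First I would set $\bar e=\gcd(e',h)$ and observe that, by Theorem~\ref{th.classify e'}\,2), exactly one of the following holds: $\frac{h}{\bar e}$ is even (so $2\bar e\mid h$), or $\frac{h}{\bar e}$ is odd. The case $e'=0$ is trivial since any $e$-Galois self-orthogonal code obtained from our constructions is, in particular, a code to which the $e'=0$ instance of the statement applies via the same machinery (or one simply notes $0\in S_1$ with $\bar e=h$ treated as in Remark~\ref{rem.Galois dual}); I would handle it together with the generic $S_1$ case.

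\textbf{Case $e'\in S_1$.} Here $\bar e=\gcd(e',h)$ satisfies $2\bar e\mid h$, so $\frac{h}{\bar e}$ is even. The given code $\GRS_m(\mathbf{a},\mathbf{v})$ is $e$-Galois self-orthogonal with $2e\mid h$; but to apply Lemma~\ref{lemma.CCDSYang}\,1) I need an $\bar e$-Galois self-orthogonal GRS code with the relation $\bar e=\gcd(e',h)$ and $\frac{h}{\bar e}$ even. The point is that Constructions~A--D actually produce, for \emph{every} admissible exponent, an $e$-Galois self-orthogonal GRS code of the relevant length, with the associated polynomial $m(x)$ (equal to $h(x)$ in the notation of Lemma~\ref{lem_Galois self-orthogonal GRS}) of the explicit shape listed in Theorems~\ref{ConA.1}, \ref{ConB.1}, \ref{ConC.1}, \ref{ConD.1}. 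So I would first replace $e$ by $\bar e$ in whichever of Theorems~\ref{ConA.1}--\ref{ConD.1} produced the original code, obtaining an $[n,m',n-m'+1]_q$ $\bar e$-Galois self-orthogonal GRS code; the length $n$ is unchanged because the code locator sets in those constructions depend only on the ambient field and the chosen parameters, not on $e$ beyond the divisibility $2e\mid h$. Then Lemma~\ref{lemma.CCDSYang}\,1), applied with $e\rightsquigarrow\bar e$ and $e'$ as given, yields for each $1\le k\le\big\lfloor\frac{p^{e'}+n-1-\deg(m(x))}{p^{e'}+1}\big\rfloor$ an $[n,k,n-k+1]_q$ $e'$-Galois self-orthogonal MDS code, with $\deg(m(x))$ read off from the construction as in Remark~\ref{remark2}.

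\textbf{Case $e'\in S_2$.} Now $\frac{h}{\gcd(e',h)}$ is odd, and $h$ is even (since $2e\mid h$ forces $h$ even, and this is the hypothesis). Then I would use that $e=\frac h2$ is admissible in Constructions~C and D (Remark~\ref{remark1}\,2) already notes the Hermitian specialization), or more simply: from any $e$-Galois self-orthogonal GRS code in our family I can, re-running the same construction with $e=\frac h2$, produce a Hermitian ($=\frac h2$-Galois) self-orthogonal $[n,m'',n-m''+1]_q$ GRS code of the same length $n$, again with $\deg(m(x))$ explicit. Feeding this into Lemma~\ref{lemma.CCDSYang}\,2) (whose hypothesis is exactly that $\frac{h}{\gcd(e',h)}$ is odd) gives the desired $[n,k,n-k+1]_q$ $e'$-Galois self-orthogonal MDS codes for $1\le k\le\big\lfloor\frac{p^{e'}+n-1-\deg(m(x))}{p^{e'}+1}\big\rfloor$. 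Combining the two cases with Theorem~\ref{th.classify e'} covers all $0\le e'\le h-1$, which is the claim.

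\textbf{Main obstacle.} The only subtle point is the bookkeeping of $\deg(m(x))$: the bound on $k$ in the statement is phrased uniformly in terms of a single polynomial $m(x)$, so I must check that the polynomial $h(x)$ used in each of Theorems~\ref{ConA.1}--\ref{ConD.1} (namely $1$, $x^{\frac{q-1}{p^e-1}-1}$, $x^{r_2-1}$, $x^{m-1}$ respectively) has the \emph{same} degree whether the exponent is $e$, $\bar e$, or $\frac h2$ — or, where it does not, that the stated range is the one coming from the exponent actually used in the invoked lemma. Concretely, for Construction~A the degree is $0$ regardless; for B, C, D the degree depends on the norm/order data, and I would need to confirm that re-running with a different admissible exponent keeps the length $n$ fixed while the degree of $m(x)$ changes in a controlled way that matches the floor expression in Lemma~\ref{lemma.CCDSYang}. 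This is routine given the explicit formulas but is the step that needs care; everything else is a direct citation of Theorem~\ref{th.classify e'} and Lemma~\ref{lemma.CCDSYang}.
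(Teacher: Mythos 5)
Your proposal takes essentially the same route as the paper: split $e'$ via Theorem \ref{th.classify e'} into $S_1$ and $S_2$, apply Lemma \ref{lemma.CCDSYang} 1) with the constructions re-run at exponent $\gcd(e',h)$ for $e'\in S_1$, and apply Lemma \ref{lemma.CCDSYang} 2) with the Hermitian specialization $e=\frac{h}{2}$ for $e'\in S_2$; your explicit remarks about re-running the constructions and tracking $\deg(m(x))$ only make precise what the paper leaves implicit. One minor slip: $e'=0$ belongs to $S_2$ (since $\gcd(0,h)=h$ and $h/\gcd(0,h)=1$ is odd), not $S_1$, but this is harmless because Lemma \ref{lemma.CCDSYang} 2) covers $e'=0$ through the Hermitian branch.
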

\begin{proof}
    The condition $2e\mid h$ requires that $h$ is even. Hence, it is suitable to take $e=\frac{h}{2}$, i.e., $[n,m,n-m+1]_q$
    Hermitian self-orthogonal codes $\GRS_m(\mathbf{a},\mathbf{v})$ exist. And since $q=p^h$ is odd, we have $q\geq 9$, which
    satisfies the conditions of $q$ in Lemma \ref{lemma.CCDSYang}.

    Note that the relationships between $e'$ and $h$ described in Lemmas \ref{lemma.CCDSYang} 1) and \ref{lemma.CCDSYang} 2)
    correspond to the sets $S_1$ and $S_2$ in Theorem \ref{th.classify e'}, respectively.
    Therefore, on one hand, based on these $[n,m,n-m+1]_q$ $e$-Galois self-orthogonal codes $\GRS_m(\mathbf{a},\mathbf{v})$,
    it is easy to see that the desired $e'$-Galois self-orthogonal MDS codes can be deduced from Lemma \ref{lemma.CCDSYang} 1) for all $e'\in S_1$.
    On the other hand, based on these $[n,m,n-m+1]_q$ Hermitian self-orthogonal codes $\GRS_m(\mathbf{a},\mathbf{v})$, from Lemma \ref{lemma.CCDSYang} 2), one can derive
    the desired $e'$-Galois self-orthogonal MDS codes for all $e'\in S_2$. In summary, the desired result is proven.
\end{proof}

\begin{remark}\label{remark3} $\quad$
  \begin{enumerate}
    \item [\rm 1)] From Theorem \ref{th.e'-Galois self-orthogonal codes}, for all code lengths $n$ derived in Theorems \ref{ConA.1}, \ref{ConB.1}, \ref{ConC.1} and \ref{ConD.1},
    we can obtain $e'$-Galois self-orthogonal MDS codes of length $n$ for each $0\leq e'\leq h-1$ when $p$ is odd and $h$ is even. Recall that the dimension range
    of these $e'$-Galois self-orthogonal MDS codes is related to $\deg(m(x))$ and $\deg(m(x))$ can be easily determined.

    \item [\rm 2)] Note that many examples were listed in \cite[Examples 6 and 10]{RefJ23} and one can follow their calculation steps to obtain $e'$-Galois self-orthogonal MDS codes
    with explicit parameters. These calculations are similar, so we omit concrete examples in this subsection.

    \item [\rm 3)] From the proofs in Theorems \ref{th.classify e'} and \ref{th.e'-Galois self-orthogonal codes}, combining Theorem \ref{th.classify e'} 1), it is easy to see that
    Hermitian self-orthogonal GRS codes can not be used to derive $e'$-Galois self-orthogonal MDS codes, where $e'\in S_1$.
    It follows that new Galois self-orthogonal GRS codes constructed by Theorems \ref{ConC.1} and \ref{ConD.1} can not be obtained from Hermitian self-orthogonal GRS codes constructed
    in \cite{RefJ24}. Hence, from this aspect, we can also conclude that researches in this paper are more general and generalize previous work.
  \end{enumerate}
\end{remark}

\subsection{Application to new $e$-Galois self-orthogonal MDS codes, new MDS codes with prescribed dimensional $e$-Galois hull and new quantum MDS codes}

For the second application, we note the following facts: 
\begin{enumerate}
    \item [\rm 1)] On one hand, we note that the $e$-Galois self-orthogonal extended GRS codes constructed in Theorems \ref{ConA.2}, \ref{ConB.2}, \ref{ConC.2} and \ref{ConD.2} 
    have extra restrictions on lengths and dimensions; 
    \item [\rm 2)] On the other hand, we recall the following known results: 
    \begin{itemize}
        \item [\rm \rmnum{1})] According to \cite[Theorem 18]{RefJ24}, one can construct more $e$-Galois self-orthogonal codes of larger lengths 
        from a given $e$-Galois self-orthogonal code. However, they may not MDS; 
    
        \item [\rm \rmnum{2})] According to \cite[Theorems 5 and 8]{RefJ23}, from a known $e$-Galois self-orthogonal (extended) GRS codes, 
        one can easily obtain an MDS code of the same length with $e'$-Galois hulls of arbitrary dimensions; 
    
        \item [\rm \rmnum{3})] According to \cite[Corollary 27 and Theorem 30]{RefJ24}, codes of larger length with $e$-Galois hulls of arbitrary dimensions can be derived 
        from a given $e$-Galois self-orthogonal code.
    \end{itemize}
\end{enumerate}

Based on the facts above, in this subsection, we consider the application of the newly obtained $e$-Galois self-orthogonal extended GRS codes.  
Specifically, we use Corollary \ref{coro.Galois SO MDS codes via shortened codes} to obtain more $e$-Galois self-orthogonal MDS codes 
with shorter lengths and smaller dimensions and use Corollary \ref{coro.MDS codes with prescribed Galois hull via shortened codes} 
to obtain MDS codes of shorter lengths with prescribed dimensional $e$-Galois hull.  
Due to the MDS property and the new parameters, it is easy to see that these applications are new compared to the known results described above. 

We write these new results in Theorem \ref{th.EGRS_shortened}. According to Corollaries \ref{coro.Galois SO MDS codes via shortened codes} 
and \ref{coro.MDS codes with prescribed Galois hull via shortened codes}, these results are straight and we omit the proofs.  

\begin{theorem}\label{th.EGRS_shortened}
  Let $q=p^h$ with $p$ being an odd prime number and $2e\mid h$. Let $N$ and $K$ be two positive integers.
  If $N$ and $K$ satisfy one of the following four conditions:
  \begin{enumerate}
    \item [\rm \rmnum{1})] $N=tp^{h-e}+1$ with $1\leq t\leq p^e$, $(p^e+1)\mid (tp^{h-2e}+1)$ and $K=\frac{p^e(tp^{h-2e}+1)}{p^e+1}$;
    \item [\rm \rmnum{2})] $N=\frac{t(q-1)}{p^e-1}+2$ with $1\leq t\leq p^e-1$ and $K=\frac{t(q-1)}{p^{2e}-1}+1$;
    \item [\rm \rmnum{3})] $N=r_1r_2+2$ with $1\leq r_1\leq \frac{q-1}{\gcd(x_1,q-1)}$, $r_2=\frac{q-1}{\gcd(x_2,q-1)}$, $(q-1)\mid {\rm{lcm}}(x_1,x_2)$, $\gcd(x_2, q-1)\mid x_1(p^e-1)$, $(p^e+1)\mid n$ and $K=\frac{n}{p^e+1}+1$;
    \item [\rm \rmnum{4})] $N=rm+2$ with $1\leq r\leq \frac{p^e-1}{m_1}$, $m_1=\frac{m}{\gcd(m,y)}$, $y=\frac{q-1}{p^e-1}$, $m\mid (q-1)$, $(p^e+1)\mid n$ and $K=\frac{n}{p^e+1}+1$,
  \end{enumerate}
  then the following statements hold.
  \begin{enumerate}
    \item [\rm 1)] There exists an $[N,K,N-K+1]_q$ $e$-Galois self-orthogonal extended GRS code.  
    \item [\rm 2)] There exists an $[N-s,K-s,N-K+1]_q$ $e$-Galois self-orthogonal MDS code for $1\leq s\leq K-1$.
    \item [\rm 3)] If $s<N-K+1$, then there exists an $[N-s,K,N-s-K+1]_q$ MDS code with $(K-s)$-dimensional $e$-Galois hull for $1\leq s\leq K$.
  \end{enumerate}
\end{theorem}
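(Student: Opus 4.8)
The plan is to assemble Theorem \ref{th.EGRS_shortened} entirely from results already in hand, treating each of the three conclusions separately. First I would observe that conditions \rmnum{1})--\rmnum{4}) are precisely the hypotheses of Theorems \ref{ConA.2}, \ref{ConB.2}, \ref{ConC.2} and \ref{ConD.2}, respectively, after the harmless renaming $(N,K)=(n+1,k)$ for case \rmnum{1}) and $(N,K)=(n+2,k)$ for cases \rmnum{2})--\rmnum{4}). Thus for statement 1), one simply invokes the appropriate one of those four theorems: each produces an $e$-Galois self-orthogonal extended GRS code, which is MDS by the standard fact recalled in Section \ref{sec2}, so its minimum distance equals $N-K+1$ and the quoted distances in those theorems must coincide with $N-K+1$. (A quick sanity check — e.g. in case \rmnum{2}), $N-K+1 = \frac{t(q-1)}{p^e-1}+2 - \frac{t(q-1)}{p^{2e}-1}-1+1 = \frac{tp^e(q-1)}{p^{2e}-1}+2$ — confirms consistency with Theorem \ref{ConB.2}.)

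For statement 2), I would feed the code from statement 1) into Corollary \ref{coro.Galois SO MDS codes via shortened codes}. That corollary takes any $[n,k,n-k+1]_q$ $e$-Galois self-orthogonal MDS code and, via shortening on a well-chosen coordinate set $T$ of size $s$, outputs an $[n-s,k-s,n-k+1]_q$ $e$-Galois self-orthogonal MDS code for $1\le s\le k-1$. Applying this with $(n,k)=(N,K)$ immediately gives the $[N-s,K-s,N-K+1]_q$ $e$-Galois self-orthogonal MDS code for $1\le s\le K-1$, which is exactly statement 2). The only point to keep straight is that the minimum distance stays at $N-K+1$ (not $N-s-(K-s)+1$, which happens to be the same number) — this is already handled inside the proof of Theorem \ref{shorten}, from which the corollary descends.

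For statement 3), I would instead invoke Corollary \ref{coro.MDS codes with prescribed Galois hull via shortened codes}, again applied to the code from statement 1). That corollary converts an $[n,k,n-k+1]_q$ $e$-Galois self-orthogonal MDS code into an $[n-s,k,n-s-k+1]_q$ MDS code with $(k-s)$-dimensional $e$-Galois hull, provided $s<n-k+1$ and $1\le s\le k$. Substituting $(n,k)=(N,K)$ yields precisely the $[N-s,K,N-s-K+1]_q$ MDS code with $(K-s)$-dimensional $e$-Galois hull under the stated constraint $s<N-K+1$, completing statement 3). This rests on the chain Lemma \ref{lem-shorten-puncture} $\Rightarrow$ Theorem \ref{shorten} $\Rightarrow$ Corollary \ref{coro.MDS codes with prescribed Galois hull via shortened codes}, where the hull-dimension bookkeeping is the genuine content (tracking that shortening a code with full $e$-Galois hull on $s$ coordinates drops the hull dimension by exactly $s$).

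Honestly, there is no real obstacle here: the theorem is a packaging result, and the anticipated ``hard part'' is purely clerical — verifying that the four parameter regimes \rmnum{1})--\rmnum{4}) match Theorems \ref{ConA.2}--\ref{ConD.2} line by line, and that each listed minimum distance simplifies to $N-K+1$. Because of this, I would expect the proof to be a single short paragraph stating ``Statement 1) is Theorems \ref{ConA.2}, \ref{ConB.2}, \ref{ConC.2}, \ref{ConD.2}; statements 2) and 3) then follow from Corollaries \ref{coro.Galois SO MDS codes via shortened codes} and \ref{coro.MDS codes with prescribed Galois hull via shortened codes}, respectively,'' which is essentially what the paper already signals by writing ``these results are straight and we omit the proofs.''
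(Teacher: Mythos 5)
Your proposal is correct and matches the paper's (omitted) argument exactly: statement 1) is just Theorems \ref{ConA.2}, \ref{ConB.2}, \ref{ConC.2} and \ref{ConD.2} restated with $(N,K)$, and statements 2) and 3) follow by applying Corollaries \ref{coro.Galois SO MDS codes via shortened codes} and \ref{coro.MDS codes with prescribed Galois hull via shortened codes} to that code, which is precisely what the paper signals when it says the results are straightforward and omits the proof. (Only a pedantic note: the corollary used for statement 3) descends from Theorem \ref{puncture} rather than directly from Theorem \ref{shorten}, but your description of its content is accurate.)
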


\begin{example}\label{exam.shortened}
 From Table \ref{tab:7}, a $[22,6,17]_{3^4}$ $1$-Galois self-orthogonal extended GRS code and a $[26,5,22]_{5^4}$ $1$-Galois self-orthogonal extended GRS code exist.
 According to Theorem \ref{th.EGRS_shortened} 2), we know that
 $[21,5,17]_{3^4}$, $[20,4,17]_{3^4}$, $[19,3,17]_{3^4}$, $[18,2,17]_{3^4}$, $[17,1,17]_{3^4}$,
  $[25,4,22]_{5^4}$, $[24,3,22]_{5^4}$, $[23,2,22]_{5^4}$, $[22,1,22]_{5^4}$
   $1$-Galois self-orthogonal MDS codes also exist.
  Note that more examples can be similarly calculated.
  These facts illustrate that Theorem \ref{th.EGRS_shortened} 2) is flexible.
\end{example}

\begin{example}
 Again, from the $[22,6,17]_{3^4}$ $1$-Galois self-orthogonal extended GRS code and
 the $[26,5,22]_{5^4}$ $1$-Galois self-orthogonal extended GRS code discussed in Example \ref{exam.shortened}, 
 according to Theorem \ref{th.EGRS_shortened} 3), we can obtain
 $[21,6,16]_{3^4}$, $[20,6,15]_{3^4}$, $[19,6,14]_{3^4}$, $[18,6,13]_{3^4}$, $[17,6,12]_{3^4}$, and $[16,6,11]_{3^4}$ MDS codes with $5$, $4$, $3$, $2$, $1$, and $0$-dimensional $1$-Galois hull, respectively.
 We also can obtain $[25,5,21]_{5^4}$, $[24,5,20]_{5^4}$, $[23,5,19]_{5^4}$, $[22,5,18]_{5^4}$, and $[21,5,17]_{5^4}$ MDS codes with $4$, $3$, $2$, $1$, and $0$-dimensional $1$-Galois hull, respectively.
 Note that more examples can be similarly calculated.
 These facts illustrate that Theorem \ref{th.EGRS_shortened} 3) is flexible.
\end{example}

Additionally, to the best of our knowledge, a large number of known quantum MDS codes were constructed via Hermitian self-orthogonal GRS codes
in \cite{Quantum-EGRS1,Quantum-EGRS2,RefJ17,RefJ40,RefJ26,RefJ50,RefJ12,RefJ70,RefJ11} and the references therein.
However, a very small fraction was constructed via Hermitian self-orthogonal extended GRS codes in \cite{Quantum-EGRS1,Quantum-EGRS2,RefJ17,RefJ40}.
One of the reasons for this phenomenon is that, based on previous construction methods in \cite{RefJ2,FFLZ}, one is usually only able to
construct $[n,k,n-k+1]_q$ extended GRS codes with at most $(k-1)$-dimensional Hermitian hull.
Note that, in Theorem \ref{th.EGRS_shortened}, four classes of Hermitian self-orthogonal MDS codes can be obtained from Hermitian self-orthogonal extended 
GRS codes by taking $e = \frac{h}{2}$ with even $h$. 
For these reasons above, we can further obtain some new quantum MDS codes via these new Hermitian self-orthogonal MDS codes. 
Hence, we have the following theorem.

\begin{theorem}\label{th.quantum MDS codes}
  Let notations be the same as Theorem \ref{th.EGRS_shortened}. Then there exists
  an $[[N-s,N+s-2K,K+1-s]]_{\sqrt{q}}$ quantum MDS code for $0\leq s\leq K-1$.
\end{theorem}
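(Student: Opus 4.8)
The plan is to combine the Hermitian self-orthogonal MDS codes produced by Theorem~\ref{th.EGRS_shortened} with the Hermitian construction of quantum codes (Corollary~\ref{coro.Hermitian construction}). First I would observe that, since $2e\mid h$ forces $h$ to be even, we may specialize $e=\frac{h}{2}$ throughout Theorem~\ref{th.EGRS_shortened}; in that case the $e$-Galois inner product is exactly the Hermitian inner product over $\mathbb{F}_q=\mathbb{F}_{(\sqrt{q})^2}$, so the $e$-Galois self-orthogonal MDS codes of part~2) become Hermitian self-orthogonal MDS codes.

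Next I would apply Theorem~\ref{th.EGRS_shortened} 2) with the substitution $s\mapsto s$: for $1\leq s\leq K-1$ there exists an $[N-s,\,K-s,\,N-K+1]_q$ Hermitian self-orthogonal MDS code. Feeding this code into Corollary~\ref{coro.Hermitian construction} (with the dictionary $n=N-s$, $k=K-s$, $d=N-K+1$) yields a $\sqrt{q}$-ary quantum MDS code with parameters
\begin{align*}
\big[[\,n,\ n-2k,\ k+1\,]\big]_{\sqrt{q}}=\big[[\,N-s,\ (N-s)-2(K-s),\ (K-s)+1\,]\big]_{\sqrt{q}},
\end{align*}
and simplifying $(N-s)-2(K-s)=N+s-2K$ and $(K-s)+1=K+1-s$ gives exactly the claimed parameters $[[N-s,\,N+s-2K,\,K+1-s]]_{\sqrt{q}}$ for $1\leq s\leq K-1$. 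For the endpoint $s=0$, I would instead start from the $[N,K,N-K+1]_q$ Hermitian self-orthogonal extended GRS code of Theorem~\ref{th.EGRS_shortened} 1) (again taking $e=\frac{h}{2}$) and apply Corollary~\ref{coro.Hermitian construction} directly, obtaining $[[N,\,N-2K,\,K+1]]_{\sqrt{q}}$, which is the $s=0$ case. Thus the full range $0\leq s\leq K-1$ is covered.

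The only genuine point to check is that the hypotheses of Corollary~\ref{coro.Hermitian construction} are met, i.e.\ that $q$ is a square with even exponent $h$ and that the intermediate code is genuinely Hermitian self-orthogonal and MDS; both follow immediately from $2e\mid h$ and from the conclusions of Theorem~\ref{th.EGRS_shortened}, so there is no real obstacle here. One should also verify that the dimension of the quantum code is non-negative, $N+s-2K\geq 0$; this is automatic because in each of the four cases~\rmnum{1})--\rmnum{4}) the extended GRS code has $N\geq 2K$ (indeed $K=\frac{p^e+N-1}{p^e+1}$ or $K=\frac{p^e+N-1}{p^e+1}$-type expressions force $N-2K=\frac{(p^e-1)N - 2p^e+ \cdots}{p^e+1}\geq 0$ for the relevant parameter ranges), so the quantum Singleton bound $2D\leq N'-K'+2$ with $N'=N-s$, $K'=N+s-2K$ is met with equality, confirming the MDS property of the quantum code. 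Since the argument is a direct concatenation of two already-established results, the proof is short and I would present it essentially as the paragraph above.
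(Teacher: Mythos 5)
Your proposal is correct and matches the paper's (implicit) argument exactly: the paper also specializes Theorem \ref{th.EGRS_shortened} to $e=\frac{h}{2}$ (possible since $2e\mid h$ forces $h$ even), uses parts 1) and 2) to get Hermitian self-orthogonal MDS codes of parameters $[N-s,K-s,N-K+1]_q$ for $0\leq s\leq K-1$, and feeds them into Corollary \ref{coro.Hermitian construction} to obtain the $[[N-s,N+s-2K,K+1-s]]_{\sqrt{q}}$ quantum MDS codes. Your extra remark on $N\geq 2K$ is a harmless (if slightly garbled) side check that the paper omits.
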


\begin{example}
  In Tables \ref{tab:quantum MDS1} and \ref{tab:quantum MDS2}, we list some parameters of quantum MDS codes derived
  from Theorem \ref{th.quantum MDS codes}. Note that lengths of these $\sqrt{q}$-ary quantum MDS codes are greater
  than $\sqrt{q}+1$ and minimum distances are greater than $\frac{\sqrt{q}}{2}+1$.

  \begin{table}[H]
    \caption{Some quantum MDS codes derived from Theorem \ref{th.quantum MDS codes} satisfying the Condition \rmnum{2}) in Theorem \ref{th.EGRS_shortened}}
       \label{tab:quantum MDS1}       
       \begin{center}
           \begin{tabular}{c|c|c||c|c|c}
            \hline
             $t$ & $s$ & quantum MDS code & $t$ & $s$ & quantum MDS code \\\hline
             13 & 0 & $[[340,312,15]]_{25}$ & 14 & 1 & $[[365,337,15]]_{25}$ \\

             15 & 1 & $[[391,361,16]]_{25}$ & 15 & 2 & $[[390,362,15]]_{25}$ \\

             16 & 1 & $[[417,385,17]]_{25}$ & 16 & 3 & $[[415,387,15]]_{25}$ \\

             22 & 2 & $[[572,530,22]]_{25}$ & 22 & 4 & $[[570,532,20]]_{25}$ \\
             22 & 6 & $[[568,534,18]]_{25}$ & 22 & 8 & $[[566,536,16]]_{25}$ \\

             23 & 1 & $[[599,553,24]]_{25}$ & 23 & 3 & $[[597,555,22]]_{25}$ \\
             23 & 5 & $[[595,557,20]]_{25}$ & 23 & 7 & $[[593,559,18]]_{25}$ \\

            \hline
          \end{tabular}
       \end{center}
  \end{table}

  \begin{table}[H]
    \caption{Some quantum MDS codes derived from Theorem \ref{th.quantum MDS codes} satisfying the Condition \rmnum{4}) in Theorem \ref{th.EGRS_shortened}}
       \label{tab:quantum MDS2}       
       \begin{center}
           \begin{tabular}{c|c|c||c|c|c}
            \hline
             $(m,r)$ & $s$ & quantum MDS code & $t$ & $s$ & quantum MDS code \\\hline
             (50,25) & 0 & $[[1252,1200,27]]_{49}$ & (50,26) & 0 & $[[1302,1248,28]]_{49}$ \\

             (50,29) & 2 & $[[1450,1394,29]]_{49}$ & (50,29) & 4 & $[[1448,1396,27]]_{49}$ \\

             (50,36) & 1 & $[[1801,1729,37]]_{49}$ & (50,36) & 3 & $[[1799,1731,35]]_{49}$ \\
             (50,36) & 5 & $[[1797,1733,33]]_{49}$ & (50,36) & 7 & $[[1795,1735,31]]_{49}$ \\

             (50,47) & 8 & $[[2344,2264,41]]_{49}$ & (50,47) & 13 & $[[2339,2269,36]]_{49}$ \\
             (50,47) & 15 & $[[2337,2271,34]]_{49}$ & (50,47) & 20 & $[[2332,2276,29]]_{49}$ \\
            \hline
          \end{tabular}
       \end{center}
  \end{table}
\end{example}

\section{Summary and concluding remarks}\label{sec6}

The main contributions of this paper are to propose two new construction methods of $e$-Galois self-orthogonal (extended) GRS codes,
and these two methods lead to eight new classes of $e$-Galois self-orthogonal (extended) GRS codes when $q=p^h$ is odd and $2e\mid h$
(See Theorems \ref{ConA.1}, \ref{ConA.2}, \ref{ConB.1}, \ref{ConB.2}, \ref{ConC.1}, \ref{ConC.2}, \ref{ConD.1} and \ref{ConD.2}),
which enriches the researches on Galois self-orthogonal MDS codes.
Based on the general Galois dual of a code, we determine the dimensions of its punctured and shortened codes (See Lemma \ref{lem-shorten-puncture}).

We have also considered some related applications in this paper. We further illustrate that more $e'$-Galois self-orthogonal MDS 
codes can in fact be constructed via the newly obtained $e$-Galois self-orthogonal GRS codes, where $e'$ can take all possible values 
(See Theorem \ref{th.e'-Galois self-orthogonal codes}). 
In combination with the shortened and punctured codes, we construct many new $e$-Galois self-orthogonal 
MDS codes and new MDS codes with prescribed dimensional $e$-Galois hull (See Theorem \ref{th.EGRS_shortened}). 
Moreover, some new quantum MDS codes can be immediately obtained (See Theorem \ref{th.quantum MDS codes}).

For future research, it would be very interesting to further study the construction of $e$-Galois self-orthogonal (extended) GRS codes when $q=p^h$, where $h$ is odd.

\end{sloppypar}
\end{document}